\title{The geometry of nonholonomic Chaplygin systems revisited.}
\author{ Luis C.~Garc\'ia-Naranjo \& Juan C.~Marrero}
\numberwithin{equation}{section}
\numberwithin{table}{section}
\numberwithin{figure}{section}
\newtheorem{theorem}{Theorem}[section]
\newtheorem{lemma}[theorem]{Lemma}
\newtheorem{proposition}[theorem]{Proposition}
\newtheorem{corollary}[theorem]{Corollary}
\theoremstyle{definition}
\newtheorem{definition}[theorem]{Definition}
\newtheorem{remark}[theorem]{Remark}
\newtheorem*{remarks*}{Remarks}
\providecommand{\customgenericname}{}
\newcommand{\newcustomtheorem}[2]{%
  \newenvironment{#1}[1]
  {%
   \renewcommand\customgenericname{#2}%
   \renewcommand\theinnercustomgeneric{##1}%
   \innercustomgeneric
  }
  {\endinnercustomgeneric}
}
\newcommand{\defn}[1]{{\bfseries\itshape{#1}}}
\def\headcolour{\color{Grey}}
\headcolour\textsc{L.C.~Garc\'ia-Naranjo \& J.C.~Marrero}]{\headcolour\textsc{The geometry of nonholonomic Chaplygin systems revisited}}
\def\restr#1{\,\vrule height1.2ex width.4pt
  depth0.8ex\lower0.4ex\hbox{\scriptsize $\,#1$}}
\newcommand{\R}{\mathbb{R}}
\newcommand{\I}{\mathbb{I}}
\newcommand{\g}{\mathfrak{g}}
\newcommand{\so}{\mathfrak{so}}
\newcommand{\SO}{\mathrm{SO}}
\newcommand{\Ad}{\mathrm{Ad}}
\newcommand{\Ss}{\mathrm{S}}
\newcommand{\tr}{\mathop\mathrm{tr}\nolimits}
\begin{document}

\maketitle

\begin{abstract}
We consider nonholonomic Chaplygin systems and associate to them a 
 $(1,2)$ tensor field on the shape space, that we term the gyroscopic tensor, and
 that measures the interplay between the non-integrability of the 
 constraint distribution and the kinetic energy metric. We 
show how this tensor may be naturally used to derive an almost symplectic description of the reduced dynamics.
 Moreover, we express sufficient conditions for measure preservation and Hamiltonisation via Chaplygin's reducing
 multiplier method in terms of the  properties of this tensor. The theory is used to give a new proof of the remarkable Hamiltonisation
 of the multi-dimensional Veselova system obtained by Fedorov and Jovanovi\'c in~\cite{FedJov, FedJov2}.
  \end{abstract}

{\small
\tableofcontents
}

\section{Introduction}

In recent years, a great deal of   research in nonholonomic mechanics  has been  concerned with identifying a geometric structure of the equations
of motion that can guide the dynamical investigation of these systems. In particular, it is of great interest to determine geometric conditions,
usually related with symmetry, that
lead to the existence of   invariants of the flow, such as first integrals, volume forms and Poisson or symplectic structures (see e.g.~\cite{Koi,BKMM, CaCoLeMa,
ZeBo, FedJov, 
EhlersKoiller, FassoJGM,  BalseiroGN,BolsBorMam2015,FedGNMa2015} and the references therein).

This article is concerned with the geometric structure, and certain dynamical consequences that may be derived from  it, 
 of a particular kind of nonholonomic systems with symmetry: the so-called {\em Chaplygin}, $G$-{\em Chaplygin}, or {\em generalised
 Chaplygin} systems\footnote{also termed {\em the principal or purely kinematic case} in \cite{BKMM}.}$^,$\footnote{reference~\cite{BorMamHam}
gives a different meaning to the terminology ``generalised Chaplygin systems".}.
   For these
systems, the Lie group $G$ acts on the configuration space
$Q$, and the corresponding  
 tangent  lift leaves both the Lagrangian and the constraint distribution $D$
invariant. Moreover, for all $q\in Q$ one has a splitting:
\begin{equation}
\label{eq:chap-condition-intro}
T_qQ=D_q \oplus ( \g \cdot q),
\end{equation}
where $ \g$ denotes the Lie algebra of $G$ and $\g \cdot q$  the tangent space to the $G$-orbit through $q$.

 These systems were first considered by
Chaplygin and Hamel around the year 1900, and their geometric features have since been investigated by a number of authors
e.g. \cite{Baksa,Iliev1985,Stanchenko, Koi,BKMM,CaCoLeMa,FedJov,EhlersKoiller,BorMamHam,HochGN,BalseiroFdz,FedGNMa2015} and references therein. Their key feature is that the reduced equations of motion may be formulated as
an unconstrained,  forced mechanical system on the \defn{shape space} $S=Q/G$. 
The dimension $r$ of $S$ is termed the \defn{number of degrees of freedom} and,
because of \eqref{eq:chap-condition-intro}, it coincides with the rank of the
constraint distribution $D$ (in particular, the non-integrability of $D$ implies that $r\geq 2$).

Our contribution  to the subject is to highlight the relevance of a  tensor field $\mathcal{T}$ defined on $S$, that we 
term the \emph{gyroscopic tensor}, in the structure  of the reduced equations of motion and their properties. Moreover, using this tensor,
we are able to single out a very special class of systems, that we term {\em $\phi$-simple}, 
which possess an invariant measure and allow a Hamiltonisation, and for which there
exist non-trivial examples. 

\vspace{0.3cm}
\noindent {\bf The gyroscopic tensor}

The gyroscopic tensor  $\mathcal{T}$ is introduced in Definition \ref{D:defGyrTensor}. It is a $(1,2)$ skew-symmetric
 tensor field on the shape space $S$,  that  to a pair of vector fields $Y, Z$ on $S$,  assigns a third  vector field $\mathcal{T}(Y,Z)$ on $S$. The assignment  is done in a
manner that measures the  
interplay between the nonintegrability of the noholonomic constraint distribution and the kinetic energy of the system. In particular, in the case of holonomic
constraints, where the constraint distribution is integrable, we have $\mathcal{T}=0$. We mention
that   there is a close relation
between  $\mathcal{T}$ and the  geometric formulation of nonholonomic systems in terms of linear almost Poisson brackets on vector bundles~\cite{Grab,LeMaMa} (see also \cite{FedGNMa2015}).

Although the tensor $\mathcal{T}$ appears in the previous works of Koiller~\cite{Koi} and Cantrijn et al.~\cite{CaCoLeMa} (with an alternative definition
than the one that we present here), its dynamical relevance 
had not been fully appreciated until the recent work 
Garc\'ia-Naranjo~\cite{LGN18} where sufficient conditions for Hamiltonisation were given in terms of the coordinate
representation of $\mathcal{T}$. This work continues the research started in~\cite{LGN18} by providing a coordinate-free definition
of the gyroscopic tensor (Definition~\ref{D:defGyrTensor}), and studying in  depth its role in the almost symplectic structure of the equations of motion,
the conditions for the existence of an invariant measure, and the Hamiltonisation of Chaplygin systems.
In particular, the gyroscopic tensor allows us to define in a straightforward manner the  $\phi$-simple Chaplygin systems  (see below)
 which always possess an invariant measure and  allow
a Hamiltonisation via a time reparametrisation.
 Our results in all of these aspects 
are summarised below. Our point of view is that the gyroscopic tensor is the fundamental geometric object that should be considered
in the study of nonholonomic Chaplygin systems.

\vspace{0.3cm}
\noindent {\bf Almost symplectic structure of of the reduced equations}

The reduced phase space of a $G$-Chaplygin system is isomorphic to  the cotangent bundle $T^*S$ where the reduced equations
may be formulated as:
\begin{equation}
\label{eq:almostsymplectic}
{\bf i}_{\overline X_{nh}}\Omega_{nh} =dH.
\end{equation}
Here $H:T^*S\to \R$ is the (reduced) Hamiltonian, $\overline X_{nh}$ is the vector field on $T^*S$ describing the reduced dynamics,
 and $\Omega_{nh}$ is an 
\defn{almost symplectic 2-form}, namely, it is a non-degenerate 2-form that in general fails to be closed.
This structure of the equations was first noticed by Stanchenko~\cite{Stanchenko} for abelian $G$ and by Cantrijn et al.~\cite{CaCoLeMa}
in the general case.

The almost symplectic 2-form $\Omega_{nh}=\Omega_{can}+\Omega_\mathcal{T}$,
 where $\Omega_{can}$ is the canonical symplectic form on $T^*S$ and $\Omega_\mathcal{T}$ is a semi-basic
 2-form that encodes the nonholonomic reaction forces.\footnote{ 
An alternative  construction of $\Omega_\mathcal{T}$ is given in
 Ehlers et al.~\cite{KoRiEh,EhlersKoiller} (see also~\cite{HochGN})
 as the ``$\langle J , K\rangle$"-term
 which is obtained as a  pairing of the momentum map 
  of  the  $G$ action lifted to $T^*Q$
and the curvature of the  constraint distribution.  At the end of Section~\ref{SS:history} 
we prove that $\Omega_\mathcal{T}$ as defined by \eqref{eq:Om_T_intro} coincides (up to a sign) with  the  ``$\langle J , K\rangle$"-term.} In Section~\ref{SS:Almostsymplectic} we prove that the  2-form $\Omega_\mathcal{T}$
allows the following natural construction in terms of the gyroscopic tensor $\mathcal{T}$:
\begin{equation}
\label{eq:Om_T_intro}
\Omega_{\mathcal T}(\alpha) (U, V) = \alpha \left({\mathcal T}((T_\alpha \tau_S)(U), (T_\alpha \tau_S)(V))\right),
\end{equation}
for $\alpha \in T^*S$ and $U, V \in T_\alpha(T^*S)$, with $\tau_S: T^*S \to S$ the canonical projection. 
Our proof that the formulation \eqref{eq:almostsymplectic} is valid with $\Omega_{\mathcal T}$  as
above is given in Theorem~\ref{T:almostsymplectic}.

\vspace{0.3cm}
\noindent {\bf Existence of a smooth invariant measure}

 It was shown in Cantrijn et al.~\cite{CaCoLeMa}    (see also~\cite{FedGNMa2015})
 that  the reduced equations of motion 
 of a Chaplygin system possess a basic invariant measure if and only if a certain 
1-form $\Theta$ on $S$  is exact. 
In Section~\ref{SS:measures} we show that $\Theta$ is given by  the  following  ordinary contraction of the gyroscopic
tensor $\mathcal{T}$:
\begin{equation}
\label{eq:Theta-intro}
\Theta(Y)= \sum_{j=1}^r \langle X^j \, , \, \mathcal{T}(X_j,Y) \rangle ,
\end{equation}
where   $\{ X_1, \dots X_r\}$ is a basis of vector fields of $S$,   $\{ X^1, \dots X^r\}$ is the dual basis, and  $\langle \cdot \, ,  \, \cdot \rangle$ 
denotes the pairing of covectors and vectors on $S$. 

The relationship between the exactness of $\Theta$ and the existence of 
an invariant measure for the reduced equations of motion of a Chaplygin system is 
stated precisely  in  Theorem~\ref{T:measures-1-form}, for which  we give an intrinsic proof.

We note that the formulation of conditions for the existence of an invariant measure for nonholonomic systems
 in terms of the
exactness of a 1-form goes back to Blackall~\cite{Blackall}\footnote{We thank one of the anonymous referees
for indicating this reference to us.}, where, however,  the treatment is done  in local coordinates and 
not taking into account the geometric data of the problem.

\vspace{0.3cm}
\noindent {\bf  Chaplygin Hamiltonisation}

Let $g:S\to \R$ be a positive function. In view of Equation~\eqref{eq:almostsymplectic} we have
\begin{equation*}
{\bf i}_{ g \overline X_{nh}}(g^{-1} \Omega_{nh}) =dH.
\end{equation*}
If the function $g$ is such that $g^{-1} \Omega_{nh}$ is closed, then it is symplectic
and  the rescaled  vector field $g \overline X_{nh}$ is Hamiltonian. 
The rescaling by $g$ is commonly interpreted as a time reparametrisation and
one says that the system allows a \defn{Hamiltonisation}. The process described above may be understood as a geometric instance of \defn{Chaplygin's reducing multiplier method}
where one searches for a time reparametrisation that eliminates the {\em gyroscopic} reaction forces arising from the nonholonomic constraints.

It turns out that  a positive function $g\in C^\infty(S)$ satisfying that $g^{-1}\Omega_{nh}$ is closed can exist
if and only if the nonholonomic Chaplygin system is 
{\em $\phi$-simple} (defined below).

\vspace{0.3cm}
\noindent {\bf $\phi$-simple Chaplygin systems}

A main contribution of this paper is to  identify the class of  \defn{$\phi$-simple Chaplygin systems}.  This is a rather special class of Chaplygin 
systems for which there exists a smooth function $\phi:S\to \R$ such that the gyroscopic tensor satisfies
\begin{equation}
\label{eq:phi-simple-intro}
{\mathcal T}(Y, Z) =Z[\phi]Y - Y[\phi]Z, 
\end{equation}
for all vector fields $ Y, Z \in {\frak X}(S)$.

Our main result, contained in Theorem~\ref{T:measure-gyro},  shows that a nonholonomic
Chaplygin system is  $\phi$-simple if and only if the 2-form 
$\exp(\phi\circ \tau)\Omega_{nh}$ is closed. In particular  $\phi$-simple
systems always possess an invariant measure.
Moreover, if the number of degrees of freedom $r=2$,  the existence of  a basic invariant measure
is equivalent to the $\phi$-simplicity of the system.

\vspace{0.3cm}
\noindent {\bf  Weak Noetherianity  of our results}

An interesting observation is that the definition of the gyroscopic tensor  $\mathcal{T}$ only depends on the kinetic energy and on the constraints.
Therefore, any dynamical feature of the system that is derived as a consequence of the properties of $\mathcal{T}$, continues to hold in the 
presence of an arbitrary  potential  that is $G$-invariant. 
Following the terminology of Fass\`o et al~\cite{Fasso2, Fasso3}, we shall say that such
dynamical features  are   \defn{weakly Noetherian}.\footnote{Fass\`o et al used the terminology
 ``weakly Noetherian" to refer to those  first integrals of a nonholonomic system with symmetry
 that persist under the addition of an invariant potential.  }
 In particular, our treatment shows that the  Chaplygin Hamiltonisation of 
a   $\phi$-simple Chaplygin system, and the  preservation of a basic measure by  a Chaplygin system,  
are weakly Noetherian (Corollaries~\ref{C:Noetherianity-measure} and \ref{C:ChapThm} item $(iii)$).

\vspace{0.3cm}
\noindent {\bf  Chaplygin Hamiltonisation of the multi-dimensional Veselova problem}

Fedorov and Jovanovi\'c~\cite{FedJov, FedJov2} provided the first example of a nonholonomic Chaplygin system 
with arbitrary number of degrees of freedom that allows a Chaplygin Hamiltonisation. Their example is a 
multi-dimensional generalisation of the Veselova problem~\cite{Veselova, Veselov-Veselova-1988} with a  \emph{special type of inertia tensor}.

In Section~\ref{SS:Veselova} we prove that the problem considered by Fedorov and Jovanovi\'c~\cite{FedJov, FedJov2} 
is $\phi$-simple (Theorem~\ref{T:Veselova-phi-simple}) so the Chaplygin Hamiltonisation of the 
problem may be understood within our geometric framework.

 Other  examples of  $\phi$-simple Chaplygin systems with an arbitrary number of degrees of freedom
are the  multi-dimensional generalisation of the problem of a symmetric rigid body with a flat face that rolls 
without slipping or spinning over a sphere~\cite{LGN18} and the multi-dimensional rubber Routh sphere~\cite{LGNTAM-2019}.
We conjecture that other multi-dimensional Hamiltonisable Chaplygin nonholonomic systems considered by 
Jovanovi\'c~\cite{JovaRubber,Jova18} are also $\phi$-simple. If our conjecture holds, then 
 the notion of $\phi$-simplicity  places all of these
examples within a comprehensive geometric framework.

\vspace{0.3cm}
\noindent {\bf  Structure of the paper}

We begin by presenting a  brief introduction to nonholonomic systems, that recalls known results,   in Section~\ref{S:Prelim}. This 
serves to introduce the notation and makes the paper self-contained. The core of the paper is Section~\ref{S:Chaplygin} that focuses on 
the geometric study of $G$-Chaplygin systems. In this section we give a coordinate-free definition of the  gyroscopic tensor  $\mathcal{T}$ 
and prove the  results described above.
The relationship of $\mathcal{T}$ with previous constructions in the literature is described in Section~\ref{SS:history}. In Section
\ref{S:Example} we treat the examples. Apart from the treatment of the multi-dimensional Veselova problem described above, we 
illustrate our geometric constructions for the nonholonomic particle. We finish the paper by indicating some open problems in 
Section~\ref{S:FutureWork}, and with a couple of  appendices. The first appendix reviews some
geometric constructions that are necessary to give intrinsic proofs within the text and the second 
contains the proof of  a pair of technical
 lemmas that are used in Section~\ref{SS:Veselova}.

In order to help the reader to keep track of the results in the text we recall again the main results and indicate
their appearance on the text:
\begin{enumerate}
\item[$\bullet$] Definition~\ref{D:defGyrTensor} defines the 
gyroscopic tensor $\mathcal{T}$ and  Proposition~\ref{P:GyrTensor-well-defined} shows that 
the definition is unambiguous. 

\item[$\bullet$] Theorem \ref{T:almostsymplectic} proves that the reduced equations of motion
may be formulated as the  almost symplectic system \eqref{eq:almostsymplectic} in terms of the almost symplectic form $\Omega_{nh}=\Omega_{can}
+ \Omega_\mathcal{T}$, with $\Omega_\mathcal{T}$  given by \eqref{eq:Om_T_intro}.
(This is an alternative construction of $\Omega_{nh}$ with respect to previous references
\cite{Stanchenko, CaCoLeMa,EhlersKoiller}).

 \item[$\bullet$] The conditions for measure preservation in terms of the 1-form $\Theta$ given by \eqref{eq:Theta-intro} are stated in Theorem~\ref{T:measures-1-form}. This is a reformulation of the main result in
 Cantrijn et al \cite{CaCoLeMa}.
\item[$\bullet$] The main original results of the paper that state the properties of $\phi$-simple systems defined by condition \eqref{eq:phi-simple-intro} are presented in 
Theorem \ref{T:measure-gyro}.
\item[$\bullet$] Finally,  Theorem \ref{T:Veselova-phi-simple} states that the multi-dimensional Veselova problem 
considered by Fedorov and
Jovanovi\'c~\cite{FedJov, FedJov2}  is $\phi$-simple.
\end{enumerate}

\subsection*{Glossary of symbols}
\noindent

\begin{table}[h!]
{\bf Nonholonomic Chaplygin systems} \newline
\begin{tabular}{l l } 
$Q$&  $n$-dimensional configuration manifold, \\
$D\subset TQ$ & rank $r<n$ vector subbundle  defined by nonholonomic constraints, \\
$\llangle \cdot , \cdot \rrangle$ &  kinetic energy metric on $Q$, \\
$\mathcal{P}:TQ \to D$ & \begin{tabular}{l } 
bundle projector associated to the  orthogonal decomposition $TQ=D\oplus D^\perp$  \\
 defined by the kinetic energy metric $\llangle \cdot , \cdot \rrangle$,  \end{tabular} \\
 $G$ & symmetry group of dimension $n-r$ acting freely and properly on $Q$, \\ 
 $\mathfrak{g}$ & Lie algebra of  $G$, \\ 
 $S=Q/G$ & shape space (differentiable manifold of dimension $r$), \\ 
$\pi:Q\to S$  & principal bundle projection,  \\
$\mbox{hor}$  & horizontal lift associated to the principal connection $D$,   \\
$\llangle \cdot \, ,  \, \cdot   \rrangle^{-}$ & induced kinetic energy on $S$, defined by \eqref{reduced-metric}, \\
$\mathcal{T}$ & gyroscopic tensor ((1,2) tensor field on $S$), defined by \eqref{gyroscopic-tensor}, \\ 
$\tau_S:T^*S\to S$ & canonical bundle projection, \\ 
$\lambda_S\in \Omega^1(T^*S)$ & Liouville 1-form on $T^*S$ defined by \eqref{Liouville-1-form}, \end{tabular}
\end{table}

\begin{table}[h!]
\begin{tabular}{l l } 

$\Omega_{can}= -d\lambda_S$ & canonical  symplectic form on $T^*S$,  \\ 
$\nu=\Omega^r_{can}$ & Liouville volume form on $T^*S$, \\
$\Omega_{\mathcal{T}}$ & gyroscopic 2-form on $T^*S$, defined by \eqref{eq:defgyrotwoform}, \\ 
$\Omega_{nh}= \Omega_{can} +\Omega_{\mathcal{T}} $ & almost symplectic structure on $T^*S$,  \\ 
$H\in C^\infty(T^*S) $ & reduced Hamiltonian,  \\ 
$\overline{X}_{nh}\in \mathfrak{X}(T^*S)$ &   \begin{tabular}{l }  vector field describing the reduced dynamics \\ characterised by
${\bf i}_{\overline{X}_{nh}} \Omega_{nh}=dH$  (see Theorem \ref{T:almostsymplectic}),  \end{tabular}   \\ 
$\Theta \in \Omega^1(S)$ &    \begin{tabular}{l }  1-form that is exact if and only if $ X_{nh}$ preserves a basic volume form  \\
  (defined in \eqref{eq:1-form-theta-def}), (see Theorem~\ref{T:measures-1-form}),  \end{tabular} \\
 $ \Psi: TQ \times_Q TQ \to {\frak g}$ & curvature form of the principal connection $D$ (defined by \eqref{curvature-connection}).
\end{tabular}
\end{table}

\begin{table}[h!]
{\bf Geometry} \newline
\begin{tabular}{l l } 
$\Omega^1(M)$ & space of 1-forms on the manifold $M$, \\
$\mathfrak{X}(M)$ & space of vector fields on the manifold $M$, \\
$\Gamma(E)$ & space of sections of the vector bundle $E$, \\
$Y^{\ell}\in C^\infty (E^*)$ &  linear function induced by $Y\in \Gamma(E)$, where $E$ is a vector bundle (defined by \eqref{eq:linear-function}), \\
$\beta^{\sharp} \in {\frak X}(S)$  & metric dual of the 1-form $\beta \in \Omega^1(S)$ (defined by \eqref{eq:metric-dual}), \\
$Y^\flat \in \Omega^1(S)$ & metric dual of the vector field $Y\in  {\frak X}(S)$  (defined by \eqref{eq:metric-dual}),
\\
$\gamma^{\bf v}\in \mathfrak{X}(T^*S)$  &   vertical lift of $\gamma \in \Omega^1(S)$ (defined by  \eqref{vertical-lift-dual}), \\
$Y^{*c}\in \mathfrak{X}(T^*S)$ & complete lift  of $Y\in\mathfrak{X}(S)$    (defined by  \eqref{complete-lift-dual}), \\
 $\Xi^{\mathfrak{q}}$ &    \begin{tabular}{l }   quadratic function  on $T^*S$ associated to the  \\
 type $(2, 0)$ tensor $\Xi$ on $S$,  (defined by  \eqref{eq:quad-function}).  \end{tabular}\\
  \begin{tabular}{l }   ${\sharp}_m:T_m^*M\to T_mM$  \\
$ {\flat}_m:T_mM\to T_m^*M$ \end{tabular} & \begin{tabular}{l} musical isomorphisms on the Riemannian manifold $M$  \\ 
at the point $m\in M$ (defined by  \eqref{eq:metric-dual}, \eqref{eq:musical-isomorphisms}). \end{tabular}
 \end{tabular}
\end{table}

\newpage

\section{Preliminaries}
\label{S:Prelim}

We present  a rather synthetic review of known results in the theory of nonholonomic systems that sets the notation and basic 
notions that will be used in Section~\ref{S:Chaplygin} ahead.

\subsection{Nonholonomic systems}
A nonholonomic system consists of a triple $(Q,D,L)$ where $Q$ is the configuration space, $D\subset TQ$ is a vector sub-bundle  whose fibres define a  
non-integrable constraint distribution on $Q$
and $L$ is the Lagrangian.  The configuration space $Q$ is an $n$-dimensional smooth manifold, $D$ has rank $2\leq r<n$ and models $n-r$ independent linear constraints on the velocities  of the system, and the Lagrangian $L:TQ\to \R$ is of mechanical type, namely
\begin{equation*}
L=K-U,
\end{equation*}
 where the kinetic energy $K$ defines a Riemannian metric $\llangle \cdot , \cdot \rrangle$ on 
 $Q$ and $U:Q\to \R$ is the potential energy.

The vector sub-bundle  $D\subset TQ$ is the velocity phase space of the system  and the dynamics are described by a second order vector
field on $D$  which is determined by the Lagrange-D'Alembert principle of ideal constraints. Classical
references on the subject are
\cite{NeiFu,VF}. For an intrinsic definition of the  vector field describing the dynamics 
see, for instance, \cite{LeMa}.

An equivalent formulation of the dynamics may be given in the momentum phase space $D^*$ (the dual  bundle of $D$) that is a rank $r$ vector bundle over $Q$ which is isomorphic to $D$. The space $D^*$ is equipped with an {\em almost Poisson structure} which codifies the  
reaction forces in a geometric manner. The dynamics on $D^*$ is described by a vector
field which is obtained as the contraction of the almost Poisson structure on $D^*$ and the {\em Hamiltonian function}, which is the energy of the system.
This formulation has its origins in \cite{vdSM, Marle,Ibort}. In the following section we review  this construction, which is useful for our purposes. Our
description follows closely the exposition in \cite{Grab,LeMaMa} (see also \cite{FedGNMa2015}).

\subsection{Almost Poisson formulation of nonholonomic systems}
\label{SS:APoissonNonho}

Let $i_D:D\hookrightarrow TQ$ be the canonical bundle inclusion and 
$\mathcal{P}:TQ \to D$ 
the bundle projection associated to the orthogonal decomposition $TQ=D\oplus D^\perp$ defined by the kinetic energy metric.
Passing to the dual spaces we respectively get the bundle projection and the bundle inclusion
\begin{equation*}
i_D^*:T^*Q \rightarrow D^*, \qquad \mathcal{P}^*: D^* \hookrightarrow T^*Q.
\end{equation*}
 
The \defn{nonholonomic bracket}
 of the functions $\varphi, \psi \in C^\infty (D^*)$ is the smooth function on $D^*$  defined by
\begin{equation}\label{non-holo-bracket}
\{\varphi, \psi \}_{D^*} := \{\varphi \circ i_D^*, \psi \circ i_D^* \}_{T^*Q}  \circ \mathcal{P}^*,
\end{equation}
 where $\{\cdot , \cdot  \}_{T^*Q}$ denotes the canonical Poisson bracket on the cotangent bundle $T^*Q$ \cite{LeMaMa}. 
 The nonholonomic bracket is skew-symmetric and satisfies Leibniz rule. On the other hand, the Jacobi
 identity is satisfied if and only if $D$ is integrable and the constraints are holonomic (see Theorem~\ref{Integra-almost-Poisson} below). For nonholonomic constraints, the failure of the Jacobi 
 identity leads to the notion of an \defn{almost Poisson bracket}.

The Lagrangian $L$ passes via the usual Legendre transform to the Hamiltonian function $h \in C^\infty(T^*Q)$, which is the sum of the kinetic and
potential energy of the system.  Explicitly, for $\alpha \in T_q^*Q$ we have
\begin{equation*}
h(\alpha)=\frac{1}{2}\| \alpha \|^2 + U(q),
\end{equation*}
where $\| \cdot \|$  denotes the norm on the fibres of $T^*Q$ induced by the kinetic energy Riemannian metric $K$ on $Q$.
The \defn{constrained Hamiltonian} is defined by $h_c:= h\circ  \mathcal{P}^* \in  C^\infty(D^*)$.

The dynamics of the system is defined by the flow of the vector field $X_{nh}$ on $D^*$ defined as the derivation
\begin{equation}\label{non-holo-dynamics}
X_{nh}[\varphi] = \{\varphi, h_c \}_{D^*}, \qquad \varphi \in C^\infty(D^*).
\end{equation}
Local expressions for the nonholonomic bracket, for $X_{nh}$, and the corresponding equations of motion are given below.

\subsection*{Linear structure of the nonholonomic bracket}

The momentum phase space $D^*$ is a vector bundle $\tau:D^*\to Q$ and, according to this structure, it is convenient to give special attention  to 
two kinds of functions.  \defn{Linear functions} on $D^*$ are characterised by
being linear when restricted to the fibres. On the other hand  \defn{basic functions}  on $D^*$ only depend on the base point. 
We now review how the  nonholonomic  bracket is determined by
its value on functions that are either basic or linear.

We begin by noting that there is a one-to-one correspondence between the space of  linear functions on $D^*$ and the space of sections $\Gamma (D)$.
 Such correspondence is the following: to a section $Y\in \Gamma(D)$ 
we associate the function $Y^\ell\in C^\infty(D^*)$ given by
\begin{equation*}
Y^\ell(\alpha ) = \alpha (Y( \tau (\alpha))), \qquad \alpha \in D^*.
\end{equation*}
On the other hand, smooth functions on $Q$ are in one-to-one correspondence with  basic functions on $D^*$. 
If $f$ is a basic function on $D^*$ we denote by  $\tilde f\in C^\infty(Q)$ the unique function satisfying    $f= \tilde f\circ \tau$.
In the following proposition, and for the rest of the paper,  $[\cdot , \cdot ]$ denotes the Jacobi-Lie bracket of vector fields.

\begin{proposition} {\em (\cite[Proposition 2.3]{LeMaMa})}
\label{linearity-bracket}
Let   $Y^\ell, Z^\ell $ be the linear functions on $D^*$ corresponding to the sections $Y,Z\in \Gamma(D)$, and let $f, k$ be basic functions
on $D^*$. We have
\begin{equation*}
\begin{split}
\{ Y^\ell, Z^\ell \}_{D^*} = - (\mathcal{P} [Y,Z])^\ell, \qquad
 \{  f , Y^\ell \}_{D^*} = Y[ \tilde f] \circ \tau, \qquad  \{  f , k \}_{D^*}=0,
\end{split}
\end{equation*}
where $f= \tilde f\circ \tau$.
\end{proposition}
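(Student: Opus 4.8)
The plan is to reduce each of the three identities to a well-known formula for the \emph{canonical} Poisson bracket on $T^*Q$, exploiting the defining relation \eqref{non-holo-bracket}. Writing $\pi_Q:T^*Q\to Q$ for the cotangent projection and, for a vector field $V\in\mathfrak{X}(Q)$, denoting by $P_V\in C^\infty(T^*Q)$ the fibrewise-linear function $P_V(\beta)=\beta(V)$, I would take as known the standard canonical relations
\begin{equation*}
\{P_V,P_W\}_{T^*Q}=-P_{[V,W]}, \qquad \{g\circ\pi_Q,P_V\}_{T^*Q}=V[g]\circ\pi_Q, \qquad \{g\circ\pi_Q,h\circ\pi_Q\}_{T^*Q}=0,
\end{equation*}
valid for $V,W\in\mathfrak{X}(Q)$ and $g,h\in C^\infty(Q)$. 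The whole statement then hinges on understanding how linear and basic functions on $D^*$ transform under precomposition with $i_D^*$ (which sits inside the canonical bracket) and under postcomposition with $\mathcal{P}^*$.

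The core of the argument is a pair of pullback identities. First, regarding a section $Y\in\Gamma(D)$ as a vector field $i_D\circ Y$ on $Q$, I would verify directly from the definitions that $Y^\ell\circ i_D^*=P_{i_D\circ Y}$: indeed for $\beta\in T^*Q$ one has $(i_D^*\beta)(Y)=\beta(i_D Y)$, so a linear function on $D^*$ pulls back to the linear function on $T^*Q$ of the same vector field viewed in $TQ$. Second, a basic function $f=\tilde f\circ\tau$ satisfies $f\circ i_D^*=\tilde f\circ\pi_Q$, because $i_D^*$ covers the identity on $Q$ and hence $\tau\circ i_D^*=\pi_Q$. I would record alongside these the dual compatibility $\pi_Q\circ\mathcal{P}^*=\tau$ and the evaluation rule $(\mathcal{P}^*\alpha)(V)=\alpha(\mathcal{P}V)$ for $\alpha\in D^*$ and $V\in TQ$.

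With these identifications the three computations are short. For the first identity, substituting gives $\{Y^\ell,Z^\ell\}_{D^*}=-P_{[i_DY,\,i_DZ]}\circ\mathcal{P}^*$, and then the evaluation rule converts $P_{[Y,Z]}\circ\mathcal{P}^*$ into $(\mathcal{P}[Y,Z])^\ell$, producing $-(\mathcal{P}[Y,Z])^\ell$. For the second, $f\circ i_D^*=\tilde f\circ\pi_Q$ and $Y^\ell\circ i_D^*=P_{i_DY}$ reduce the bracket to $(Y[\tilde f]\circ\pi_Q)\circ\mathcal{P}^*$, which equals $Y[\tilde f]\circ\tau$ by $\pi_Q\circ\mathcal{P}^*=\tau$. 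The third follows at once, since two basic functions pull back to two basic functions on $T^*Q$, whose canonical bracket vanishes.

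The only genuinely delicate point, and the one I would watch most carefully, is that in the first identity the Jacobi--Lie bracket $[Y,Z]$ is computed with $Y$ and $Z$ viewed as $D$-valued vector fields on $Q$, and this bracket need \emph{not} be tangent to $D$ --- this is exactly the nonintegrability of the constraints --- which is precisely why the projector $\mathcal{P}$ must appear on the right-hand side. Maintaining a clean distinction between a section of $D$ and its image in $TQ$ under $i_D$ throughout is what makes the pullback identities, and thus the entire reduction to the canonical bracket, go through.
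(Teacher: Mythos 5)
Your proof is correct. The paper itself gives no proof of this proposition---it is quoted from \cite[Proposition 2.3]{LeMaMa}---but your argument is the natural one and is exactly what the definition \eqref{non-holo-bracket} invites: identify $Y^\ell\circ i_D^*$ with the fibrewise-linear function of $i_D\circ Y$ on $T^*Q$ and $f\circ i_D^*$ with a basic function on $T^*Q$, apply the standard canonical-bracket identities (whose signs you state consistently with the paper's convention $\{q^a,p_i\}=\rho_i^a$), and let the postcomposition with $\mathcal{P}^*$ insert the projector via $(\mathcal{P}^*\alpha)(V)=\alpha(\mathcal{P}V)$ and $\pi_Q\circ\mathcal{P}^*=\tau$. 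You also correctly flag the one genuinely nonholonomic point, namely that $[i_DY,i_DZ]$ need not take values in $D$, which is precisely why $\mathcal{P}$ survives in the first identity.
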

\noindent In particular, the proposition shows that the nonholonomic bracket satisfies the following  properties:
\newline $\bullet$ The bracket of linear functions is linear.
\newline  $\bullet$ The bracket of a linear and a basic function is basic.
\newline $\bullet$ The bracket of basic functions vanishes.
\newline
Brackets with these properties often appear  in mechanics and were termed \defn{linear brackets} in \cite{LeMaMa} (see also \cite{Grab}).

\subsection*{Local expressions}

Consider a local basis  $\{e_i\}_{i=1}^r$ of sections of $D$ in an open subset of $Q$ with local coordinates $(\tilde{q}^{a})$, 
$a=1, \dots, n$.
Let
\begin{equation*}
q^{a} := \tilde{q}^{a}\circ \tau, \quad a=1,\dots, n,  \qquad p_i:=e_i^\ell, \quad i=1,\dots, r.
\end{equation*}
Then 
$(q^{a} , p_i)$ is a system of local coordinates on $D^*$  consisting of basic and linear  functions on $D^*$. Proposition~\ref{linearity-bracket}
 implies that the nonholonomic bracket is determined in this coordinate system by the relations
\begin{equation*}
\{q^a, q^b\}_{D^*}=0,  \qquad \{q^{a}, p_i\}_{D^*}=\rho_i^a, \qquad \{p_i, p_j\}_{D^*}=-\sum_{k=1}^rC_{ij}^kp_k,
\end{equation*}
 where the coefficients $\rho_i^a$ and $C_{ij}^k$ depend on $(q^a)$ and are defined by the relations
 \begin{equation*}
e_i= \sum_{a=1}^n \rho_i^a \frac{\partial}{\partial  \tilde{q}^{a}}, \qquad \mathcal{P}[e_i,e_j]=\sum_{k=1}^rC_{ij}^ke_k.
\end{equation*}
In view of~\eqref{non-holo-dynamics}, the equations of motion in these variables take the form
\begin{equation*}
\dot q^a = \sum_{i=1}^r \rho_i^a \frac{\partial h_c}{\partial p_i}, \qquad \dot p_i = - \sum_{a=1}^n \rho_i^a \frac{\partial h_c}{\partial q^a}
-\sum_{k=1}^rC_{ij}^kp_k\frac{\partial h_c}{\partial p_j}.
\end{equation*}
The specific form of the constrained Hamiltonian in these variables is 
\begin{equation*}
h_c(q^a,p_i)=\frac{1}{2}\sum_{i,j=1}^rK^{ij}(q)p_ip_j+U(q),
\end{equation*}
where  $K^{ij}$ are the entries of the inverse matrix of the positive definite matrix with entries $K_{ij}=\llangle e_i, e_j\rrangle$.

It is shown in \cite{vdSM} that the bracket $\{ \cdot, \cdot\}_{D^*}$ satisfies the Jacobi identity if and only if the 
distribution $D$ is integrable and hence the constraints are holonomic.
 Although we have no need in this paper for this fact, we take the opportunity to 
present a coordinate-free proof.

\begin{theorem}\label{Integra-almost-Poisson}
The almost Poisson bracket $\{\cdot, \cdot \}_{D^*}$ satisfies the Jacobi identity
if and only if the distribution $D$ is integrable.
\end{theorem}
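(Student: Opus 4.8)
The plan is to exploit the linear structure of the bracket recorded in Proposition~\ref{linearity-bracket}. Since $\{\cdot,\cdot\}_{D^*}$ is skew-symmetric and satisfies the Leibniz rule in each argument, it is induced by a bivector field on $D^*$, and its \emph{Jacobiator}
\[
\mathcal{J}(\varphi,\psi,\chi):=\{\varphi,\{\psi,\chi\}_{D^*}\}_{D^*}+\{\psi,\{\chi,\varphi\}_{D^*}\}_{D^*}+\{\chi,\{\varphi,\psi\}_{D^*}\}_{D^*}
\]
is the corresponding Schouten--Nijenhuis square; in particular $\mathcal{J}$ is $C^\infty(D^*)$-linear in each of its three entries. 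Consequently $\mathcal{J}$ is determined pointwise by its values on any collection of functions whose differentials span $T^*D^*$, and the local coordinates $(q^a,p_i)$ show that the differentials of basic and linear functions do span $T^*D^*$ at every point. The Jacobi identity is therefore equivalent to the vanishing of $\mathcal{J}$ on all triples assembled from basic and linear functions.

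The core computation is to evaluate $\mathcal{J}$ on such triples using the three identities of Proposition~\ref{linearity-bracket}. Any triple containing at least two basic functions gives $\mathcal{J}=0$ at once, since the bracket of two basic functions vanishes and the bracket of a basic and a linear function is again basic. The decisive case is one basic function $f=\tilde f\circ\tau$ together with two linear functions $Y^\ell,Z^\ell$ with $Y,Z\in\Gamma(D)$; a direct manipulation of the three identities yields
\[
\mathcal{J}(f,Y^\ell,Z^\ell)=\Big(\big([Y,Z]-\mathcal{P}[Y,Z]\big)[\tilde f]\Big)\circ\tau,
\]
that is, the derivative of $\tilde f$ along the $D^\perp$-component $(\mathrm{id}-\mathcal{P})[Y,Z]$ of the Lie bracket of the sections. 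For three linear functions one finds instead
\[
\mathcal{J}(X^\ell,Y^\ell,Z^\ell)=\Big(\mathcal{P}[X,\mathcal{P}[Y,Z]]+\mathcal{P}[Y,\mathcal{P}[Z,X]]+\mathcal{P}[Z,\mathcal{P}[X,Y]]\Big)^{\ell}.
\]

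With these two formulas both implications follow. For the converse, suppose the Jacobi identity holds; then $\mathcal{J}(f,Y^\ell,Z^\ell)=0$ for every basic $f$, and since $\tilde f$ ranges over all of $C^\infty(Q)$ the vector field $(\mathrm{id}-\mathcal{P})[Y,Z]$ annihilates every function and hence vanishes, so $[Y,Z]\in\Gamma(D)$ for all $Y,Z\in\Gamma(D)$, which is integrability by the Frobenius theorem. For the forward direction I would assume $D$ integrable: the mixed term vanishes by the same formula, and in the all-linear term integrability lets me replace each $\mathcal{P}[\cdot,\cdot]$ by the genuine Lie bracket and drop the outer projection (the relevant brackets already lie in $\Gamma(D)$), whereupon the expression reduces to the Jacobi identity for vector fields and vanishes. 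As the remaining triples vanish trivially, $\mathcal{J}\equiv 0$.

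I expect the only conceptually delicate step to be the reduction from an identity over all smooth functions to this finite check on basic and linear functions, which rests entirely on the tensoriality of the Jacobiator; granting this, the remainder is bookkeeping with Proposition~\ref{linearity-bracket}, and the mixed term $\mathcal{J}(f,Y^\ell,Z^\ell)$ provides the most economical route to the non-trivial (converse) implication.
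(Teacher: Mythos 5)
Your proposal is correct, and the computations check out: with the paper's sign conventions one indeed gets $\mathcal{J}(f,Y^\ell,Z^\ell)=\bigl(([Y,Z]-\mathcal{P}[Y,Z])[\tilde f]\bigr)\circ\tau$ and $\mathcal{J}(X^\ell,Y^\ell,Z^\ell)=\bigl(\mathcal{P}[X,\mathcal{P}[Y,Z]]+\mathrm{cyclic}\bigr)^{\ell}$, and the triples with two or more basic entries vanish for the reasons you give. Your route differs from the paper's in one direction and refines it in the other. For ``Jacobi implies integrable'' the paper does not compute the Jacobiator at all: it uses the Hamiltonian vector fields $X_{Y^\ell}$, observes that they are $\tau$-projectable onto $Y$, and compares the projections of $[X_{Y^\ell},X_{Z^\ell}]$ and of $X_{\{Y^\ell,Z^\ell\}_{D^*}}$ to conclude $\mathcal{P}[Y,Z]=[Y,Z]$; your mixed-term formula reaches the same conclusion more directly and has the bonus of exhibiting the obstruction explicitly as the derivative of $\tilde f$ along the $D^{\perp}$-component of $[Y,Z]$. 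For ``integrable implies Jacobi'' both arguments verify the identity on basic and linear functions and then extend; the paper justifies the extension with the brief remark that the bracket is determined by its values on such functions, whereas you make the underlying mechanism precise by invoking the tensoriality of the Jacobiator (the Schouten--Nijenhuis square of the bivector) together with the fact that the differentials of the $(q^a,p_i)$ span $T^*(D^*)$. That extra care is worthwhile: determination of the bracket by its values on a class of functions does not by itself propagate an identity involving iterated brackets, and the $C^\infty$-linearity of $\mathcal{J}$ is exactly what is needed. In short, your version is a little more computational but more self-contained and symmetric between the two implications.
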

\begin{proof}
Throughout the proof, for $\varphi \in C^\infty(D^*)$ we denote  by $X_\varphi$ the  vector field on $D^*$ 
defined as the derivation $X_\varphi [\psi] = \{\psi, \varphi \}_{D^*}$, $\psi  \in C^\infty(D^*)$.

Suppose that $\{\cdot, \cdot\}_{D^*}$ satisfies the Jacobi identity so it is a Poisson bracket. Because of  Frobenius theorem,
in order to show that $D$ is integrable, 
 it is enough to prove that
${\mathcal P}[Y, Z] = [Y, Z]$
for any sections $Y, Z$  of $D$.
Considering that
\[
\{  f , Y^\ell \}_{D^*} = Y[ \tilde f] \circ \tau,
\]
for $f = \tilde{f} \circ \tau$,  we deduce that $X_{Y^\ell}$ is $\tau$-projectable on $Y$. Consequently, 
the Lie bracket $[X_{Y^\ell}, X_{Z^\ell}]$ 
is   $\tau$-projectable on  $[Y, Z]$.
On the other hand, using our assumption that  $\{\cdot, \cdot\}_{D^*}$ is a Poisson bracket we obtain
\[
[{X}_{Y^\ell}, {X}_{Z^\ell}] = -{X}_{\{Y^\ell, Z^\ell\}_{D^*}} = {X}_{{\mathcal P}[Y, Z]^\ell},
\]
which implies that the vector field $[{X}_{Y^\ell}, {X}_{Z^\ell}]$ is $\tau$-projectable on ${\mathcal P}[Y, Z]$.
Therefore,
$
{\mathcal P}[Y, Z] = [Y, Z],
$
and $D$ is integrable. 

Conversely, assume that $D$ is integrable. Then, we have
$
{\mathcal P}[Y, Z] = [Y, Z]
$
for any sections $Y, Z$  of $D$. Using this in Proposition~\ref{linearity-bracket}, shows that the Jacobi identity holds for
basic and linear functions. Namely, 
\[
\{\{\varphi, \psi\}_{D^*}, \mu\}_{D^*} +\{\{\mu, \varphi \}_{D^*}, \psi \}_{D^*}+\{\{ \psi , \mu\}_{D^*}, \varphi\}_{D^*}= 0,
\]
if $\varphi$, $\psi$ and $\mu$ are linear or basic functions on $D^*$. Since the bracket is determined by its value on these
kinds of functions, we conclude that the Jacobi identity holds for general functions on $D^*$.
\end{proof}

\subsection{Nonholonomic systems with symmetries and reduction}
\label{SS:nh-symmetry}

For the purposes of this paper a \defn{nonholonomic system with symmetry} is a nonholonomic system $(Q,D,L)$ together with a Lie group $G$, that acts freely
 and properly on $Q$, and satisfies the following properties:
\begin{enumerate}
\item $G$ acts by isometries on $Q$ and the potential energy $U$ is $G$-invariant,
\item $D$ is $G$ invariant in the sense that $Tg(D_q)=D_{g\cdot q}$ for all $g\in G$.
\end{enumerate}

We shall now give a description of the reduction of the system in terms of almost Poisson structures.
We begin with the following.
\begin{proposition}
\label{p:symmetric-nonho}
Consider a nonholonomic system with symmetry $(Q,D,L)$ with symmetry group $G$. Then $G$ defines a free and proper  action on  $D^*$ that leaves 
the constrained Hamiltonian and the nonholonomic bracket invariant. Moreover, the vector field $X_{nh}$ on $D^*$ that describes the dynamics is
equivariant. 
\end{proposition}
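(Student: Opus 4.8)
The plan is to derive every invariance statement from the corresponding structure on $T^*Q$ and to transport it down to $D^*$ through the two bundle maps $i_D^*$ and $\mathcal{P}^*$, exploiting that the cotangent-lifted $G$-action is a Poisson map on $T^*Q$. First I would pin down the $G$-action on $D^*$. Writing $\Phi_g\colon Q\to Q$ for the diffeomorphism determined by $g\in G$, the tangent lift $T\Phi_g$ preserves $D$ by hypothesis (ii); since $G$ acts by isometries [hypothesis (i)] it also preserves $D^\perp$, so the splitting $TQ=D\oplus D^\perp$ is $G$-invariant and the projector is equivariant, $T\Phi_g\circ\mathcal{P}=\mathcal{P}\circ T\Phi_g$. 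Transporting the (equivariant) tangent-lift action on $D$ through the $G$-invariant metric bundle isomorphism $D\to D^*$ yields the action on $D^*$; concretely $(g\cdot\alpha)(v)=\alpha(T\Phi_{g^{-1}}v)$ for $\alpha\in D_q^*$ and $v\in D_{g\cdot q}$. Freeness is immediate because the projection $\tau\colon D^*\to Q$ is equivariant and the action on $Q$ is free; properness follows from the standard fact that the lift of a proper action to the total space of a $G$-vector bundle is proper (project a compact subset of $D^*\times D^*$ to $Q\times Q$, use properness on $Q$ to confine the group elements to a compact set, then invoke compactness in the fibres).

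Next I would record that $i_D^*$ and $\mathcal{P}^*$ are equivariant, a one-line verification from the formulas $i_D^*(\beta)=\beta|_D$ and $\mathcal{P}^*(\alpha)=\alpha\circ\mathcal{P}$ together with the equivariance of $i_D$ and $\mathcal{P}$ already established. Invariance of the constrained Hamiltonian is then immediate: the Hamiltonian $h(\alpha)=\tfrac12\|\alpha\|^2+U(q)$ is $G$-invariant since the cometric is preserved (the metric is invariant) and $U$ is invariant by hypothesis (i), and composing with the equivariant $\mathcal{P}^*$ gives $h_c(g\cdot\alpha)=h(\mathcal{P}^*(g\cdot\alpha))=h(g\cdot\mathcal{P}^*(\alpha))=h(\mathcal{P}^*(\alpha))=h_c(\alpha)$.

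For the bracket I would feed these equivariances into the defining relation $\{\varphi,\psi\}_{D^*}=\{\varphi\circ i_D^*,\psi\circ i_D^*\}_{T^*Q}\circ\mathcal{P}^*$. The essential external input is that the cotangent-lifted action preserves the Liouville one-form, hence the canonical symplectic form and the canonical Poisson bracket, so its maps are Poisson morphisms of $\{\cdot,\cdot\}_{T^*Q}$. Writing $(g\cdot)$ for both the action on $D^*$ and its cotangent lift to $T^*Q$, the equivariance of $i_D^*$ gives $(\varphi\circ(g\cdot))\circ i_D^*=(\varphi\circ i_D^*)\circ(g\cdot)$; pulling the cotangent lift through the canonical bracket and then recombining via $\mathcal{P}^*\circ(g\cdot)=(g\cdot)\circ\mathcal{P}^*$ yields $\{\varphi\circ(g\cdot),\psi\circ(g\cdot)\}_{D^*}=\{\varphi,\psi\}_{D^*}\circ(g\cdot)$, which is the asserted invariance. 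Equivariance of $X_{nh}$ is then formal: since $X_{nh}[\varphi]=\{\varphi,h_c\}_{D^*}$ is the Hamiltonian derivation of an invariant function with respect to an invariant bracket, pushing $X_{nh}$ forward by $g$ and cancelling with the two invariances gives $(\Phi_g^{D^*})_*X_{nh}=X_{nh}$.

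I expect the only genuinely non-formal point to be the properness of the induced action; every other assertion reduces to bookkeeping of equivariance together with the single structural fact that cotangent lifts are Poisson maps. The efficient organisation is therefore to establish the equivariance of $\mathcal{P}$, $i_D^*$ and $\mathcal{P}^*$ at the outset, after which the invariance of $h_c$, the invariance of the bracket, and the equivariance of $X_{nh}$ all follow by the same intertwine-and-cancel mechanism.
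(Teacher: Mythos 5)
Your proposal is correct and follows essentially the same route as the paper: define the action on $D^*$ as the restriction of the cotangent-lifted action, establish equivariance of $\mathcal{P}$, $\mathcal{P}^*$, $i_D$ and $i_D^*$ from the invariance of $D$ and of the metric, and then derive invariance of $h_c$ and of the bracket (via the fact that the cotangent lift preserves $\{\cdot,\cdot\}_{T^*Q}$) and hence equivariance of $X_{nh}$. The only difference is cosmetic: you spell out the freeness/properness of the induced action on $D^*$ in slightly more detail than the paper, which simply invokes the freeness and properness of the tangent and cotangent lifts.
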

\begin{proof}

Recall that the tangent lift of the action of $G$ on $Q$ is a free and proper action of $G$ on $TQ$ defined by
\[
g \cdot v := (T_qg)(v)\in T_{g\cdot q}Q, 
\]
where $g \in G$, $v \in T_qQ$ and $q \in Q$. The $G$-invariance of $D$ implies that this action restricts to a free and proper
action of $G$ on $D\subset TQ$.

Recall also that the cotangent lift  defines a free and proper action of $G$ 
on $T^*Q$, sending $\alpha \in T^*_qQ$ into the covector $g \cdot \alpha \in T^*_{g\cdot q}Q$ that is defined by
\[
 (g \cdot \alpha)(u) := \alpha(g^{-1}\cdot u),
\]
where  $u \in T_{g\cdot q}Q$. As before, the $G$-invariance of 
 $D$ implies that  this action restricts to a free and proper action of $G$ on $D^*$. Indeed, such action is
 defined by the above formula but with the
 restrictions that  $\alpha \in D^*_q$ and 
 the tangent vector $u\in D_{g\cdot q}$. This proves the first statement of the proposition.

Next, as a consequence of the invariance of $D$ and of the kinetic energy metric,
it follows that both the  projector ${\mathcal P}: TQ \to D$, and the dual morphism ${\mathcal P}^*: D^* \to T^*Q$, are 
$G$-equivariant. Namely,
\begin{equation}
\label{P-star-equivariant}
{\mathcal P}(g\cdot v) = g \cdot {\mathcal P}(v), \qquad {\mathcal P}^*(g\cdot \alpha) = g \cdot {\mathcal P}^*(\alpha),
\end{equation}
for $v \in TQ$, $\alpha \in D^*$.
On the other hand,  our assumptions clearly imply that the Hamiltonian $h$ is also  invariant and therefore  the same is 
true about the constrained Hamiltonian $h_c=h\circ  {\mathcal P}^*$ as claimed.

Now recall that the cotangent lifted action of $G$ on $T^*Q$ preserves the canonical   Poisson bracket $\{\cdot, \cdot\}_{T^*Q}$  (see e.g. \cite{MaRa}). Moreover, 
in virtue of the invariance of $D$, both the canonical inclusion $i_D: D \to TQ$ and the dual projection $i_D^*: T^*Q \to D^*$ are $G$-equivariant.  These observations,
together with \eqref{P-star-equivariant} show that the nonholonomic bracket defined by \eqref{non-holo-bracket} is also $G$-invariant.

Finally, the equivariance of  $X_{nh}$ follows from its definition~\eqref{non-holo-dynamics} and the above observations.
\end{proof}

Denote by $\overline{D^*}:=D^*/G$ the orbit space which, as a consequence of the above proposition, is a smooth manifold, and let $\Pi:D^*\to  \overline{D^*}$
be the orbit projection which is a surjective submersion. The invariance of the nonholonomic bracket proved above implies the  existence of a well-defined almost Poisson bracket 
$\{\cdot, \cdot \}_{\overline{D^*}}$ 
on the reduced space $\overline{D^*}$ defined by the restriction of the nonholonomic bracket to invariant functions on $D^*$. In other words
\begin{equation}\label{bracket-reduced}
\{\bar \varphi  \circ \Pi , \bar \psi  \circ \Pi \}_{D^*} = \{ \bar \varphi   ,  \bar \psi  \}_{\overline{D^*}}  \circ \Pi, \qquad \mbox{for $\bar \varphi   , \bar \psi \in C^\infty(\overline{D^*})$.}
\end{equation}

Note that the orbit space $\overline{D^*}$ is a vector bundle over the \defn{shape space} $S := Q/G$, so the reduced bracket $\{\cdot , \cdot \}_{\overline{D^*}}$ may also be described
by its value on linear and basic functions. In order to give such description, first notice that the space of linear functions on $\overline{D^*}$ may be 
identified with the space $\Gamma(D)^G$ of $G$-equivariant sections of $D$:
\[
\Gamma(D)^G := \{ X \in \Gamma(D) \, :\,  X \mbox{ is $G$-equivariant }\}.
\]
Moreover, if $X \in \Gamma(D)^G$ then $X$ is $\pi$-projectable to a vector field $\tilde{X}$ on $S = Q/G$, where $\pi: Q \to S$ denotes the principal bundle projection. 
The following proposition is a direct consequence of Proposition~\ref{linearity-bracket} and Equation~\eqref{bracket-reduced}.
\begin{proposition}\label{linearity-bracket-red}
Let   $Y^\ell, Z^\ell $ be the linear functions on $\overline{D^*} = D^*/G$ corresponding to the sections $Y,Z\in \Gamma(D)^G$, and let $f, k$ be basic functions on $\overline{D^*} = D^*/G$. Then,
\begin{equation*}
\begin{split}
\{ Y^\ell, Z^\ell \}_{\overline{D^*}} = - (\mathcal{P} [Y,Z])^\ell, \qquad
 \{  f , Y^\ell \}_{\overline{D^*}} = \tilde{Y}[ \tilde f] \circ \overline{\tau}, \qquad  \{  f , k \}_{\overline{D^*}}=0,
\end{split}
\end{equation*}
where $f= \tilde f\circ \overline{\tau}$, $\overline{\tau}: \overline{D^*} \to S$ is the vector bundle projection, and $\tilde Y$ denotes the unique vector field on $S$
that is $\pi$-related to $Y$.
\end{proposition}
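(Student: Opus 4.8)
The plan is to deduce all three identities from the corresponding statements on $D^*$ in Proposition~\ref{linearity-bracket} by pulling them back along the orbit projection $\Pi: D^* \to \overline{D^*}$, and then to use the defining relation \eqref{bracket-reduced} together with the surjectivity of $\Pi$ to transfer the conclusions to $\overline{D^*}$. Once the correspondence between functions upstairs and downstairs is made precise, the whole argument is a bookkeeping exercise.

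First I would record the dictionary relating functions on $\overline{D^*}$ to $G$-invariant functions on $D^*$. For an equivariant section $Y \in \Gamma(D)^G$ the associated linear function on $D^*$ is $G$-invariant and descends to the linear function $Y^\ell$ on $\overline{D^*}$, so that $Y^\ell \circ \Pi$ equals the linear function of $Y$ regarded as a section of $D$. Likewise, a basic function $f = \tilde f \circ \overline{\tau}$ on $\overline{D^*}$ pulls back to the basic function $f \circ \Pi = (\tilde f \circ \pi)\circ \tau$ on $D^*$, where I use that the projections commute, $\overline{\tau}\circ \Pi = \pi \circ \tau$; hence the function on $Q$ associated with $f\circ\Pi$ is $\tilde f \circ \pi$.

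With this dictionary, each identity follows by applying \eqref{bracket-reduced} on the left and Proposition~\ref{linearity-bracket} on the right, then cancelling $\Pi$. For the linear-linear bracket the key preliminary observation is that $\mathcal{P}[Y,Z]$ again belongs to $\Gamma(D)^G$: this is because $\mathcal{P}$ is $G$-equivariant (as established in the proof of Proposition~\ref{p:symmetric-nonho}) and the Jacobi-Lie bracket of two equivariant vector fields is equivariant. Consequently $-(\mathcal{P}[Y,Z])^\ell$ is a bona fide linear function on $\overline{D^*}$, and matching $\{Y^\ell,Z^\ell\}_{\overline{D^*}}\circ\Pi = -(\mathcal{P}[Y,Z])^\ell\circ\Pi$ yields the first formula. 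For the basic-linear bracket I would rewrite $Y[\tilde f \circ \pi] = (\tilde Y[\tilde f])\circ \pi$ using that $Y$ is $\pi$-related to $\tilde Y$; Proposition~\ref{linearity-bracket} then produces exactly $(\tilde Y[\tilde f]\circ \overline{\tau})\circ\Pi$, giving the second formula. The basic-basic bracket is immediate, since the pullbacks are basic on $D^*$ and hence bracket to zero.

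I do not anticipate a genuine obstacle. The only points that require a little care are the verification that $\mathcal{P}[Y,Z]$ is equivariant, which is what legitimises the right-hand side of the first identity as a linear function on $\overline{D^*}$, and the consistent use of the commuting square of projections to translate basic functions; the surjectivity of $\Pi$ is what permits cancelling it in each of the three steps.
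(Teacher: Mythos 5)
Your proposal is correct and follows exactly the route the paper intends: the paper simply declares the proposition to be ``a direct consequence of Proposition~\ref{linearity-bracket} and Equation~\eqref{bracket-reduced}'', and your argument supplies precisely the bookkeeping that justifies this, including the two points genuinely worth checking (equivariance of $\mathcal{P}[Y,Z]$ and the commutation $\overline{\tau}\circ\Pi=\pi\circ\tau$).
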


On the other hand, the invariance of the constrained Hamiltonian $h_c$, guaranteed by Proposition~\ref{p:symmetric-nonho}, implies the existence of a \defn{reduced Hamiltonian} 
$H\in C^\infty(\overline{D^*})$ 
such that $h_c=H\circ \Pi$. Also, the equivariance of $X_{nh}$ implies the existence of a reduced vector field $\overline{X}_{nh}$ on $\overline{D^*}$, that is 
$\Pi$-related to $X_{nh}$ and describes the reduced dynamics of the system. As one may expect, we have:
\begin{proposition}
The  reduced vector field  $\overline{X}_{nh}$ may be described in an almost Poisson manner with respect to the reduced almost Poisson bracket $\{\cdot, \cdot \}_{\overline{D^*}}$ 
and the reduced Hamiltonian $H\in C^\infty(\overline{D^*})$. In other words
\begin{equation*}
\label{eq:AlmostPoisson-redSpace}
\overline{X}_{nh}[\varphi] =\{\varphi   , H  \}_{\overline{D^*}}, \qquad \mbox{for all $ \varphi   \in C^\infty(\overline{D^*})$.}
\end{equation*}
\end{proposition}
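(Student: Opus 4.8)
The plan is to reduce the statement to a short diagram chase using the three pieces of data already established in this subsection: that $X_{nh}$ is $\Pi$-related to $\overline{X}_{nh}$, that the constrained Hamiltonian descends as $h_c = H\circ\Pi$, and that the reduced bracket is defined by the invariance relation \eqref{bracket-reduced}. The only genuine manipulation is to translate the $\Pi$-relatedness of the vector fields into a statement about their action on functions, and then to invoke the surjectivity of $\Pi$ to strip off the projection at the end.

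First I would fix an arbitrary $\varphi \in C^\infty(\overline{D^*})$ and evaluate $X_{nh}[\varphi\circ\Pi]$ in two ways. On one hand, since $X_{nh}$ is $\Pi$-related to $\overline{X}_{nh}$, reading the relatedness condition $T\Pi\circ X_{nh} = \overline{X}_{nh}\circ\Pi$ on the function $\varphi$ yields
\[
X_{nh}[\varphi\circ\Pi] = \bigl(\overline{X}_{nh}[\varphi]\bigr)\circ\Pi .
\]
On the other hand, using the definition \eqref{non-holo-dynamics} of $X_{nh}$ together with $h_c = H\circ\Pi$, and then applying the defining relation \eqref{bracket-reduced} of the reduced bracket, I obtain
\[
X_{nh}[\varphi\circ\Pi] = \{\varphi\circ\Pi,\, h_c\}_{D^*} = \{\varphi\circ\Pi,\, H\circ\Pi\}_{D^*} = \{\varphi,H\}_{\overline{D^*}}\circ\Pi .
\]

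Combining the two computations gives $\bigl(\overline{X}_{nh}[\varphi]\bigr)\circ\Pi = \{\varphi,H\}_{\overline{D^*}}\circ\Pi$ as functions on $D^*$. Because $\Pi$ is a surjective submersion, it is in particular surjective, so the two functions on $\overline{D^*}$ whose pullbacks by $\Pi$ coincide must themselves be equal; this yields exactly the claimed identity $\overline{X}_{nh}[\varphi] = \{\varphi,H\}_{\overline{D^*}}$ for every $\varphi \in C^\infty(\overline{D^*})$. There is no real obstacle here: all of the substantive content — the well-definedness of $\{\cdot,\cdot\}_{\overline{D^*}}$, the existence of the reduced Hamiltonian $H$, and the existence of the $\Pi$-related reduced field $\overline{X}_{nh}$ — has already been secured by Proposition~\ref{p:symmetric-nonho} and the invariance relation \eqref{bracket-reduced}, so the present proposition is a purely formal consequence. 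The only points requiring care are keeping track of the direction in which $\Pi$ intertwines the structures and noting that the final cancellation of $\Pi$ is legitimate precisely because $\Pi$ is a surjection.
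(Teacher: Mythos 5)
Your proof is correct and follows essentially the same route as the paper's own argument: compute $X_{nh}[\varphi\circ\Pi]$ in two ways using the $\Pi$-relatedness of the vector fields, the identities $h_c=H\circ\Pi$ and \eqref{bracket-reduced}, and then cancel $\Pi$ by surjectivity. No issues.
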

\begin{proof}
Let  $\varphi \in C^{\infty}(\overline{D^*})$. Using \eqref{non-holo-dynamics} and the fact that $X_{nh}$ is $\Pi$-projectable to $\overline{X}_{nh}$, we have
\[
\overline{X}_{nh}[\varphi] \circ \Pi = X_{nh}[ \varphi \circ \Pi ] = \{\varphi \circ \Pi, h_c \}_{D^*} = \{\varphi \circ \Pi, H \circ \Pi\}_{D^*}=\{\varphi, H\}_{\overline{D^*}} \circ \Pi,
\]
where we have used \eqref{bracket-reduced} in the last equality. Equation~\eqref{eq:AlmostPoisson-redSpace} follows from the above relation since
$\Pi$ is surjective.
\end{proof}

\section{Geometry of nonholonomic Chaplygin systems revisited}
\label{S:Chaplygin}

We now come to our main subject of study which are nonholonomic {\em $G$-Chaplygin systems}. Roughly speaking a nonholonomic $G$-Chaplygin system is a nonholonomic system with symmetry group $G$ for which the symmetry directions
 are incompatible with the constraints. An example is a ball that rolls without slipping on a horizontal plane with symmetry group $G=\R^2$ acting by horizontal 
 translations. The system is obviously invariant under a horizontal translation of the origin of the inertial frame. However,
  a pure horizontal translation of the ball that does not involve rolling violates the nonholonomic constraint. 
The precise definition is the following.
\begin{definition}
\label{D:ChaplyginSyst}
A nonholonomic $G$-\defn{Chaplygin system} is a nonholonomic system with symmetry as defined in section~\ref{SS:nh-symmetry} for which the
following splitting is valid for all $q\in Q$:
\begin{equation}
\label{eq:chap-condition}
T_qQ=D_q \oplus ( \g \cdot q),
\end{equation}
where $ \g$ denotes the Lie algebra of $G$ and $\g \cdot q$  the tangent space to the $G$-orbit through $q$.
\end{definition}

\begin{remark} Suppose that the curve $(q(t),\dot q(t))$ is a solution of the nonholonomic system $G$-Chaplygin system 
with the property that $q(t)$ is contained in a $G$-orbit on $Q$ for all $t\in \R$.
The transversality condition \eqref{eq:chap-condition}, and the nonholonomic constraints $\dot q(t)\in D_{q(t)}$, imply that 
 $\dot q(t)=0$ and hence $q(t)=q_0$, and the solution is an equilibrium of the system. This shows that  the only relative equilibria of a nonholonomic
$G$-Chaplygin system are actual equilibria.
\end{remark}

The study of Chaplygin systems goes back to Chaplygin. There are many references in the literature that focus on the geometry of these systems \cite{Stanchenko, Baksa, Koi, BKMM, EhlersKoiller, CaCoLeMa}.
As mentioned in the introduction, the purpose of this paper is to 
 show that the main features of $G$-Chaplygin systems are conveniently encoded in the  {\em gyroscopic tensor} which is a $(1,2)$-tensor field on 
$S$ that  measures the interplay between the kinetic energy metric and the non-integrability of the constraint distribution. 
We also identify some conditions on the gyroscopic tensor that imply measure preservation and Hamiltonisation. The relationship between 
our definition of the gyroscopic tensor and other tensors that have appeared before in 
the  literature is discussed in subsection \ref{SS:history}.

We shall denote by $\pi:Q\to Q/G:=S$ the principal bundle projection, and continue to refer to the base manifold 
$S$ as  the  \defn{shape space}. Note that condition  \eqref{eq:chap-condition}
forces the dimension of $S$
to coincide with the rank $r \geq 2$ of $D$, and the dimension of $G$ to be $n-r$. We will say that the Chaplygin nonholonomic
system has \defn{$r$ degrees of freedom}.

\subsection{The gyroscopic tensor}

In order to give the definition of the gyroscopic tensor, we recall from  Koiller \cite{Koi} that the condition~\eqref{eq:chap-condition} implies that 
the fibres of $D$  may be interpreted as the horizontal spaces of a principal connection on the principal $G$-bundle $\pi : Q\to S$.  Such a principal 
connection defines a \defn{horizontal lift } that to a vector field $Y$ on $S$ assigns the equivariant vector field hor $Y$ on $Q$ taking values on the fibres of $D$ and  
which is $\pi$-related to $Y$.

\begin{definition}
\label{D:defGyrTensor}
 Let $Y, Z \in \mathfrak{X}(S)$. The  \defn{gyroscopic tensor}  $\mathcal{T}$ is defined by assigning to $Y, Z$ the following vector
field on $S$:
\begin{equation}\label{gyroscopic-tensor}
\mathcal{T}(Y,Z)(s)=(T_q\pi) \left ( \mathcal{P} \left [ \mbox{ hor  $Y$} \, , \, \mbox{hor $Z$} \,  \right ] (q)\right ) - [Y,Z](s),
\end{equation}
for $s \in S$, with $q \in Q$ and $\pi(q) = s$, and where $\mathcal{P}:TQ\to D$ denotes the orthogonal projector.
\end{definition}

 We begin by proving that $\mathcal{T}$ is well defined and is indeed a tensor.

\begin{proposition}
\label{P:GyrTensor-well-defined}
The gyroscopic tensor $\mathcal{T}$ is well defined and is a skew-symmetric tensor field of type $(1,2)$ on $S$.
\end{proposition}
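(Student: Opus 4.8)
We must show that the gyroscopic tensor $\mathcal{T}$, defined by
$$\mathcal{T}(Y,Z)(s)=(T_q\pi)\left(\mathcal{P}[\mathrm{hor}\,Y,\mathrm{hor}\,Z](q)\right)-[Y,Z](s),$$
is (a) well-defined, i.e. independent of the choice of $q$ in the fiber $\pi^{-1}(s)$; (b) $C^\infty(S)$-bilinear in $Y,Z$ (so that it is genuinely a tensor and not merely a bidifferential operator); and (c) skew-symmetric. The skew-symmetry is immediate: both $\mathcal{P}[\mathrm{hor}\,Y,\mathrm{hor}\,Z]$ and $[Y,Z]$ are skew in $(Y,Z)$, so I would dispose of (c) in one line and concentrate on (a) and (b).

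**Plan of proof.** My plan is to reduce everything to standard facts about horizontal lifts for principal connections. The key structural input is: for $Y\in\mathfrak{X}(S)$, the horizontal lift $\mathrm{hor}\,Y$ is the unique $G$-equivariant section of $D$ that is $\pi$-related to $Y$. I would use three properties of the lift. First, $\mathrm{hor}$ is $C^\infty(S)$-linear up to a horizontal correction: $\mathrm{hor}(fY)=(f\circ\pi)\,\mathrm{hor}\,Y$, since scaling a vector field upstairs scales its image under $T\pi$ and keeps it horizontal. Second, $\pi$-relatedness is compatible with Lie brackets: if $X_1$ is $\pi$-related to $Y_1$ and $X_2$ to $Y_2$, then $[X_1,X_2]$ is $\pi$-related to $[Y_1,Y_2]$. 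Third, $\mathcal{P}$ and $\mathrm{hor}$ are $G$-equivariant, which is exactly what Proposition~\ref{p:symmetric-nonho} and the surrounding discussion provide.

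**Well-definedness via equivariance.** For (a), I would fix $q$ and $q'=g\cdot q$ in the same fiber. The equivariance of $\mathrm{hor}\,Y$, $\mathrm{hor}\,Z$, and of $\mathcal{P}$ (the latter from~\eqref{P-star-equivariant}) shows that $\mathcal{P}[\mathrm{hor}\,Y,\mathrm{hor}\,Z]$ is an equivariant section of $D$; applying $T_{q'}\pi=T_q\pi\circ (Tg)^{-1}$ and using $T\pi\circ Tg=T\pi$ (since $\pi\circ g=\pi$) yields the same vector in $T_sS$ whether evaluated at $q$ or at $q'$. Thus the first term is a well-defined vector field on $S$, and the second term $[Y,Z]$ is intrinsically defined on $S$ to begin with.

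**Tensoriality as the main obstacle.** The genuine content, and the step I expect to be most delicate, is (b): showing the expression is pointwise-linear rather than merely bilinear-over-$\mathbb{R}$. Each of the two pieces, taken alone, is a differential operator that fails $C^\infty(S)$-linearity, and the tensoriality must emerge from cancellation of the derivative terms. Concretely I would compute $\mathcal{T}(fY,Z)$ for $f\in C^\infty(S)$. Using $\mathrm{hor}(fY)=(f\circ\pi)\,\mathrm{hor}\,Y$ and the Leibniz rule for the bracket,
$$[\,(f\circ\pi)\,\mathrm{hor}\,Y,\,\mathrm{hor}\,Z\,]=(f\circ\pi)\,[\mathrm{hor}\,Y,\mathrm{hor}\,Z]-(\mathrm{hor}\,Z)[f\circ\pi]\,\mathrm{hor}\,Y.$$
Since $\mathrm{hor}\,Z$ is $\pi$-related to $Z$, one has $(\mathrm{hor}\,Z)[f\circ\pi]=(Z[f])\circ\pi$; and because $\mathrm{hor}\,Y$ already lies in $D$, the projector $\mathcal{P}$ fixes that correction term. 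Projecting by $T\pi$ and comparing with the corresponding Leibniz expansion of $[fY,Z]=f[Y,Z]-(Z[f])Y$ on $S$, the two $Z[f]$-terms cancel exactly, leaving $\mathcal{T}(fY,Z)=(f\circ\pi\text{-scaled})\,\mathcal{T}(Y,Z)=f\,\mathcal{T}(Y,Z)$ after projection. The same cancellation in the second slot then follows from skew-symmetry. Verifying that this cancellation is exact — in particular that $\mathcal{P}$ does not disturb the already-horizontal correction term, and that $T_q\pi$ sends $(Z[f]\circ\pi)\,\mathrm{hor}\,Y$ to $(Z[f])\,Y$ — is the crux; once it is checked, tensoriality of type $(1,2)$ is established and the proof is complete.
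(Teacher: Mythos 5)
Your proposal is correct and follows essentially the same route as the paper's proof: well-definedness via the $G$-equivariance of $\mathrm{hor}\,Y$, $\mathrm{hor}\,Z$ and of the projector $\mathcal{P}$ (Equation~\eqref{P-star-equivariant}) combined with $T\pi\circ Tg=T\pi$, and tensoriality via the Leibniz expansion of $[(f\circ\pi)\,\mathrm{hor}\,Y,\mathrm{hor}\,Z]$, the identity $(\mathrm{hor}\,Z)[f\circ\pi]=Z[f]\circ\pi$, and the fact that $\mathcal{P}$ fixes the already-horizontal correction term, so that the derivative terms cancel against those of $[fY,Z]$. The paper likewise dispatches skew-symmetry as obvious and handles the second slot by skew-symmetry, so there is nothing to add.
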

\begin{proof}
First we prove  that ${\mathcal T}$ is well defined.
Let $q, q' \in Q$  such that $\pi(q) = \pi(q')$. Then there exists $g \in G$ satisfying
$
q' = g\cdot q.
$
Since  $\mbox{ hor } Y$ and $\mbox{ hor }Z$ are equivariant, the same is true about their  Lie bracket,   and hence
\begin{equation*}
\left [ \mbox{ hor  $Y$} \, , \, \mbox{hor $Z$} \,  \right ] (g \cdot q) = g\cdot  \left [ \mbox{ hor  $Y$} \, , \, \mbox{hor $Z$} \,  \right ] (q).
\end{equation*}
This equation, together with the equivariance of the projector ${\mathcal P}$ shown in Equation~\eqref{P-star-equivariant} above, implies
\[
{\mathcal P} \left [ \mbox{ hor  $Y$} \, , \, \mbox{hor $Z$} \,  \right ](g \cdot q) = g \cdot {\mathcal P} \left [ \mbox{ hor  $Y$} \, , \, \mbox{hor $Z$} \,  \right ](q).
\]
Therefore,
\[
(T_{q'}\pi) \left ( \mathcal{P} \left [ \mbox{ hor  $Y$} \, , \, \mbox{hor $Z$} \,  \right ] (q')\right ) = (T_q\pi) \left ( \mathcal{P} \left [ \mbox{ hor  $Y$} \, , \, \mbox{hor $Z$} \,  \right ] (q)\right ),
\]
and ${\mathcal T}$ is well defined.

Now,  it is clear that ${\mathcal T}$ is  $\R$-bilinear, so, in order to prove that $\mathcal{T}$ is a tensor field, we only need to show that 
\begin{equation}\label{linearity-functions}
{\mathcal T}(fY, Z) = f {\mathcal T}(Y, Z), \; \; \mbox{ for } f \in C^{\infty}(S).
\end{equation}
To prove this first notice that $$ \mbox{ hor  $fY$} = (f\circ \pi)\mbox{hor $Y$},$$
so, using that  $\mbox{hor $Z$}$ is invariant and $\pi$-projectable onto $Z$ together with the standard properties of the Lie bracket, we have
\begin{equation*}
\begin{split}
 [  \mbox{ hor  $fY$}  \, , \, \mbox{hor $Z$}] (q) &=(f\circ \pi) (q) [  \mbox{ hor  $Y$}  \, \, \mbox{hor $Z$}] (q) -  ( \mbox{hor $Z$} )[f\circ \pi](q)  \mbox{ hor  $Y$}(q) \\ &=
f(s)  [  \mbox{ hor  $Y$}  \, , \, \mbox{hor $Z$}] (q) - Z[f](s) \mbox{ hor  $Y$}(q),
\end{split}
\end{equation*}
where $\pi(q)=s$. Since $\mbox{ hor } Y$ is a section of $D$, then ${\mathcal P}(\mbox{ hor }Y) = \mbox{ hor } Y$, and therefore,
\begin{equation*}
\begin{split}
\mathcal{P} [  \mbox{ hor  $fY$}  \, , \, \mbox{hor $Z$}] (q)  =
f(s) \mathcal{P}  [  \mbox{ hor  $Y$}  \, , \, \mbox{hor $Z$}] (q) - Z[f](s) \mbox{ hor  $Y$}(q).
\end{split}
\end{equation*}
Finally, given that  $\mbox{ hor  $Y$}$ is $\pi$-projectable onto $Y$ we obtain
\begin{equation}
\label{eq:tensorproof1}
(T_q\pi)( \mathcal{P} [  \mbox{ hor  $fY$}  \, , \, \mbox{hor $Z$}] (q)) = - Z[f](s) Y(s) + f(s) (  T_q\pi )( \mathcal{P}  [  \mbox{ hor  $Y$}  \, , \, \mbox{hor $Z$}] (q)).
\end{equation}
On the other hand, we have
\begin{equation}
\label{eq:tensorproof2}
[fY,Z](s) = f(s) [Y,Z](s) - Z[f](s) Y(s).
\end{equation}
The proof of \eqref{linearity-functions} follows immediately by substituting equations \eqref{eq:tensorproof1} and \eqref{eq:tensorproof2} into the
 definition \eqref{gyroscopic-tensor} of the gyroscopic tensor $\mathcal T$.
 The skew-symmetry of $\mathcal T$ is obvious.
\end{proof}

We proceed to show that    $\mathcal{T}=0$ if the constraints are holonomic.
\begin{proposition}
The gyroscopic tensor $\mathcal{T}$ vanishes if the constraints are holonomic.
\end{proposition}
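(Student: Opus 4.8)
The plan is to reduce everything to the single observation that holonomic constraints correspond precisely to an integrable constraint distribution $D$; this is exactly the equivalence recorded in Theorem~\ref{Integra-almost-Poisson} (via the Frobenius theorem). Since the horizontal lifts $\mbox{hor } Y$ and $\mbox{hor } Z$ are, by construction, sections of $D$, integrability guarantees that their Jacobi--Lie bracket $[\mbox{hor } Y, \mbox{hor } Z]$ is again a section of $D$.

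First I would record the consequence of this for the orthogonal projector. Because $\mbox{hor } Y$ and $\mbox{hor } Z$ take values in the fibres of $D$, and integrability forces $[\mbox{hor } Y, \mbox{hor } Z](q) \in D_q$ for all $q \in Q$, the projector $\mathcal{P}: TQ \to D$ acts as the identity on this bracket:
\begin{equation*}
\mathcal{P}\left[\mbox{hor } Y, \mbox{hor } Z\right](q) = \left[\mbox{hor } Y, \mbox{hor } Z\right](q), \qquad q \in Q.
\end{equation*}
This is the only step in which holonomicity is used, and it is precisely the identity that fails in the genuinely nonholonomic situation.

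Next I would invoke the elementary fact that the Jacobi--Lie bracket of $\pi$-related vector fields is itself $\pi$-related. Since $\mbox{hor } Y$ is $\pi$-related to $Y$ and $\mbox{hor } Z$ is $\pi$-related to $Z$, the bracket $[\mbox{hor } Y, \mbox{hor } Z]$ is $\pi$-related to $[Y,Z]$, so for $q \in Q$ with $\pi(q)=s$ one has
\begin{equation*}
(T_q\pi)\left(\left[\mbox{hor } Y, \mbox{hor } Z\right](q)\right) = [Y,Z](s).
\end{equation*}

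Finally, substituting the two previous displays into the defining formula~\eqref{gyroscopic-tensor} yields
\begin{equation*}
\mathcal{T}(Y,Z)(s) = (T_q\pi)\left(\mathcal{P}\left[\mbox{hor } Y, \mbox{hor } Z\right](q)\right) - [Y,Z](s) = [Y,Z](s) - [Y,Z](s) = 0,
\end{equation*}
so that $\mathcal{T} \equiv 0$. I do not expect any genuine obstacle here: the entire content of the argument is concentrated in the identity $\mathcal{P}[\mbox{hor } Y, \mbox{hor } Z] = [\mbox{hor } Y, \mbox{hor } Z]$, which is exactly the obstruction that $\mathcal{T}$ measures in the non-integrable case. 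The only point requiring a moment's care is the justification that integrability of $D$ makes the bracket of the (horizontal) sections remain a section of $D$, which is the Frobenius characterisation already established in Theorem~\ref{Integra-almost-Poisson}.
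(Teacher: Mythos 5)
Your proof is correct and follows essentially the same route as the paper's: integrability of $D$ implies involutivity, so $\mathcal{P}$ fixes $[\mbox{hor }Y,\mbox{hor }Z]$, and $\pi$-relatedness then identifies its projection with $[Y,Z]$. The only cosmetic point is that the appeal to Theorem~\ref{Integra-almost-Poisson} is unnecessary — the relevant fact is just the Frobenius characterisation of integrability, which the paper invokes directly.
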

\begin{proof}
Let $Y$,  $Z$ be vector fields on $S = Q/G$. If $D$ is integrable, then it is involutive, and hence 
$
[\mbox{ hor } Y, \mbox{ hor }Z] \in \Gamma(D).
$
Thus, 
\[
{\mathcal P}[\mbox{ hor } Y, \mbox{ hor }Z] = [\mbox{ hor } Y, \mbox{ hor }Z].
\]
Moreover, given that the vector fields $\mbox{ hor }Y$ and $\mbox{ hor }Z$ are $\pi$-projectable on $Y$ and $Z$, their
 Lie bracket $[\mbox{ hor } Y, \mbox{ hor }Z]$ is $\pi$-projectable on $[Y, Z]$. Therefore, for any $q \in Q$, we have
\[
(T_q\pi)({\mathcal P}[\mbox{ hor } Y, \mbox{ hor }Z](q)) =  [Y, Z](\pi(q))
\]
which implies that ${\mathcal T} = 0$.
\end{proof}

On the other hand, the vanishing of the gyroscopic tensor $\mathcal T$ does {\em not} imply that the constraints are holonomic. A simple example 
to illustrate this is the motion of a {\em vertical rolling disk}  that rolls without sliding on the plane that we present at the end of this section.
Before doing that, we give local expressions for the gyroscopic tensor.

Let $(s^1, \dots, s^r)$ be local coordinates on $S$. Then $\mathcal{ T} $ is determined by its action on the coordinate vector fields as 
\begin{equation*}
\mathcal{ T} \left ( \frac{\partial }{\partial s^i} \, , \,  \frac{\partial }{\partial s^j}  \right ) =  \sum_{k=1}^r C_{ij}^k (s) \frac{\partial }{\partial s^k},
\end{equation*}
where the coefficients $C_{ij}^k (s)$ are defined by the relations
\begin{equation}
\label{eq:gyrosccoeff}
\mathcal{ P} \left ( \left [ \mbox{hor}_q  \left ( \frac{\partial }{\partial s^i}  \right ) \, , \,   \mbox{hor}_q  \left ( \frac{\partial }{\partial s^j}  \right ) \right ]  \right ) = 
 \sum_{k=1}^r C_{ij}^k (s)  \mbox{hor}_q \left ( \frac{\partial }{\partial s^k} \right ).
\end{equation}
The above relation follows immediately from  Definition~\ref{D:defGyrTensor} since the commutator of the coordinate vector fields vanishes. 
Following~\cite{LGN18}, we refer to   $C_{ij}^k (s)$ as \defn{ the gyroscopic coefficients}. Note that the skew-symmetry of $\mathcal{T}$ implies
that $C_{ij}^k (s)=-C_{ji}^k(s)$.

 We close this section by presenting some of the details
of the calculation that shows that    $\mathcal T =0$ for the vertical rolling disk. In our treatment we follow the notation of \cite{BlochBook}.

\vspace{0.3cm}
\noindent {\bf Example: The vertical rolling disk.} The configuration space for the system is $Q=\R^2\times S^1\times S^1\ni (x,y,\varphi, \theta)$. The coordinates $(x,y)$ and the angle $\varphi$ specify, respectively, the contact point and the orientation
of the disk with respect to an inertial frame $\{e_1, e_2\}$. On the other hand, $\theta$ denotes an internal angle of the disk (see Figure~\ref{F:Disk}).

\begin{figure}[hh]
\centering
\includegraphics[width=10cm]{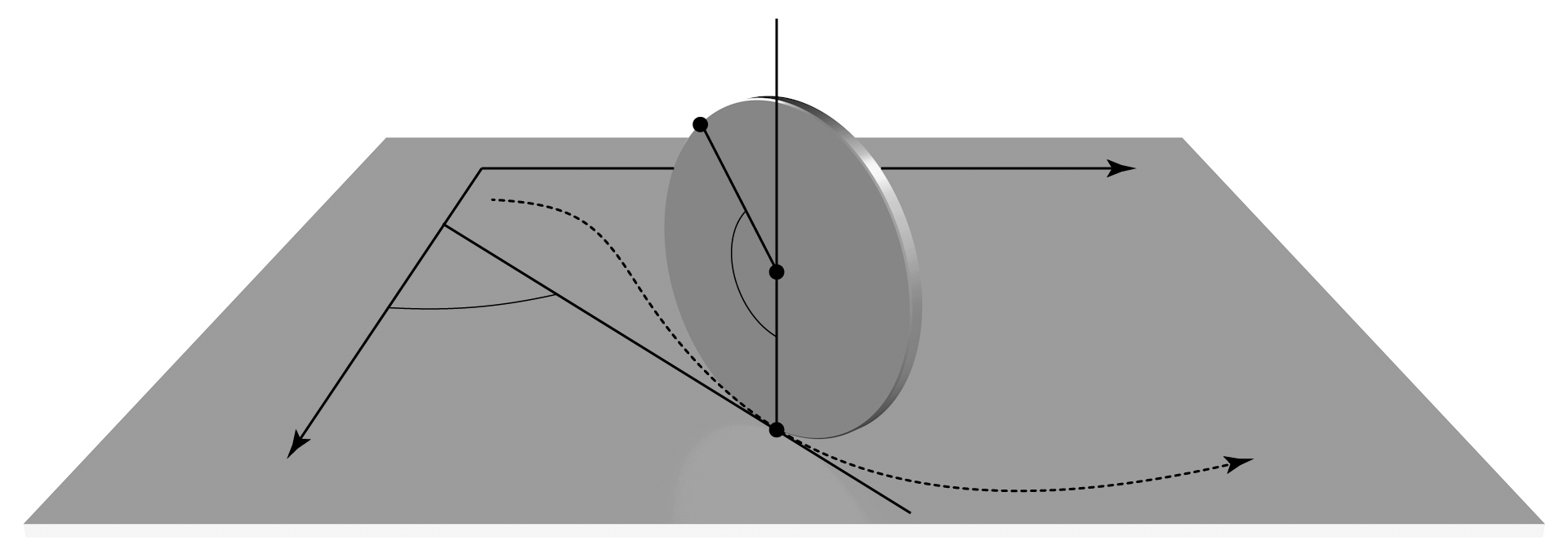}
 \put (-160,48) {$\theta$}  \put (-202,38) {$\varphi$}  \put (-168,15) {$(x,y)$} 
   \put (-242,25) {$e_1$}    \put (-92,63) {$e_2$}
\caption{\small{Vertical rolling disk.}\label{F:Disk} }
\end{figure}

The constraints of rolling without slipping  are
\begin{equation*}
\dot x = R\cos \varphi \dot \theta, \qquad \dot y = R\sin \varphi \dot \theta,
\end{equation*}
where $R$ is the radius of the disk, and hence
\begin{equation}
 \label{eq:Drollingdisk}
D=\mbox{span} \left \{ \partial_\varphi, \partial_\theta + R\cos \varphi \partial_x + R\sin \varphi \partial_y \right \}.
\end{equation}
We assume that the disk is homogeneous so the pure kinetic energy Lagrangian is given by
 \begin{equation}
 \label{eq:Lagrollingdisk}
L=\frac{1}{2} \left ( m(\dot x^2 +\dot y^2)+ I \dot \varphi^2 + J \dot \theta^2  \right ),  
\end{equation}
where $m$ is the mass of the disk and $I$ and $J$ are  the moments of inertia of the disk with respect to the axes that 
pass through the disk's center and
are, respectively,
normal to the plane and normal to the surface of the disk.

The system may be considered as a $G$-Chaplygin system with $G=\R^2$ acting by translations. The shape space is the 2-torus $\mathbb{T}^2$ 
with coordinates $(\varphi, \theta)$ and bundle projection $\pi:Q\to \mathbb{T}^2$ given by $\pi(x,y,\varphi, \theta) =(\varphi, \theta)$. The horizontal lifts of the coordinate vector fields are
\begin{equation*}
\mbox{hor } \partial_\varphi =  \partial_\varphi, \qquad \mbox{hor } \partial_\theta =  \partial_\theta + R\cos \varphi \partial_x + R\sin \varphi \partial_y .
\end{equation*}
Therefore,
\begin{equation*}
[\mbox{ hor }  \partial_\varphi, \mbox{ hor }  \partial_\theta ] = - R\sin \varphi \partial_x + R\cos \varphi \partial_y .
\end{equation*}
It is immediate to check that the above vector field on $Q$ is perpendicular to $D$ given by \eqref{eq:Drollingdisk} with respect to the
Riemannian metric defined by the Lagrangian \eqref{eq:Lagrollingdisk}. It follows that 
$\mathcal{P}[\mbox{ hor }  \partial_\varphi, \mbox{ hor }  \partial_\theta ]=0$
and hence also $\mathcal{T}(  \partial_\varphi,  \partial_\theta)=0$. Therefore $\mathcal{T}$ vanishes identically as claimed.

\subsection{Almost symplectic structure of the reduced dynamics}
\label{SS:Almostsymplectic}

The reduced equations for nonholonomic Chaplygin systems can be formulated in almost symplectic form. Namely, the reduced vector field 
$\overline{X}_{nh}$
describing the reduced dynamics is determined by an equation of the form  ${\bf i}_{\overline{X}_{nh}}\Omega_{nh}=dH$, where $H$ is the 
reduced Hamiltonian
and $\Omega_{nh}$ is a non-degenerate 2-form which is not necessarily closed.  This structure of the equations  seems to have been first noticed  by Stanchenko \cite[Theorem 1]{Stanchenko} in the case of an abelian symmetry
group $G$, and by Cantrijn et al \cite[Equation (17)]{CaCoLeMa}  in the general case. 
This formulation of the equations is useful because the gyroscopic reaction forces that make the system non-Hamiltonian are 
encoded in  the `non-closed' part of $\Omega_{nh}$, and this interpretation 
allows one to give a geometric interpretation of Chaplygin's multiplier method for   Hamiltonisation  (see Section~\ref{Chap-Hamiltonisation} below).

As explained by Ehlers et al in  \cite{EhlersKoiller},  (see also \cite{HochGN}), a construction of the almost symplectic 2-form $\Omega_{nh}$ may be
given utilising the momentum map of the $G$-action and the curvature of the principal connection defined by the constraints. In this section we give an alternative construction
of $\Omega_{nh}$ in terms of the {\em gyroscopic 2-form} $\Omega_\mathcal{T}$, that is defined by~\eqref{eq:defgyrotwoform} below 
using the gyroscopic tensor $\mathcal{T}$
in a way that resembles the definition of the Liouville 1-form on a cotangent bundle. 
The equivalence of the two approaches is proved at the end of Section~\ref{SS:history}.
The main result of this section is Theorem~\ref{T:almostsymplectic}.
 We begin with the following:

\begin{proposition}
The reduced space $\overline{D^*}={D^*}/G$ is naturally identified
with the cotangent bundle $T^*S$ (recall that $S=Q/G$ is the shape space). 
\end{proposition}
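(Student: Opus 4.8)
The plan is to exhibit an explicit vector bundle isomorphism between $\overline{D^*} = D^*/G$ and $T^*S$ covering the identity on the shape space $S = Q/G$, and to verify that it is well defined and smooth. Recall from the Chaplygin splitting \eqref{eq:chap-condition} that the fibres of $D$ serve as the horizontal spaces of a principal connection, so that for each $s \in S$ and each choice of $q \in \pi^{-1}(s)$ the horizontal lift $\mbox{hor}_q : T_sS \to D_q$ is a linear isomorphism that is $\pi$-related to the identity. Dualising, I obtain a linear isomorphism $(\mbox{hor}_q)^* : D_q^* \to T_s^*S$. The natural candidate for the identification is therefore the map sending the class $[\alpha]_G \in \overline{D^*}$ of a covector $\alpha \in D_q^*$ to the covector $(\mbox{hor}_q)^*(\alpha) \in T_s^*S$.

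First I would check that this map is well defined, i.e.\ independent of the representative $q$ chosen in the fibre and of the choice of $\alpha$ within its $G$-orbit. If $q' = g\cdot q$, then equivariance of the horizontal lift gives $\mbox{hor}_{q'} = Tg \circ \mbox{hor}_q \circ (T_s\,\mathrm{id})$, more precisely $\mbox{hor}_{g\cdot q}(v) = g\cdot \mbox{hor}_q(v)$ for $v \in T_sS$, since both sides lie in $D_{g\cdot q}$, project to $v$ under $T\pi$, and $\mbox{hor}$ is equivariant by construction. The definition of the cotangent-lifted $G$-action on $D^*$ recorded in Proposition~\ref{p:symmetric-nonho}, namely $(g\cdot\alpha)(u) = \alpha(g^{-1}\cdot u)$, then yields
\begin{equation*}
\big((\mbox{hor}_{g\cdot q})^*(g\cdot\alpha)\big)(v) = (g\cdot\alpha)(\mbox{hor}_{g\cdot q}(v)) = (g\cdot\alpha)(g\cdot\mbox{hor}_q(v)) = \alpha(\mbox{hor}_q(v)) = \big((\mbox{hor}_q)^*(\alpha)\big)(v),
\end{equation*}
for all $v \in T_sS$. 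Thus the value $(\mbox{hor}_q)^*(\alpha)$ depends only on the orbit $[\alpha]_G$, and the map $\overline{D^*}\to T^*S$ is well defined.

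Next I would verify that the map is a fibrewise linear isomorphism covering $\mathrm{id}_S$: fibrewise linearity is immediate since $(\mbox{hor}_q)^*$ is linear, and bijectivity on fibres follows from the fact that $\mbox{hor}_q$ is an isomorphism, so $(\mbox{hor}_q)^*$ is too, while the $G$-action on $D^*$ is free (Proposition~\ref{p:symmetric-nonho}) so passing to the quotient does not collapse fibres. Smoothness of the map and of its inverse can be established by working in a local trivialisation of the principal bundle $\pi : Q \to S$ over an open set $U \subseteq S$: choosing a local section of $\pi$ gives a smooth frame for $D$ and hence smooth coordinates on $D^*$, and the horizontal lift depends smoothly on the base point, so the induced identification is smooth; the inverse is obtained by applying $((\mbox{hor}_q)^*)^{-1} = ((\mbox{hor}_q)^{-1})^*$, which is equally smooth.

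The only genuinely delicate point is the equivariance identity $\mbox{hor}_{g\cdot q}(v) = g\cdot\mbox{hor}_q(v)$ and its interaction with the dualised action, so I expect the main obstacle to be keeping the directions of the various duals and group actions consistent; everything else is a routine check of linearity, smoothness, and bijectivity. Once the isomorphism is in hand, the identification $\overline{D^*}\cong T^*S$ is canonical in the sense that it does not depend on auxiliary choices, which is what makes it natural. This identification is exactly what is needed to transport the reduced almost Poisson structure $\{\cdot,\cdot\}_{\overline{D^*}}$ to $T^*S$ and to express the reduced dynamics in the almost symplectic form \eqref{eq:almostsymplectic}, as will be carried out in Theorem~\ref{T:almostsymplectic}.
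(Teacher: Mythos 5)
Your map $[\alpha]_G \mapsto (\mbox{hor}_q)^*(\alpha)$ is exactly the identification $({\mathcal I}^{-1})^*$ used in the paper, which obtains it by first exhibiting the isomorphism ${\mathcal I}:D/G\to TS$, $[v]\mapsto (T_q\pi)(v)$, with inverse the horizontal lift, and then dualising. Your proposal is correct and follows essentially the same route, only supplying in more detail the equivariance and well-definedness checks that the paper leaves implicit.
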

\begin{proof}
 There is a vector bundle isomorphism ${\mathcal I}: D/G \to TS$
defined by ${\mathcal I}([v] )= (T_q\pi)(v)$ for  $v \in D_q$.
The inverse morphism is given by ${\mathcal I}^{-1}(u) = [\mbox{hor}_q u]$,
where  $ u \in T_{\pi(q)}S$, and $\mbox{hor}_q: T_{\pi(q)}S \to D_q \subseteq T_qQ$ is the horizontal lift to $D_q$ induced by the principal connection.
The dual isomorphisms, ${\mathcal I}^*: T^*S \to \overline{D^*}$ and $({\mathcal I}^{-1})^*: \overline{D^*} \to T^*S$, define our desired identification
and are given by
\begin{equation}\label{Dual-isomorphisms}
{\mathcal I}^*(\alpha) = [ (T_q^*\pi)(\alpha)_{|D_q}], \qquad ({\mathcal I}^{-1})^*([\beta])(u) = \beta(\mbox{hor}_qu),
\end{equation}
for $\alpha \in T^*_{\pi(q)}S$, $\beta \in D^*_q$ and $u \in T_{\pi(q)}S$.
\end{proof} 

The proposition above allows us to transfer the reduced almost Poisson structure  described by Proposition~\ref{linearity-bracket-red} on $\overline{D^*}$
onto $T^*S$. The resulting bracket on $T^*S$, that will be denoted by $\{\cdot, \cdot\}_{T^*S}$,
  is again linear and the following proposition, whose proof is postponed until the end of this subsection,
   gives its description in terms of the gyroscopic tensor 
$\mathcal{T}$. Note that we continue using the construction outlined in section~\ref{SS:APoissonNonho} for general vector bundles and identify 
  the linear functions on $T^*S$  with vector fields on $S$.

\begin{proposition}\label{bracket-reduced-linear-basic}
Let   $Y^\ell, Z^\ell $ be the linear functions on $T^*S$ corresponding to the vector fields  $Y, \, Z\in  {\frak X}(S)$, and let $f, k$ be basic functions on $T^*S$. Then,
\begin{equation*}
\begin{split}
\{ Y^\ell, Z^\ell \}_{T^*S} = - [Y, Z]^\ell - {\mathcal T}(Y, Z)^\ell, \qquad
 \{  f , Y^\ell \}_{T^*S} = Y[ \tilde f] \circ \tau_S, \qquad  \{  f , k \}_{T^*S}=0,
\end{split}
\end{equation*}
where $f= \tilde f\circ \tau_S$ and $\tau_S: T^*S \to S$ is the canonical projection.
\end{proposition}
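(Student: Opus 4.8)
The plan is to transfer the reduced bracket from $\overline{D^*}$ onto $T^*S$ through the dual isomorphism $(\mathcal{I}^{-1})^*:\overline{D^*}\to T^*S$ of the previous proposition, and then to read off the three formulas from Proposition~\ref{linearity-bracket-red} together with Definition~\ref{D:defGyrTensor}. By construction the transferred bracket is characterised by
\[
\{\varphi,\psi\}_{T^*S}=\{\varphi\circ(\mathcal{I}^{-1})^*,\,\psi\circ(\mathcal{I}^{-1})^*\}_{\overline{D^*}}\circ\mathcal{I}^*,\qquad \varphi,\psi\in C^\infty(T^*S),
\]
so the whole computation reduces to understanding how $(\mathcal{I}^{-1})^*$ intertwines the two distinguished classes of functions, the linear and the basic ones.

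First I would record the key identifications at the level of functions. Using the explicit formula $(\mathcal{I}^{-1})^*([\beta])(u)=\beta(\mbox{hor}_q u)$ from \eqref{Dual-isomorphisms}, a direct fibrewise evaluation shows that for $Y\in\mathfrak{X}(S)$ the linear function $Y^\ell$ on $T^*S$ pulls back to the linear function $(\mbox{hor}\,Y)^\ell$ on $\overline{D^*}$, where $\mbox{hor}\,Y\in\Gamma(D)^G$. Likewise, from $\tau_S\circ(\mathcal{I}^{-1})^*=\overline{\tau}$ one sees that a basic function $f=\tilde f\circ\tau_S$ on $T^*S$ pulls back to the basic function $\tilde f\circ\overline{\tau}$ on $\overline{D^*}$. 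Thus the distinguished classes correspond under the isomorphism, with a vector field $Y$ on $S$ matched to its horizontal lift $\mbox{hor}\,Y$, whose $\pi$-projection is again $Y$.

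With these identifications in hand the three formulas follow from Proposition~\ref{linearity-bracket-red} applied to the equivariant sections $\mbox{hor}\,Y,\mbox{hor}\,Z\in\Gamma(D)^G$. The basic--basic and basic--linear brackets transfer immediately; the latter, using that $\mbox{hor}\,Y$ is $\pi$-related to $Y$ together with $\overline{\tau}\circ\mathcal{I}^*=\tau_S$, yields $\{f,Y^\ell\}_{T^*S}=Y[\tilde f]\circ\tau_S$. For the linear--linear bracket Proposition~\ref{linearity-bracket-red} gives $-(\mathcal{P}[\mbox{hor}\,Y,\mbox{hor}\,Z])^\ell$ on $\overline{D^*}$, and the one substantive step, where the gyroscopic tensor enters, is to push this back to $T^*S$. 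I would prove the auxiliary identity that for any $W\in\Gamma(D)^G$ the function $W^\ell\circ\mathcal{I}^*$ on $T^*S$ equals $\tilde W^\ell$, with $\tilde W$ the $\pi$-projection of $W$; this again follows by a fibrewise evaluation using the formula $\mathcal{I}^*(\alpha)=[(T^*_q\pi)(\alpha)_{|D_q}]$ and the fact that $W(q)\in D_q$. Taking $W=\mathcal{P}[\mbox{hor}\,Y,\mbox{hor}\,Z]$, Definition~\ref{D:defGyrTensor} identifies its projection as $\tilde W=[Y,Z]+\mathcal{T}(Y,Z)$, so the transferred bracket equals $-[Y,Z]^\ell-\mathcal{T}(Y,Z)^\ell$, as claimed.

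The main obstacle is not any serious computation but careful bookkeeping: keeping the two mutually inverse isomorphisms $\mathcal{I}^*$ and $(\mathcal{I}^{-1})^*$ straight and verifying the compatibility relations $\tau_S\circ(\mathcal{I}^{-1})^*=\overline{\tau}$, $\overline{\tau}\circ\mathcal{I}^*=\tau_S$, and $W^\ell\circ\mathcal{I}^*=\tilde W^\ell$ directly from \eqref{Dual-isomorphisms}. Once the correspondence $Y\leftrightarrow\mbox{hor}\,Y$ is correctly set up, the appearance of the extra term $\mathcal{T}(Y,Z)^\ell$ is forced by the very definition of the gyroscopic tensor, and nothing further is needed.
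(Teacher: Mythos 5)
Your proposal is correct and follows essentially the same route as the paper's proof: transfer the bracket through $(\mathcal{I}^{-1})^*$, identify $Y^\ell$ with $(\mbox{hor}\,Y)^\ell$ and $\tilde f\circ\tau_S$ with $\tilde f\circ\overline{\tau}$, apply Proposition~\ref{linearity-bracket-red}, and read off the extra term from Definition~\ref{D:defGyrTensor}. The only difference is cosmetic: you package the paper's inline chain of equalities for $(\mathcal{P}[\mbox{hor}\,Y,\mbox{hor}\,Z])^\ell\circ\mathcal{I}^*$ into the auxiliary identity $W^\ell\circ\mathcal{I}^*=\tilde W^\ell$, whose validity for $W=\mathcal{P}[\mbox{hor}\,Y,\mbox{hor}\,Z]$ is exactly the projectability established in Proposition~\ref{P:GyrTensor-well-defined}.
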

Let $(s^1, \dots, s^r)$ be local coordinates on  $S$ and let $(s^1, \dots, s^r, p_1, \dots p_r)$ be the induced bundle coordinates on
$T^*S$ (i.e. an element $\alpha\in T^*S$ is written as $\alpha=\sum p_ids^i$). We have $\left ( \frac{\partial}{\partial s^j} \right )^\ell=p_j$ and therefore the above proposition implies that the  
almost Poisson bracket $\{\cdot, \cdot\}_{T^*S}$ is given locally by 
\begin{equation}
\label{eq:redPBlocal}
\{ p_i , p_j\}_{T^*S} = -\sum_{k=1}^rC_{ij}^k(s)p_k, \qquad
 \{  s^i , p_j \}_{T^*S} = \delta_j^i, \qquad  \{  s^i , s^j \}_{T^*S}=0,
\end{equation}
where $ \delta_j^i$ is the Kronecker delta and $C_{ij}^k(s)$ are the gyroscopic coefficients determined by \eqref{eq:gyrosccoeff}.

Denote by $\Lambda_{T^*S}$ the bivector on $T^*S$ determined by the almost Poisson bracket $\{\cdot, \cdot\}_{T^*S}$, that is,
\[
\Lambda_{T^*S}(d\varphi, d\mu) = \{\varphi, \mu\}_{T^*S},
\]
for $\varphi, \mu \in C^{\infty}(T^*S)$, and let $\Lambda_{T^*S}^\sharp$  be  vector bundle morphism $\Lambda_{T^*S}^\sharp: T^*(T^*S) \to T(T^*S)$
defined by $\Lambda_{T^*S}^\sharp( \beta)=  \Lambda_{T^*S}(\cdot , \beta)$. 
Equations \eqref{eq:redPBlocal} imply that $\Lambda_{T^*S}^\sharp$ has block matrix representation
\begin{equation*}
\Lambda_{T^*S}^\sharp = \begin{pmatrix} 0 & I \\ -I & -\mathcal{C} \end{pmatrix},
\end{equation*}
with respect to the bases  $\{ ds^i, dp_i \}$ of $T_{(s,p)}^*(T^*S)$ and $\{ \frac{\partial}{\partial s^j}, \frac{\partial}{\partial p_j}\}$ of $T_{(s,p)}(T^*S)$.
Here $I$ denotes the $r\times r$ identity matrix and $\mathcal{C}$ the $r\times r$ matrix with entries $\mathcal{C}_{ij}=\sum_{k=1}^rC_{ij}^k(s)p_k$.  
It is clear that the above matrix for $\Lambda_{T^*S}^\sharp$  is invertible. 
As a consequence, there is a unique non-degenerate 2-form $\Omega_{nh}$ on $T^*S$ whose induced bundle 
morphism  $\Omega_{nh}^\flat: T(T^*S) \to T^*(T^*S)$, defined by $\Omega_{nh}^\flat: U \mapsto \Omega_{nh}(U,\cdot)= {\bf i}_{U}\Omega_{nh}$, 
is the inverse of $\Lambda_{T^*S}^\sharp$. Its  matrix representation is
\begin{equation}
\label{eq:om-flat-matrix}
\Omega_{nh}^\flat=(\Lambda_{T^*S}^\sharp )^{-1}=\begin{pmatrix} - \mathcal{C} & -I \\ I & 0 \end{pmatrix},
\end{equation}
and so
\begin{equation*}
 \Omega_{nh}^\flat\left ( \frac{\partial}{\partial s^i}\right ) = d p_i + \sum_{j,k=1}^rC_{ij}^k(s)p_k \, ds^j, \qquad  
 \Omega_{nh}^\flat\left ( \frac{\partial}{\partial p_i}\right ) =-ds^i.
\end{equation*}
Therefore, $\Omega_{nh}$ is locally given by
\begin{equation}
\label{eq:Omnh-local}
\Omega_{nh} = \sum_{k=1}^n ds^i\wedge dp_i + \sum_{i<j}\sum_{k=1}^rC_{ij}^k(s)p_k\, ds^i\wedge ds^j.
\end{equation}
In order to give an intrinsic definition of $\Omega_{nh}$, we start by defining the \defn{gyroscopic  2-form}  $\Omega_{\mathcal T}$ on $T^*S$ as 
follows:
\begin{equation}
\label{eq:defgyrotwoform}
\Omega_{\mathcal T}(\alpha) (U, V) := \alpha \left({\mathcal T}((T_\alpha \tau_S)(U), (T_\alpha \tau_S)(V))\right),
\end{equation}
for $\alpha \in T^*S$ and $U, V \in T_\alpha(T^*S)$, with $\tau_S: T^*S \to S$ the canonical projection. It is straightforward to check that
$\Omega_{\mathcal T}$ is semi-basic and that it has the following local expression in bundle coordinates
\begin{equation}\label{Omega-T-local}
\Omega_{\mathcal T}= \sum_{i<j}\sum_{k=1}^rC_{ij}^k(s)p_k\, ds^i\wedge ds^j.
\end{equation}
Let $\Omega_{can}$ be the canonical symplectic form\footnote{our sign  convention 
is such that locally $\Omega_{can}= \sum_{i=1}^n ds^i\wedge dp_i$.}  on $T^*S$. We define  $\Omega_{nh}$ 
 intrinsically by:
\begin{equation}
\label{eq:defOmnhIntrinsic}
\Omega_{nh}:= \Omega_{can}+\Omega_{\mathcal T},
\end{equation}
so that the local expression~\eqref{eq:Omnh-local} holds.

 We will now formulate the main result of this section. In order to keep the notation simple, we also denote by $H$ and $\overline{X}_{nh}$ 
 the respective pull-backs to $T^*S$ of
 the reduced Hamiltonian $H\in C^\infty(\overline{D^*})$ and  the reduced vector field $\overline{X}_{nh}\in \frak{X}(\overline{D^*})$
  by the isomorphism ${\mathcal I}^*: T^*S \to \overline{D^*}$.
 
 \begin{theorem}
 \label{T:almostsymplectic}
 The 2-form $\Omega_{nh}$ defined by~\eqref{eq:defOmnhIntrinsic} is non-degenerate and characterises the reduced vector field $\overline{X}_{nh}$
 on $T^*S$ uniquely by the relation
  \begin{equation}
  \label{eq:almost-symplectic}
  {\bf i}_{\overline{X}_{nh}}\Omega_{nh}=dH,
\end{equation}
where $H\in C^\infty(T^*S)$ is the reduced Hamiltonian. Moreover, $\Omega_{nh}$ is closed (and hence symplectic) if and
only if the gyroscopic tensor $\mathcal{T}$ vanishes.
 \end{theorem}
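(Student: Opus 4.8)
The statement bundles three claims: non-degeneracy of $\Omega_{nh}$, the almost-symplectic characterisation ${\bf i}_{\overline{X}_{nh}}\Omega_{nh}=dH$, and the equivalence between closedness of $\Omega_{nh}$ and the vanishing of $\mathcal{T}$. The plan is to dispatch the first two quickly using the bivector $\Lambda_{T^*S}$ already introduced, and to concentrate the real effort on the third.

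For non-degeneracy and the dynamics I would argue as follows. The almost Poisson description of the reduced flow gives $\overline{X}_{nh}[\varphi]=\{\varphi,H\}_{T^*S}=\Lambda_{T^*S}(d\varphi,dH)$ for every $\varphi\in C^\infty(T^*S)$, which is precisely the identity $\overline{X}_{nh}=\Lambda_{T^*S}^\sharp(dH)$. By construction $\Omega_{nh}^\flat=(\Lambda_{T^*S}^\sharp)^{-1}$ — this is the content of the block-matrix computation \eqref{eq:om-flat-matrix}, whose invertibility is manifest and hence already yields non-degeneracy of $\Omega_{nh}$. Applying $\Omega_{nh}^\flat$ to $\overline{X}_{nh}=\Lambda_{T^*S}^\sharp(dH)$ gives ${\bf i}_{\overline{X}_{nh}}\Omega_{nh}=\Omega_{nh}^\flat(\overline{X}_{nh})=dH$, and uniqueness is immediate from non-degeneracy: if ${\bf i}_X\Omega_{nh}=dH={\bf i}_{\overline{X}_{nh}}\Omega_{nh}$ then $\Omega_{nh}^\flat(X-\overline{X}_{nh})=0$, so $X=\overline{X}_{nh}$. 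The only point needing care is that the intrinsically defined $\Omega_{nh}$ of \eqref{eq:defOmnhIntrinsic} coincides with the one whose flat is $(\Lambda_{T^*S}^\sharp)^{-1}$; this is guaranteed by the matching local expression \eqref{eq:Omnh-local}, which in turn rests on the routine verification that $\Omega_{\mathcal T}$ has the coordinate form \eqref{Omega-T-local}.

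For the closedness equivalence, since $\Omega_{can}=-d\lambda_S$ is closed we have $d\Omega_{nh}=d\Omega_{\mathcal T}$, so the claim reduces to $d\Omega_{\mathcal T}=0\iff\mathcal{T}=0$. The reverse implication is trivial, since $\mathcal{T}=0$ forces $\Omega_{\mathcal T}=0$. For the forward implication I would compute $d\Omega_{\mathcal T}$ intrinsically, evaluating it on two complete lifts and one vertical lift, using the lift calculus recalled in the appendix. The key facts are: (i) $\Omega_{\mathcal T}$ is semi-basic, so it vanishes whenever one argument is vertical; (ii) $\Omega_{\mathcal T}(Y^{*c},Z^{*c})=\mathcal{T}(Y,Z)^\ell$, which follows directly from \eqref{eq:defgyrotwoform} and $(T\tau_S)(Y^{*c})=Y$; (iii) $\gamma^{\bf v}[W^\ell]=\langle\gamma,W\rangle\circ\tau_S$ for any vector field $W$ on $S$; and (iv) the bracket $[Y^{*c},\gamma^{\bf v}]$ is again a vertical lift. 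Feeding these into the invariant formula for $d\Omega_{\mathcal T}(Y^{*c},Z^{*c},\gamma^{\bf v})$, all six terms vanish except the one coming from $\gamma^{\bf v}[\Omega_{\mathcal T}(Y^{*c},Z^{*c})]$, leaving
\begin{equation*}
d\Omega_{\mathcal T}(Y^{*c},Z^{*c},\gamma^{\bf v})=\langle\gamma,\mathcal{T}(Y,Z)\rangle\circ\tau_S .
\end{equation*}
Hence $d\Omega_{nh}=0$ forces $\langle\gamma,\mathcal{T}(Y,Z)\rangle=0$ for all one-forms $\gamma$ and vector fields $Y,Z$ on $S$, which is exactly $\mathcal{T}=0$.

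The main obstacle is this third part, and within it the bookkeeping in the intrinsic computation of $d\Omega_{\mathcal T}$: one must have the lift identities (i)--(iv) cleanly available and be confident that the evaluation on $(Y^{*c},Z^{*c},\gamma^{\bf v})$ already detects all of $\mathcal{T}$. A purely coordinate alternative is available and perhaps easier to justify: from \eqref{Omega-T-local}, the $dp_k\wedge ds^i\wedge ds^j$ component of $d\Omega_{\mathcal T}$ equals the gyroscopic coefficient $C_{ij}^k$, and since such terms cannot be cancelled by the remaining $ds^l\wedge ds^i\wedge ds^j$ terms, closedness forces every $C_{ij}^k$ to vanish, i.e. $\mathcal{T}=0$. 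I would present the intrinsic version as the main proof, in keeping with the coordinate-free spirit of the paper, and retain the coordinate computation as a cross-check.
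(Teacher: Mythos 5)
Your proposal is correct and follows essentially the same route as the paper: non-degeneracy and the relation ${\bf i}_{\overline{X}_{nh}}\Omega_{nh}=dH$ are obtained exactly as in the text from $\overline{X}_{nh}=\Lambda_{T^*S}^\sharp(dH)$ and $\Omega_{nh}^\flat=(\Lambda_{T^*S}^\sharp)^{-1}$, and the closedness equivalence reduces to contracting $d\Omega_{\mathcal T}$ with a vertical lift and exploiting semi-basicity. The only cosmetic difference is that the paper packages this last step via Cartan's magic formula as ${\bf i}_{\gamma^{\bf v}}d\Omega_{\mathcal T}=\mathcal{L}_{\gamma^{\bf v}}\Omega_{\mathcal T}=\tau_S^*(\mathcal{T}_\gamma)$, whereas you evaluate $d\Omega_{\mathcal T}$ directly on $(Y^{*c},Z^{*c},\gamma^{\bf v})$; both computations are correct and equivalent.
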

 \begin{proof}
 The non-degeneracy of $\Omega_{nh}$ follows from the local expressions given above. In particular from the matrix representation
 \eqref{eq:om-flat-matrix} for $\Omega_{nh}^\flat$. Now, Proposition~\ref{eq:AlmostPoisson-redSpace} together with our identification
 of $\overline{D^*}$ with $T^*S$ implies that $\Lambda_{T^*S}^\sharp(dH)=\overline{X}_{nh}$ and therefore $\Omega_{nh}^\flat(\overline{X}_{nh})=dH$
 which is equivalent to \eqref{eq:almost-symplectic}.
 
 Finally, note that since $\Omega_{can}$ is closed, then $d\Omega_{nh}=d\Omega_\mathcal{T}$. 
Hence, if $\mathcal{T}=0$ then 
  $d\Omega_{nh}=0$. 
   Conversely, suppose that  $d\Omega_{nh}=d\Omega_\mathcal{T}=0$ and let $\gamma\in \Omega^1(S)$. Considering that the vertical lift $\gamma^{\bf v}\in \mathfrak{X}(T^*S)$ (defined by \eqref{vertical-lift-dual}) is
   vertical and $\Omega_\mathcal{T}$ is semi-basic we have  $ {\bf i}_{\gamma^{\bf v}}\Omega_\mathcal{T}=0$ and therefore
\begin{equation*}
0= {\bf i}_{\gamma^{\bf v}}d\Omega_\mathcal{T} = \mathcal{L}_{\gamma^{\bf v}}\Omega_\mathcal{T} = \tau_S^*(\mathcal{T}_\gamma),
\end{equation*}
  where $\mathcal{T}_\gamma$ is the 2-form on $S$ given by $\mathcal{T}_\gamma (s)(u,v) = \langle \gamma (s) , \mathcal{T}(u,v) \rangle$ for $s\in S$ and
$u,v \in T_sS$. Considering that $\tau_S^*$ is injective, the above equation implies that $\mathcal{T}_\gamma =0$ for any 1-form $\gamma$. Hence $\mathcal{T}=0$.

 \end{proof}
 
 Taking into account~\eqref{eq:almost-symplectic}, and the local expression~\eqref{eq:Omnh-local} of $\Omega_{nh}$, leads to the following
 local expressions that determine  the reduced vector field $\overline{X}_{nh}$:
 \begin{equation}
 \label{eq:motionLocal}
\dot s^i = \frac{\partial H}{\partial p_i}, \qquad \dot p_i =- \frac{\partial H}{\partial s^i} -\sum_{j,k=1}^rC_{ij}^kp_k\frac{\partial H}{\partial p_j}, \qquad i=1,\dots, r.
\end{equation}
The above equations differ from the standard Hamilton equations by the presence of the terms proportional to the gyroscopic coefficients
$C_{ij}^k$.
 These terms correspond to gyroscopic forces that take the system outside of the Hamiltonian realm since, in accordance to the above theorem,
 $\Omega_{nh}$ is in general not symplectic.
 
 We note that the  reduced Hamiltonian $H \in C^\infty(T^*S)$ is given in bundle coordinates by
 \begin{equation}
 \label{eq:HamChap-local}
H(s^i,p_i)= \frac{1}{2} \sum_{i,j=1}^k K^{ij}(s)p_ip_j +\overline{U}(s),
\end{equation}
where $\overline{U}\in C^\infty(S)$ is the reduced potential energy induced by the $G$-invariant potential $U\in C^\infty(Q)$
and  $K^{ij}(s)$ are the entries of the inverse matrix of the positive definite matrix with entries
\begin{equation}
\label{eq:reduced-metric-coords}
K_{ij}(s) = \left  \llangle \mbox{ hor } \frac{\partial}{\partial s^i} \, ,  \, \mbox{ hor } \frac{\partial}{\partial s^j}  \right \rrangle ,
\end{equation}
where we recall that  $\llangle \cdot \, ,  \, \cdot   \rrangle$ is the kinetic energy metric on $Q$. One can easily verify
that $K_{ij}$ are well defined functions on the coordinate chart of $S$ by using the $G$-invariance of the kinetic energy
and of the horizontal lift.   In fact, $(K_{ij})$ is the matrix of the coefficients of the Riemannian metric 
$\llangle \cdot \, ,  \, \cdot   \rrangle^{-}$
on $S$ characterised by
\begin{equation}
\label{reduced-metric}
\llangle X, Y \rrangle^{-} \circ \pi_s = \llangle \mbox{ hor }X, \mbox{ hor }Y \rrangle, \; \; \mbox{ for } X, Y \in \frak{X}(S).
\end{equation}
We now use this metric to construct   the tensor field ${\mathcal B}$ of type $(2, 1)$ on $S$ by 
{\em raising an index} of $\mathcal T$. Namely, we define
\begin{equation}\label{T-bar}
\langle {\mathcal B}(\alpha, \beta), X \rangle = \langle \alpha, {\mathcal T}(\beta^{\sharp}, X) \rangle, \; \; \mbox{ for } \alpha, \beta \in \Omega^1(S) \mbox{ and } X \in {\frak X}(S),
\end{equation}
where   $\beta^{\sharp} \in {\frak X}(S)$ denotes the metric dual  
of the $1$-form $\beta\in \Omega^1(S)$ (see \eqref{eq:metric-dual}).
Next, we define   the semi-basic $1$-form $\eta_{\mathcal T}$  on $T^*S$ by
\begin{equation}\label{def-mu-T}
\eta_{\mathcal T}(\alpha)(X) = \langle {\mathcal B}(\alpha, \alpha), (T_\alpha\tau_S)(X)\rangle, \; \; \mbox{ for } \alpha \in T^*S \mbox{ and } X\in T_\alpha(T^*S).
\end{equation}
The 1-form $\eta_{\mathcal T}$ encodes the gyroscopic forces that deviate the vector field $\overline{X}_{nh}$ from being 
Hamiltonian in the manner
that is made precise in  the following proposition that will be useful ahead.
\begin{proposition}\label{Contraction-can-sym}
We have 
\[
{\bf i}_{\overline{X}_{nh}}\Omega_{can} = dH + \eta_{\mathcal T},
\]
where $\eta_{\mathcal T}$ is the $1$-form on $T^*S$ defined by~\eqref{def-mu-T}.
\end{proposition}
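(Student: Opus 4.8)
The plan is to build directly on the almost symplectic formulation of Theorem~\ref{T:almostsymplectic}. Since $\Omega_{nh}=\Omega_{can}+\Omega_{\mathcal T}$ by~\eqref{eq:defOmnhIntrinsic}, contracting both sides with $\overline{X}_{nh}$ and invoking ${\bf i}_{\overline{X}_{nh}}\Omega_{nh}=dH$ gives at once
\[
{\bf i}_{\overline{X}_{nh}}\Omega_{can}=dH-{\bf i}_{\overline{X}_{nh}}\Omega_{\mathcal T}.
\]
The whole proposition therefore reduces to computing the semi-basic $1$-form ${\bf i}_{\overline{X}_{nh}}\Omega_{\mathcal T}$ and matching it with $\eta_{\mathcal T}$. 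I would carry this out intrinsically, keeping the coordinate formulas~\eqref{eq:motionLocal} and~\eqref{Omega-T-local} in reserve only as a consistency check at the end.

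The key step is to identify the $\tau_S$-projection of the reduced dynamics. Because the reduced Hamiltonian is of mechanical type, $H=\tfrac12\sum K^{ij}p_ip_j+\overline U$ by~\eqref{eq:HamChap-local}, the momentum derivatives $\partial H/\partial p_i$ see only the quadratic kinetic term, so the first half of Hamilton's equations in~\eqref{eq:motionLocal} says exactly that
\[
(T_\alpha\tau_S)\bigl(\overline{X}_{nh}(\alpha)\bigr)=\alpha^{\sharp},
\]
the metric dual of $\alpha$ for $\llangle\cdot,\cdot\rrangle^{-}$ (cf.~\eqref{eq:metric-dual}). I would establish this identity invariantly by pairing ${\bf i}_{\overline{X}_{nh}}\Omega_{nh}=dH$ against an arbitrary vertical lift $\gamma^{\bf v}$ (see~\eqref{vertical-lift-dual}): since $\Omega_{\mathcal T}$ is semi-basic it annihilates the vertical vector $\gamma^{\bf v}$, so only $\Omega_{can}$ contributes and one recovers the Legendre relation $\dot s=\alpha^{\sharp}$ for every $\gamma$.

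With this projection formula in hand the remainder is essentially formal. Because $\Omega_{\mathcal T}$ is semi-basic, its contraction with $\overline{X}_{nh}$ feels only $(T_\alpha\tau_S)\overline{X}_{nh}=\alpha^{\sharp}$, so~\eqref{eq:defgyrotwoform} gives, for every $V\in T_\alpha(T^*S)$,
\[
\bigl({\bf i}_{\overline{X}_{nh}}\Omega_{\mathcal T}\bigr)(\alpha)(V)=\alpha\bigl({\mathcal T}(\alpha^{\sharp},(T_\alpha\tau_S)V)\bigr),
\]
which is precisely the pairing $\langle{\mathcal B}(\alpha,\alpha),(T_\alpha\tau_S)V\rangle$ appearing in the definition~\eqref{T-bar}--\eqref{def-mu-T} of $\eta_{\mathcal T}$. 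Substituting into the first display therefore gives ${\bf i}_{\overline{X}_{nh}}\Omega_{can}=dH\mp\eta_{\mathcal T}$, and the proposition amounts to fixing the sign as stated. I expect this sign to be the only genuine obstacle: it is governed by which vector slot of $\mathcal T$ carries $\alpha^{\sharp}$ in~\eqref{T-bar} together with the skew-symmetry ${\mathcal T}(Y,Z)=-{\mathcal T}(Z,Y)$, and I would settle it by matching the coordinate expressions~\eqref{eq:motionLocal} and~\eqref{Omega-T-local} against~\eqref{def-mu-T}.
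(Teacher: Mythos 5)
Your proposal is correct and, after the shared first step, takes a genuinely different route from the paper. Both arguments begin by using Theorem~\ref{T:almostsymplectic} and the splitting \eqref{eq:defOmnhIntrinsic} to reduce the claim to the identity $\eta_{\mathcal T}=-{\bf i}_{\overline{X}_{nh}}\Omega_{\mathcal T}$. The paper then finishes entirely in bundle coordinates: it writes a local expression for $\mathcal B$, deduces the local form \eqref{mu-T-local} of $\eta_{\mathcal T}$, and matches it against $-{\bf i}_{\overline{X}_{nh}}\Omega_{\mathcal T}$ computed from \eqref{Omega-T-local}, \eqref{eq:motionLocal} and \eqref{eq:HamChap-local}. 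You instead isolate the intrinsic content: the Legendre identity $(T_\alpha\tau_S)\bigl(\overline{X}_{nh}(\alpha)\bigr)=\alpha^{\sharp}$, which your vertical-lift argument (contracting ${\bf i}_{\overline{X}_{nh}}\Omega_{nh}=dH$ with $\gamma^{\bf v}$, using \eqref{contraction-vertical-Omega-can}, the semi-basicness of $\Omega_{\mathcal T}$ and the mechanical form \eqref{eq:HamChap-local} of $H$) establishes correctly; combined with the semi-basicness of \eqref{eq:defgyrotwoform} this yields $({\bf i}_{\overline{X}_{nh}}\Omega_{\mathcal T})(\alpha)(V)=\alpha\bigl(\mathcal T(\alpha^{\sharp},(T_\alpha\tau_S)V)\bigr)$ with no coordinates at all. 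This buys a coordinate-free proof and makes visible exactly why only the fibre metric and $\mathcal T$ enter.

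About the sign you leave open: your caution is warranted, because the paper is internally inconsistent at precisely this point, and your intrinsic computation exposes it. Reading \eqref{T-bar} and \eqref{def-mu-T} literally gives $\eta_{\mathcal T}(\alpha)(V)=\langle\alpha,\mathcal T(\alpha^{\sharp},(T_\alpha\tau_S)V)\rangle$, which is exactly your expression for $+({\bf i}_{\overline{X}_{nh}}\Omega_{\mathcal T})(\alpha)(V)$ and would force the conclusion ${\bf i}_{\overline{X}_{nh}}\Omega_{can}=dH-\eta_{\mathcal T}$. The local formula \eqref{mu-T-local}, which is what the paper actually uses here and again in Lemma~\ref{l:For-Measure-Conservation}, is the negative of this (the displayed coordinate expression for $\mathcal B$ carries $C_{ij}^k$ where \eqref{T-bar} yields $C_{ji}^k$), and it is that convention which produces the stated $+\eta_{\mathcal T}$. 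So the coordinate check you defer to would not simply fix your sign; it would reveal that \eqref{T-bar} and \eqref{mu-T-local} disagree by a sign, and that the Proposition as printed is the one compatible with \eqref{mu-T-local} (equivalently, with $\alpha^{\sharp}$ placed in the second slot of $\mathcal T$). Apart from this inherited ambiguity, which is not a defect of your reasoning, your argument is complete.
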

\begin{proof}
In view of the almost symplectic formulation ${\bf i}_{\overline{X}_{nh}}\Omega_{nh}=dH$, and the definition 
$\Omega_{nh}=\Omega_{can}+\Omega_{\mathcal T}$,  it suffices to show that 
$\eta_{\mathcal T} = -{\bf i}_{\overline{X}_{nh}}\Omega_{\mathcal T}$.
Locally we have 
\[
\mathcal B = \sum_{i,j,k,l=1}^r C_{ij}^k K^{jl} \left( \frac{\partial}{\partial s^k} \otimes \frac{\partial}{\partial s^l}\right) \otimes ds^{i}.
\]
This implies
that $\eta_{\mathcal T}$ admits the local expression
\begin{equation}\label{mu-T-local}
\eta_{\mathcal T} =  \sum_{i,j,k,l=1}^r  C_{ij}^k K^{jl}p_k p_l \, ds^{i}.
\end{equation}
A direct calculation that uses~\eqref{Omega-T-local},~\eqref{eq:motionLocal} and \eqref{eq:HamChap-local}, shows that the right hand side of this equation coincides with the
local expression for $-{\bf i}_{\overline{X}_{nh}}\Omega_{\mathcal T}$.
\end{proof}

 We finish this section by presenting the following:
 \begin{proof}[Proof of Proposition~\ref{bracket-reduced-linear-basic}]
The  almost-Poisson bracket $\{\cdot, \cdot\}_{T^*S}$ on $T^*S$ induced by the almost-Poisson bracket $\{\cdot, \cdot\}_{\overline{D^*}}$ on $\overline{D^*}$ is given by
\begin{equation}\label{reduced-non-holo-bracket-Cha}
\{\varphi, \mu\}_{T^*S} = \{\varphi \circ ({\mathcal I}^{-1})^*, \mu \circ ({\mathcal I}^{-1})^*\}_{\overline{D^*}} \circ {\mathcal I}^*,
\end{equation}
for $\varphi, \mu \in C^\infty(T^*S)$. On the other hand,  if $Y \in  {\frak X}(S)$ and $\tilde{f} \in C^{\infty}(S)$, then~\eqref{Dual-isomorphisms} implies that
\begin{equation}\label{Change-isomorphism}
Y^\ell \circ ({\mathcal I}^{-1})^* = (\mbox{ hor } Y)^\ell, \qquad (\tilde{f} \circ \tau_S) \circ ({\mathcal I}^{-1})^* = \tilde{f} \circ \overline{\tau},
\end{equation}
where $\mbox{ hor } X$ is interpreted as a section of $D/G$ in virtue of its equivariance and $\overline{\tau}: \overline{D^*} \to S$ denotes the vector bundle projection. Therefore, in view of~\eqref{reduced-non-holo-bracket-Cha},~\eqref{Change-isomorphism} and Proposition \ref{linearity-bracket-red}, 
for $\alpha \in T_{\pi(q)}^*S$ we have
\begin{equation*}
\begin{split}
\{ Y^\ell, Z^\ell \}_{T^*S}(\alpha)& = \{  (\mbox{ hor } Y)^\ell,  (\mbox{ hor } Z)^\ell \}_{\overline{D^*}} ( {\mathcal I}^*(\alpha)) \\
&= - (\mathcal{P} [\mbox{ hor } Y \, ,  \, \mbox{ hor } Z])^\ell( {\mathcal I}^*(\alpha)) \\
&= - \langle  {\mathcal I}^*(\alpha) \,  ,  \, \mathcal{P} [\mbox{ hor } Y \, , \, \mbox{ hor } Z] \rangle \\
&=  - \langle T^*_q\pi (\alpha) \,  ,  \, \mathcal{P} [ \mbox{ hor } Y \, , \, \mbox{ hor } Z] \rangle \\
&=  - \langle \alpha \,  ,  \, T_q\pi \left  ( \mathcal{P} [\mbox{ hor }Y \, , \, \mbox{ hor } Z]  \right ) \rangle \\
&=  - \langle \alpha \,  ,  \,  [Y,Z] + \mathcal{T}(Y,Z)   \rangle \\
&=  \left ( - [Y,Z]^\ell - \mathcal{T}(Y,Z)^\ell \right ) (\alpha),
\end{split}
\end{equation*}
where we have used~\eqref{Dual-isomorphisms} to give an expression for   $ {\mathcal I}^*(\alpha)$ in the third equality. In a similar manner,
but even simpler, we have
\begin{equation*}
\begin{split}
\{ f, Y^\ell \}_{T^*S}& = \{ \tilde{f} \circ \overline{\tau} ,  (\mbox{ hor } Y)^\ell \}_{\overline{D^*}} \circ  {\mathcal I}^* \\
&= Y[\tilde f ]  \circ \overline{\tau}  \circ  {\mathcal I}^*  \\
&= Y[\tilde f ]  \circ \tau_S,
\end{split}
\end{equation*}
and
\begin{equation*}
\begin{split}
\{ f, k \}_{T^*S}& = \{ \tilde{f} \circ \overline{\tau} ,  \tilde{k} \circ \overline{\tau} \}_{\overline{D^*}} \circ  {\mathcal I}^* =0.
\end{split}
\end{equation*}
\end{proof}

\subsection{Existence of a smooth invariant measure }
\label{SS:measures}

One of the most important invariants that a nonholonomic system may have is a smooth volume form. For Chaplygin systems without potential
forces,  a necessary
and sufficient condition for its existence is that a certain 
1-form on $S$, that we will denote by $\Theta$, is exact (see Cantrijn et al \cite[Theorem 7.5]{CaCoLeMa}, and also \cite[Corollary 4.5]{FedGNMa2015}).
 The 1-form $\Theta$ is naturally constructed as the ordinary contraction of the gyroscopic tensor $\mathcal{T}$:
\begin{equation}
\label{eq:1-form-theta-def}
\Theta(Y):= \sum_{j=1}^r \langle X^j \, , \, \mathcal{T}(X_j,Y) \rangle ,
\end{equation}
where   $\{ X_1, \dots X_r\}$ is a local basis of vector fields of $S$,   $\{ X^1, \dots X^r\}$ is the dual basis, and  $\langle \cdot \, ,  \, \cdot \rangle$ 
denotes the pairing of covectors and vectors on $S$. 
It is clear that $\Theta$ is well-defined (globally and independently of the basis). A local expression for $\Theta$ may be obtained by taking $\{\frac{\partial}{\partial s^j}\}$ as the basis of vector fields $\{X_j\}$ in its definition.
In view of the definition of the gyroscopic coefficients, we get $\Theta \left (\frac{\partial}{\partial s^i} \right ) = \sum_{j=1}^rC_{ji}^j(s)$, and,  therefore,
 $\Theta$ is locally given by
\begin{equation}
\label{eq:Theta-local} 
\Theta=  \sum_{i,j=1}^rC_{ji}^j(s)\,ds^i.
\end{equation}

Recall that   the cotangent bundle $T^*S$  is equipped with  the  \defn{Liouville volume form} $\nu$ defined as  $\nu:=\Omega_{can}^r$.
\begin{definition}
 A volume form $\mu$ on $T^*S$ is \defn{basic} if its density with respect to the Liouville volume form $\nu$ is a basic function. Namely if
 \begin{equation*}
\mu = (f\circ \tau_S) \, \nu,
\end{equation*}
for a positive function $f\in C^\infty(S)$.
\end{definition}

The relationship between $\Theta$ and the existence of an invariant measure is given in the following theorem.
\begin{theorem}[Cantrijn et al. \cite{CaCoLeMa}]
\label{T:measures-1-form}
Let $\nu = \Omega_{can}^r$ be  the Liouville volume form on $T^*S$. 
\begin{enumerate}
\item For a Hamiltonian $H=K+U$, the  reduced  equations of motion of a nonholonomic Chaplygin system preserve the basic measure
\begin{equation}
\label{eq:measure}
\mu = \exp (\sigma \circ \tau_S) \, \nu, \qquad \sigma\in C^\infty(S),
\end{equation}
 if and only if $\Theta$ is exact with $\Theta =d\sigma$.
\item In the absence of potential energy, the reduced equations posses a smooth invariant measure if and only if it is  basic (which is  then
characterised by item (i)).
\end{enumerate}
\end{theorem}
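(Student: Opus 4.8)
The plan is to reduce the entire statement to two computations: the Lie derivative of the candidate measure along $\overline{X}_{nh}$, and the divergence $\mathrm{div}_\nu \overline{X}_{nh}$ of the reduced vector field with respect to the Liouville form $\nu$. For a basic density $\exp(\sigma\circ\tau_S)$, Cartan's formula gives
\[
\mathcal{L}_{\overline{X}_{nh}}\bigl(\exp(\sigma\circ\tau_S)\,\nu\bigr) = \Bigl(\overline{X}_{nh}[\sigma\circ\tau_S] + \mathrm{div}_\nu \overline{X}_{nh}\Bigr)\exp(\sigma\circ\tau_S)\,\nu,
\]
so the measure \eqref{eq:measure} is preserved if and only if the scalar in parentheses vanishes identically. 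First I would record that, by the local form \eqref{eq:motionLocal} and \eqref{eq:HamChap-local}, the projection $T_\alpha\tau_S(\overline{X}_{nh}(\alpha))$ is exactly the metric dual $\alpha^\sharp$ of $\alpha$ with respect to the reduced metric \eqref{reduced-metric}; consequently $\overline{X}_{nh}[\sigma\circ\tau_S](\alpha)=\langle d\sigma,\alpha^\sharp\rangle$, a function that is linear along the fibres of $\tau_S$.

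The core step is to show that $\mathrm{div}_\nu \overline{X}_{nh}(\alpha) = -\langle \Theta,\alpha^\sharp\rangle$. Using Proposition~\ref{Contraction-can-sym} I would split $\overline{X}_{nh}=X_H+Y$, where $X_H$ is the Hamiltonian vector field of $H$ for $\Omega_{can}$ and ${\bf i}_Y\Omega_{can}=\eta_{\mathcal T}$. Liouville's theorem gives $\mathrm{div}_\nu X_H=0$, so it remains to compute $\mathrm{div}_\nu Y$. Since $\eta_{\mathcal T}$ is semi-basic with local expression \eqref{mu-T-local}, the field $Y$ is purely vertical, and a direct differentiation produces two terms: one proportional to $C_{ij}^k K^{ji}$, which vanishes because $C_{ij}^k$ is skew while $K^{ij}$ is symmetric in $i,j$, and one that collapses, via $\sum_i C_{ij}^i=\Theta(\partial/\partial s^j)$ (cf.\ \eqref{eq:Theta-local}), to $-\langle\Theta,\alpha^\sharp\rangle$. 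This is the conceptual heart of the argument: the trace defining $\Theta$ survives precisely because the Hessian $\partial^2 H/\partial p_i\partial p_j$ is symmetric while the gyroscopic coefficients are skew. Combining the two computations, the bracketed scalar equals $\langle d\sigma-\Theta,\alpha^\sharp\rangle$; since $\alpha\mapsto\alpha^\sharp$ is a fibrewise isomorphism, it vanishes identically if and only if $d\sigma=\Theta$, proving item (i).

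For item (ii) I would exploit the homogeneity created by the vanishing potential. With $U=0$ the Hamiltonian \eqref{eq:HamChap-local} is fibrewise quadratic, so by \eqref{eq:motionLocal} the vector field $\overline{X}_{nh}$ raises the degree of homogeneity in $p$ by exactly one, while the divergence $\mathrm{div}_\nu\overline{X}_{nh}=-\langle\Theta,\alpha^\sharp\rangle$ computed above is homogeneous of degree one. Given an arbitrary smooth invariant measure $\mu=F\nu$ with $F>0$, invariance reads $\overline{X}_{nh}[\log F]+\mathrm{div}_\nu\overline{X}_{nh}=0$. Expanding $\log F$ into its homogeneous components in $p$ and isolating the degree-one part of this identity, only the restriction of $\log F$ to the zero section of $T^*S$, regarded as a function $\Phi_0$ on $S$, contributes, through $\overline{X}_{nh}[\Phi_0]=\langle d\Phi_0,\alpha^\sharp\rangle$, so that $\langle d\Phi_0-\Theta,\alpha^\sharp\rangle=0$ for all $\alpha$ and hence $\Theta=d\Phi_0$ is exact. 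Then item (i) guarantees that the basic measure $\exp(\Phi_0\circ\tau_S)\,\nu$ is invariant; conversely a basic invariant measure is a smooth invariant measure. This proves that a smooth invariant measure exists if and only if a basic one does, and the latter is characterised by (i).

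The main obstacle is the intrinsic identification of $\mathrm{div}_\nu\overline{X}_{nh}$ with $-\langle\Theta,\alpha^\sharp\rangle$: one must carefully match the contraction \eqref{eq:1-form-theta-def} of $\mathcal{T}$ against the divergence, and it is exactly the skew-symmetry of $\mathcal{T}$ (equivalently of the coefficients $C_{ij}^k$ in $i,j$) played off against the symmetry of the kinetic-energy Hessian that isolates the one-form $\Theta$ and discards the remaining quadratic terms. Once this identity is in hand, both the equivalence in (i) and the homogeneity argument in (ii) follow with little extra work.
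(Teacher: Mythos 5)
Your proof is correct and rests on the same skeleton as the paper's: the single identity ${\mathcal L}_{\overline{X}_{nh}}\nu=-(\Theta^{\sharp})^{\ell}\,\nu$ (your $\mathrm{div}_{\nu}\overline{X}_{nh}(\alpha)=-\langle\Theta,\alpha^{\sharp}\rangle$), derived from the splitting ${\bf i}_{\overline{X}_{nh}}\Omega_{can}=dH+\eta_{\mathcal T}$ of Proposition~\ref{Contraction-can-sym}, followed for item (ii) by extraction of the fibre-linear part of the invariance condition at the zero section. Where you differ is in how the key identity is established: you take the vertical correction field $Y$ with ${\bf i}_{Y}\Omega_{can}=\eta_{\mathcal T}$ and compute its coordinate divergence directly from \eqref{mu-T-local}, letting the skew-symmetry of $C_{ij}^k$ against the symmetry of $K^{ij}$ kill the quadratic term and leave the trace $\sum_i C_{ij}^i=\Theta(\partial/\partial s^j)$; the paper packages exactly this cancellation intrinsically as the wedge identity $r\,d\eta_{\mathcal T}\wedge\Omega_{can}^{r-1}=-(\Theta^{\sharp})^{\ell}\nu$ evaluated on the frame of complete and vertical lifts (Lemma~\ref{l:For-Measure-Conservation}). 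Your route is shorter and more elementary; the paper's buys a coordinate-free derivation, which it explicitly advertises as one of its contributions. Two points to tighten: for a general smooth density $F$ your ``expansion of $\log F$ into homogeneous components'' should be read operationally as differentiating once in $p$ and evaluating on the zero section (this is the paper's $\gamma^{\bf v}[\,\cdot\,]\circ 0$, and it is precisely where $U=0$ is needed --- with a potential the term $\partial\overline{U}/\partial s^i$ survives in $\dot p_i$ at $p=0$ and contaminates the degree-one part through the second $p$-derivatives of $\log F$); and it is worth stating explicitly that both $\overline{X}_{nh}[\sigma\circ\tau_S]$ and $\mathrm{div}_\nu\overline{X}_{nh}$ are unaffected by the potential, which is why item (i) holds for $H=K+U$ while the homogeneity argument of item (ii) does not.
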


\begin{remark} In section \ref{SS:NonhoParticle} below we give an example of a Chaplygin system with non-trivial potential possessing a smooth invariant measure 
 that is not basic. Such example shows that the conclusion of item (ii) may not be extended to systems with potential energy.
\end{remark}

Theorem \ref{T:measures-1-form} was proved in \cite{CaCoLeMa} (see also \cite{FedGNMa2015}) with an alternative
definition of the 1-form $\Theta$. Here we  present an alternative intrinsic proof which is based on the following lemma. In its statement, recall that 
 $\Theta^\sharp \in \mathfrak{X}(S)$ denotes the metric dual of  $\Theta\in \Omega^1(S)$ (see~\eqref{eq:metric-dual}), and
 $(\Theta^\sharp)^\ell\in C^\infty( T^*S)$ is the associated linear function (see~\eqref{eq:linear-function}).
 
\begin{lemma}
\label{l:For-Measure-Conservation}
Let $\mu$ be a general (not necessarily basic) volume form on $T^*S$ given by
$\mu = \exp{(\overline{\sigma})}\nu$, with $\overline{\sigma} \in C^{\infty}(T^*S)$ and where as usual $\nu$ is the Liouville 
measure on $T^*S$.  
Denote by ${\mathcal L}_{\overline{X}_{nh}}$ the Lie derivative operator with respect to $\overline{X}_{nh}$. 
Then
\begin{equation}\label{action-dynamics-volume-form-lemma}
{\mathcal L}_{\overline{X}_{nh}}(\mu) =  (\overline{X}_{nh}[\overline{\sigma}] - (\Theta^\sharp)^\ell) \, \mu.
\end{equation}
\end{lemma}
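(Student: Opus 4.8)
The plan is to reduce the statement to the single identity $\mathcal{L}_{\overline{X}_{nh}}\nu = -(\Theta^\sharp)^\ell\,\nu$ and then verify it. Starting from $\mu = \exp(\overline{\sigma})\nu$, the Leibniz rule for the Lie derivative gives
\[
\mathcal{L}_{\overline{X}_{nh}}\mu = \overline{X}_{nh}[\exp(\overline{\sigma})]\,\nu + \exp(\overline{\sigma})\,\mathcal{L}_{\overline{X}_{nh}}\nu = \overline{X}_{nh}[\overline{\sigma}]\,\mu + \exp(\overline{\sigma})\,\mathcal{L}_{\overline{X}_{nh}}\nu,
\]
so the first term already matches the claimed expression, and everything reduces to showing that the $\nu$-divergence of $\overline{X}_{nh}$ equals $-(\Theta^\sharp)^\ell$.

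To isolate the non-Hamiltonian part of the dynamics I would invoke Proposition~\ref{Contraction-can-sym}. Writing $X_H$ for the Hamiltonian vector field of $H$ relative to the canonical form (${\bf i}_{X_H}\Omega_{can} = dH$) and $Z$ for the vector field defined by ${\bf i}_Z\Omega_{can} = \eta_{\mathcal T}$, that proposition yields $\overline{X}_{nh} = X_H + Z$. Since $X_H$ is Hamiltonian with respect to $\Omega_{can}$ and $\nu = \Omega_{can}^r$ is its Liouville volume, Liouville's theorem gives $\mathcal{L}_{X_H}\nu = 0$; hence $\mathcal{L}_{\overline{X}_{nh}}\nu = \mathcal{L}_Z\nu = (\operatorname{div}_\nu Z)\,\nu$, and only the gyroscopic part $\eta_{\mathcal T}$ contributes.

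The remaining step is a short coordinate computation. Using the local expression~\eqref{mu-T-local} for $\eta_{\mathcal T}$ together with $\Omega_{can} = \sum_i ds^i\wedge dp_i$, the field $Z$ is the vertical field $Z = -\sum_i\big(\sum_{j,k,l}C_{ij}^k K^{jl}p_k p_l\big)\tfrac{\partial}{\partial p_i}$, and since $\nu$ is a constant multiple of $ds^1\wedge\cdots\wedge ds^r\wedge dp_1\wedge\cdots\wedge dp_r$ its divergence is $\operatorname{div}_\nu Z = -\sum_{i}\tfrac{\partial}{\partial p_i}\big(\sum_{j,k,l}C_{ij}^k K^{jl}p_k p_l\big)$. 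Differentiating the factor $p_k p_l$ produces two families of terms: the family in which $K^{ij}$ is contracted with $C_{ij}^k$ over the pair $i,j$ vanishes because $C_{ij}^k=-C_{ji}^k$ is skew while $K^{ij}$ is symmetric, and the surviving family is $-\sum_{i,j,l}C_{ij}^i K^{jl}p_l$. Finally I would recognise this as $-(\Theta^\sharp)^\ell$ by comparing with the local expression~\eqref{eq:Theta-local} for $\Theta = \sum_{i,j}C_{ji}^j\,ds^i$, with the definition of the metric dual $\Theta^\sharp$ through the coefficients $K^{ij}$ of~\eqref{reduced-metric}, and with the formula for the associated linear function on $T^*S$. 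Substituting this into the Leibniz identity above completes the proof.

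I expect the only delicate points to be the index bookkeeping in this last identification and the cancellation forced by the skew-symmetry of the gyroscopic coefficients against the symmetry of the cometric $K^{ij}$; the rest is routine. The decomposition $\overline{X}_{nh}=X_H+Z$ is what keeps the computation short, since it removes the Hamiltonian part (which is automatically $\nu$-divergence-free) before any coordinates are introduced, so that the surviving vertical field $Z$ carries all of the divergence.
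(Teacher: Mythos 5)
Your proof is correct. It opens the same way as the paper's (Leibniz rule to peel off the $\overline{X}_{nh}[\overline{\sigma}]$ term, then Proposition~\ref{Contraction-can-sym} to isolate the gyroscopic contribution), but the way you evaluate $\mathcal{L}_{\overline{X}_{nh}}\nu$ is genuinely different. The paper writes $\mathcal{L}_{\overline{X}_{nh}}\nu = r\,(\mathcal{L}_{\overline{X}_{nh}}\Omega_{can})\wedge\Omega_{can}^{r-1} = r\,d\eta_{\mathcal T}\wedge\Omega_{can}^{r-1}$ via Cartan's formula and then proves the identity $r\,d\eta_{\mathcal T}\wedge\Omega_{can}^{r-1} = -(\Theta^\sharp)^\ell\,\nu$ by a fairly long evaluation on the frame of complete and vertical lifts $\{X_i^{*c},(X^i)^{\bf v}\}$, passing through the auxiliary tensor $({\mathcal B},Y)$. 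You instead use the non-degeneracy of $\Omega_{can}$ to split $\overline{X}_{nh}=X_H+Z$ with ${\bf i}_{X_H}\Omega_{can}=dH$ and ${\bf i}_Z\Omega_{can}=\eta_{\mathcal T}$, discard $X_H$ by Liouville's theorem, and compute $\operatorname{div}_\nu Z$ directly in bundle coordinates; the two routes compute literally the same object, since $\mathcal{L}_Z\nu = r\,d({\bf i}_Z\Omega_{can})\wedge\Omega_{can}^{r-1}$. Your computation checks out: $Z=-\sum_i\bigl(\sum_{j,k,l}C_{ij}^kK^{jl}p_kp_l\bigr)\partial/\partial p_i$ is correct, the term carrying $\sum_{i,j}C_{ij}^kK^{ji}$ dies by skew/symmetric pairing, and the survivor $-\sum_{i,j,l}C_{ij}^iK^{jl}p_l$ is exactly $-(\Theta^\sharp)^\ell$ given \eqref{eq:Theta-local} and the definition of $\sharp$. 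What your approach buys is brevity and elementarity (a one-line divergence of a vertical field against the Liouville volume); what it gives up is the coordinate-free character of the paper's argument, which is harmless here because the identity being verified is globally well defined and can be checked chart by chart. One small presentational point: you should say explicitly that $X_H$ and $Z$ exist and are unique because $\Omega_{can}$ is non-degenerate, so that $\overline{X}_{nh}=X_H+Z$ really does follow from ${\bf i}_{\overline{X}_{nh}}\Omega_{can}=dH+\eta_{\mathcal T}$.
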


The proof of this lemma is postponed to the end of the section and we proceed to give the proof of Theorem~\ref{T:measures-1-form}.
\begin{proof}[Proof of Theorem~\ref{T:measures-1-form}]

To prove item $(i)$ consider the basic measure $\mu = \exp( \sigma \circ \tau_S) \nu$  with $\sigma \in C^{\infty}(S)$. 
Using the local expressions  \eqref{eq:motionLocal} and (\ref{eq:HamChap-local}) for $\overline{X}_{nh}$ one shows that 
$\overline{X}_{nh}[ \sigma \circ \tau_S]= (d\sigma^\sharp)^\ell$.  Hence, in this case,
\eqref{action-dynamics-volume-form-lemma} may be written as
\begin{equation*}
{\mathcal L}_{\overline{X}_{nh}}(\mu) = ((d\sigma^\sharp)^\ell - (\Theta^\sharp)^\ell) \, \mu,
\end{equation*}
which implies that ${\mathcal L}_{\overline{X}_{nh}}(\mu)=0$ if and only if $(d\sigma^\sharp)^\ell - (\Theta^\sharp)^\ell=0$ or, equivalently,
$d\sigma=\Theta$.

Next, we will prove item $(ii)$ of the theorem. Suppose that the volume form $\mu = \exp(\overline{\sigma}) \nu$ is invariant under the action of $\overline{X}_{nh}$. Then \eqref{action-dynamics-volume-form-lemma} implies that
$\overline{X}_{nh}[\overline{\sigma}] = (\Theta^\sharp)^\ell$ and therefore, for any 
$\gamma \in \Omega^1(S)$, we have
\begin{equation}\label{gamma-theta-bar-sigma}
(\gamma^{\bf v}[\overline{X}_{nh}[\overline{\sigma}]]) \circ 0 = (\gamma^{\bf v}[(\Theta^\sharp)^\ell]) \circ 0 = \langle \gamma, \Theta^\sharp \rangle \circ \tau_S,
\end{equation}
 where $0: S \to T^*S$ is the zero section of the vector bundle $\tau_S: T^*S \to S$.

On the other hand, using the local equations \eqref{eq:motionLocal} and \eqref{eq:HamChap-local} which determine to the vector field $\overline{X}_{nh}$ and the local expression (\ref{vertical-complete-lift-dual}) of the vector field $\gamma^{\bf v}$, we conclude that
\[
(\gamma^{\bf v}[\overline{X}_{nh}[\overline{\sigma}]]) \circ 0 = \langle \gamma, (d\sigma)^\sharp \rangle \circ \tau_S, \; \; \; \forall \gamma \in \Omega^1(S)
\]
with $\sigma = \overline{\sigma} \circ 0 \in C^{\infty}(S)$. So, from (\ref{gamma-theta-bar-sigma}), we deduce that
\[
\langle \gamma, (d\sigma)^\sharp \rangle \circ \tau_S = \langle \gamma, \Theta^\sharp \rangle \circ \tau_S, \; \; \forall \gamma \in \Omega^1(S)
\]
which implies that $d\sigma = \Theta$. Thus, using item $(i)$, it follows that the basic volume form $\exp({\sigma}\circ \tau_S)\nu$ is invariant under the action of $\overline{X}_{nh}$.
\end{proof}

We now present the following: 
\begin{corollary}[Stanchenko~\cite{Stanchenko}]
\label{C:Noetherianity-measure}
The existence  of a basic invariant measure for a 
 $G$-Chaplygin nonholonomic system is  weakly Noetherian. Namely,
if a   $G$-Chaplygin nonholonomic system preserves a basic measure, then it  continues to 
preserve the same basic measure under the addition of a
$G$-invariant potential.
\end{corollary}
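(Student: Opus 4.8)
The plan is to deduce this directly from Theorem~\ref{T:measures-1-form}(i), exploiting the fact that the criterion for measure preservation supplied there is manifestly independent of the potential energy. First I would invoke that theorem: a basic measure $\mu = \exp(\sigma \circ \tau_S)\,\nu$, with $\sigma \in C^\infty(S)$, is preserved by the reduced dynamics if and only if the $1$-form $\Theta$ on $S$ is exact with $\Theta = d\sigma$. This is the only analytic input required; everything else is an observation about the objects that enter this condition.

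The key point is that the $1$-form $\Theta$ does not depend on the potential. Indeed, by its definition~\eqref{eq:1-form-theta-def}, $\Theta$ is built entirely out of the gyroscopic tensor $\mathcal{T}$, and $\mathcal{T}$ in turn is defined in Definition~\ref{D:defGyrTensor} solely through the horizontal lift $\mathrm{hor}$ --- that is, through the constraint distribution $D$ regarded as a principal connection --- and the orthogonal projector $\mathcal{P}\colon TQ\to D$ determined by the kinetic energy metric. Neither of these ingredients involves the potential energy. Consequently the condition $\Theta = d\sigma$ characterising preservation of the basic measure $\exp(\sigma\circ \tau_S)\,\nu$ is literally the same for every $G$-invariant potential.

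Putting these together, if the original $G$-Chaplygin system (carrying some $G$-invariant potential) preserves the basic measure $\exp(\sigma\circ \tau_S)\,\nu$, then Theorem~\ref{T:measures-1-form}(i) forces $\Theta = d\sigma$. Since this identity makes no reference to the potential, it persists upon replacing the potential by the sum of the original one and any further $G$-invariant potential, and a second application of Theorem~\ref{T:measures-1-form}(i) then shows that the modified system preserves the very same measure, with the same $\sigma$.

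I do not anticipate any genuine obstacle: the entire content of the corollary is the potential-independence of $\mathcal{T}$, and hence of $\Theta$. The one point worth verifying explicitly is that, in the proof of Theorem~\ref{T:measures-1-form}(i), the potential enters the reduced vector field $\overline{X}_{nh}$ only through the $\dot p_i$ equations of~\eqref{eq:motionLocal} (via the reduced potential $\overline{U}$) and not through the $\dot s^i$ equations, so that the identity $\overline{X}_{nh}[\sigma\circ \tau_S] = (d\sigma^\sharp)^\ell$ used there is itself potential-free. Once this is noted, the conclusion is immediate.
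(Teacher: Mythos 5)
Your proposal is correct and follows essentially the same route as the paper: invoke Theorem~\ref{T:measures-1-form}(i) and observe that $\Theta$, being built from the gyroscopic tensor $\mathcal{T}$, depends only on the kinetic energy and the constraints, so the exactness criterion is potential-independent. Your extra check that $\overline{X}_{nh}[\sigma\circ\tau_S]=(d\sigma^\sharp)^\ell$ holds regardless of the potential is a sound (if not strictly necessary) verification of a detail already contained in the proof of that theorem.
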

\begin{proof}
The gyroscopic tensor $\mathcal{T}$ only depends on the kinetic energy and not on the potential. Thus, the same is true
for the 1-form $\Theta$. In  virtue of item (i) of Theorem~\ref{T:measures-1-form}, the preservation of a basic measure is 
equivalent to the exactness of $\Theta$,  which holds independently of the potential.
\end{proof}

We finish the section with the proof of Lemma~\ref{l:For-Measure-Conservation}.
\begin{proof}
 Using the basic
properties of the Lie derivative we have
\begin{equation}
\label{eq:aux-proof-volume-intrinsic}
{\mathcal L}_{\overline{X}_{nh}}(\mu)= {\mathcal L}_{\overline{X}_{nh}}(\exp(\overline{\sigma}) \nu) = \exp(\overline{\sigma})(\overline{X}_{nh}[\overline{\sigma}]\nu 
+ r\, ({\mathcal L}_{\overline{X}_{nh}}\Omega_{can}) \wedge \Omega_{can}^{r-1}).
\end{equation}
On the other hand, using Cartan's magic formula  and the fact that $\Omega_{can}$ is closed, we get
\[
{\mathcal L}_{\overline{X}_{nh}}\Omega_{can} = {\bf i}_{\overline{X}_{nh}}(d\Omega_{can}) + d({\bf i}_{\overline{X}_{nh}}\Omega_{can}) = d({\bf i}_{\overline{X}_{nh}}\Omega_{can}),
\]
which, in view of Proposition~\ref{Contraction-can-sym}  allows us to
write~\eqref{eq:aux-proof-volume-intrinsic} as
\begin{equation}\label{action-dynamics-volume-form}
{\mathcal L}_{\overline{X}_{nh}}(\mu) = \exp(\overline{\sigma})(\overline{X}_{nh}[\overline{\sigma}]\nu + r\, d\eta_{\mathcal T} \wedge \Omega_{can}^{r-1}).
\end{equation}
However we claim that 
\begin{equation}
\label{eq:key-id-lemma-measure}
  r \, d\eta_{\mathcal T} \wedge \Omega_{can}^{r-1}=-(\Theta^{\sharp})^\ell \, \nu.
\end{equation}
Note that substitution of~\eqref{eq:key-id-lemma-measure} into~\eqref{action-dynamics-volume-form}  proves the lemma, so it only remains to prove that \eqref{eq:key-id-lemma-measure} holds.

Let  $\{X_i\}$  be  a local basis of $\frak{X}(S)$ and $\{X^{i}\}$  be the  dual basis of $\Omega^1(S)$.
To prove \eqref{eq:key-id-lemma-measure} we will show that the following two identities hold:
\begin{equation}
\label{eq:lemma-intrinsic}
(\Theta^{\sharp})^\ell = -\sum_{i=1}^rd\eta_{\mathcal T}\left(X_i^{*c}, (X^{i})^{\bf v}\right), \qquad  \qquad r d\eta_{\mathcal T} \wedge \Omega_{can}^{r-1} = \sum_{i=1}^r d\eta_{\mathcal T}(X_i^{*c}, (X^{i})^{\bf v}) \nu,
\end{equation}
where the vertical lifts $(X^{i})^{\bf v}$ and the complete lifts $X_i^{*c}$ are respectively defined in \eqref{vertical-lift-dual}
and \eqref{complete-lift-dual}.

Starting from the local expression~\eqref{mu-T-local} of $\eta_{\mathcal T}$ and using (\ref{vertical-complete-lift-dual}), we have
\begin{equation}\label{eta-T-vertical-complete}
\eta_{\mathcal T}(\alpha^{\bf v}) = 0, \qquad \eta_{\mathcal T}(Y^{*c}) = \sum_{i,j,k,l=1}^rY^{i}C_{ij}^kK^{jl}p_kp_l,
\end{equation}
for $\alpha \in \Omega^{1}(S)$ and $Y \in \frak{X}(S)$.
So, if $({\mathcal B}, Y)$ is the tensor of type $(2, 0)$ on $S$ given by
\begin{equation}\label{B-cal-Y}
({\mathcal B}, Y)(\alpha, \beta) = \langle \mathcal{B}(\alpha, \beta), Y\rangle =\langle \alpha, {\mathcal T}(\beta^{\sharp}, Y)\rangle, \mbox{ for } \alpha, \beta \in \Omega^1(S),
\end{equation}
it follows that the quadratic function $({\mathcal B}, Y)^{\mathfrak{q}}$ on $T^*S$ (see \eqref{eq:quad-function}) is just $-\eta_{\mathcal T}(Y^{*c})$.
Thus, using (\ref{Lie-bracket-complete-vertical}) and (\ref{eta-T-vertical-complete}), we deduce that
\[
d\eta_{\mathcal T}(\alpha^{\bf v}, Y^{*c}) = \alpha^{\bf v}(\eta_{\mathcal T}(Y^{*c})) - Y^{*c}(\eta_{\mathcal T}(\alpha^{\bf v})) - \eta_{\mathcal T}[\alpha^{\bf v}, Y^{*c}] =-\alpha^{\bf v}(({\mathcal B}, Y)^{\mathfrak{q}}).
\]
This implies that
\[
d\eta_{\mathcal T}(\alpha^{\bf v}, Y^{*c}) =\sum_{i,j,k,l=1}^r Y^{i}\alpha_{l}(C_{ij}^{k}K^{jl} + C_{ij}^{l}K^{jk})p_k
\]
or, equivalently,
\[
d\eta_{\mathcal T}(\alpha^{\bf v}, Y^{*c}) = -({\mathcal T}(\alpha^{\sharp}, Y))^\ell - (({\mathcal B}, Y)(\alpha, \cdot))^\ell,
\]
where $({\mathcal B}, Y)(\alpha, \cdot)$ is the vector field on $S$ which is characterized by
\[
\langle \beta, ({\mathcal B}, Y)(\alpha, \cdot) \rangle = ({\mathcal B}, Y)(\alpha, \beta), \; \; \mbox{ for } \beta \in \Omega^{1}(S).
\]
Therefore, 
\[
\sum_{i=1}^r d\eta_{\mathcal T}\left( X_i^{*c}, (X^{i})^{\bf v}\right) = \sum_{i=1}^{r} ({\mathcal T}((X^{i})^{\sharp}, X_{i}))^\ell + \sum_{i=1}^{r}(({\mathcal B}, X_i)(X^{i}, \cdot))^\ell.
\]
Now, since the gyroscopic tensor $\mathcal T$ is skew-symmetric, ${\mathcal T}((X^{i})^{\sharp}, X_{i}) = 0$, for all $i$. In fact, if our local basis $\{X_i\}$ is orthonormal then $(X^{i})^{\sharp} = X_i$, for all $i$. Consequently,
\begin{equation}\label{traza-d-eta}
\sum_{i=1}^r d\eta_{\mathcal T}\left( X_i^{*c}, (X^{i})^{\bf v}\right) =  \sum_{i=1}^{r}(({\mathcal B}, X_i)(X^{i}, \cdot))^\ell.
\end{equation}

On the other hand, in view of the local expression~\eqref{eq:Theta-local} of $\Theta$ and using (\ref{B-cal-Y}), we have
\[
\Theta(\alpha^{\sharp}) = - \langle \alpha, \sum_{i=1}^r ({\mathcal B}, X_i)(X^{i}, \cdot) \rangle, \; \; \mbox{ for } \alpha \in \Omega^1(S).
\]
This implies that
\[
\Theta^{\sharp} = -\sum_{i=1}^r ({\mathcal B}, X_i)(X^{i}, \cdot)
\]
and, by (\ref{traza-d-eta}), it follows that 
\[
(\Theta^{\sharp})^\ell = -\sum_{i=1}^r d\eta_{\mathcal T}\left( X_i^{*c}, (X^{i})^{\bf v}\right),
\]
which proves that the first identity in~\eqref{eq:lemma-intrinsic} holds.

Now,~\eqref{vertical-complete-lift-dual} shows that $\{X_i^{*c}, (X^{i})^{\bf v}\}$ is a local basis of $\frak{X}(T^*S)$ and, moreover, 
\begin{equation}\label{Omega-complete-vertical}
\Omega_{can}(Y^{*c}, \alpha^{\bf{v}}) = \alpha(Y) \circ \tau_S, \; \; \mbox{ for } Y\in \frak{X}(S) \mbox{ and } \alpha \in \Omega^1(S). 
\end{equation}
Thus, we deduce that
\begin{equation*}
\begin{split}
(r\, d\eta_{\mathcal T} \wedge \Omega_{can}^{r-1}) & (X_1^{*c}, (X^1)^{\bf v}, \dots, X_r^{*c}, (X^r)^{\bf v})  = \\
&= r \sum_{i=1}^r d\eta_{\mathcal T}(X_i^{*c}, (X^{i})^{\bf v})\Omega_{can}^{r-1}(X_1^{*c}, (X^1)^{\bf v}, \dots, \widehat{X_i^{*c}}, \widehat{(X^{i})^{\bf v}}, \dots, X_r^{*c}, (X^r)^{\bf v})  \\
& =  r! \sum_{i=1}^r d\eta_{\mathcal T}(X_i^{*c}, (X^{i})^{\bf v}).
 \end{split}
\end{equation*}
On the other hand, using (\ref{Omega-complete-vertical}), it also follows that
\[
\nu(X_1^{*c}, (X^1)^{\bf v}, \dots, X_r^{*c}, (X^r)^{\bf v}) = r!
\]
which  shows that the second equation in~\eqref{eq:lemma-intrinsic} also holds.
\end{proof}

\subsection{Chaplygin Hamiltonisation}
Within the class of Chaplygin systems with an invariant measure there is a special subclass whose equations of motion
may  be written in Hamiltonian 
form after a  time reparametrisation $dt =g(s)\, d\tau$, for a positive function $g\in C^\infty(S)$. This observation 
 goes back to Chaplygin who introduced his
  reducing multiplier method  \cite{ChapRedMult}. The   geometric formulation of this procedure was given first by 
Stanchenko~\cite{Stanchenko}. Given that the time reparametrisation corresponds to  the vector field rescaling $\overline{X}_{nh} \mapsto g\overline{X}_{nh}$,  we define:
\begin{definition}
A nonholonomic Chaplygin system is said to be \defn{Hamiltonisable} if there exists a positive function $g\in C^\infty(S)$ and a symplectic
form $\overline{\Omega}$ on $T^*S$ such that the rescaled  vector field $g\overline{X}_{nh}$ 
satisfies
\begin{equation*}
 {\bf i}_{g\overline{X}_{nh}}\overline{\Omega}=dH.
\end{equation*}
In this case we say that the vector field $\overline{X}_{nh}$ is \defn{conformally Hamiltonian} and that the system is  Hamiltonisable  \defn{ with the time reparametrisation} $dt =g(s)\, d\tau$.
\end{definition}

In this paper it will always be the case that the  2-form $\overline{\Omega}=g^{-1}\Omega_{nh}$. So the task is to determine conditions 
that guarantee that $\Omega_{nh}$ is closed after multiplication by a positive function $g^{-1}$.
 If such a function exists 
we shall say that $\Omega_{nh}$ is \defn{conformally symplectic} and the function
$g^{-1}$ will be called the \defn{conformal factor}.  A characterisation of the condition that $\Omega_{nh}$ is
conformally symplectic  is given  in our main Theorem~\ref{T:measure-gyro} ahead in terms of 
the gyroscopic tensor.

\begin{remark} 
\label{rmk:Ham-with-Omega0}
We note that according to~\cite{Stanchenko, EhlersKoiller}, 
one could more generally  have
 $\overline{\Omega} =g^{-1}(\Omega_{nh}+\Omega_0)$ where the degenerate 2-form $\Omega_0$ satisfies 
 ${\bf i}_{\overline{X}_{nh}}\Omega_0=0$. The possibility of achieving a 
 Hamiltonisation  with  $\Omega_0\neq 0$, is not considered in this paper and therefore our approach only leads to sufficient conditions for Hamiltonisation.
 \end{remark}

\begin{remark}
\label{Rmk:MomRescaling}
In many references Chaplygin's Hamiltonisation procedure is presented as a time reparametrisation together with 
a rescaling of the momenta.
From the geometric perspective, the rescaling of the momenta serves to obtain Darboux coordinates for the symplectic form 
$\overline{\Omega}$. This is illustrated in our treatment of the nonholonomic particle in Section~\ref{SS:NonhoParticle}.
\end{remark}

Below we show that a Hamiltonisable Chaplygin system indeed admits an invariant measure. This is a
well-known result appearing in various references, e.g. \cite{FedJov,EhlersKoiller}, and which is a consequence of the following general observation for conformally Hamiltonian systems.
\begin{proposition}\label{conformal-invariant-measure}
 Let $\overline{\Omega}$ a symplectic structure on a manifold $P$ of dimension $2r$, $\overline{X}$ a vector field on $P$ and $g$ a real positive $C^{\infty}$-function on $P$ such that $g\overline{X}$ is Hamiltonian vector field with respect to the symplectic structure $\overline{\Omega}$. Then, $g\overline{\Omega}^r$ is an invariant measure for $\overline{X}$.
\end{proposition}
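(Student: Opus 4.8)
The plan is to set $\mu := \overline{\Omega}^r$, which is a volume form on $P$ because $\overline{\Omega}$ is non-degenerate, and to prove directly that $\mathcal{L}_{\overline{X}}(g\mu) = 0$, i.e. that $g\mu$ is invariant under the flow of $\overline{X}$. The starting point is the elementary fact that a Hamiltonian vector field preserves the symplectic form and hence the associated Liouville volume: since $g\overline{X}$ is Hamiltonian, ${\bf i}_{g\overline{X}}\overline{\Omega}$ is closed, so Cartan's formula together with $d\overline{\Omega}=0$ gives $\mathcal{L}_{g\overline{X}}\overline{\Omega}=0$ and therefore $\mathcal{L}_{g\overline{X}}\mu = r\,(\mathcal{L}_{g\overline{X}}\overline{\Omega})\wedge\overline{\Omega}^{r-1}=0$.

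The key manipulation is then to transfer this invariance of $\mu$ under $g\overline{X}$ into the desired invariance of $g\mu$ under $\overline{X}$. I would do this by noting that $\mu$ and $g\mu$ are both top-degree forms on $P$, hence closed, so Cartan's magic formula collapses to its exact term: $\mathcal{L}_{\overline{X}}(g\mu) = d({\bf i}_{\overline{X}}(g\mu))$ and $\mathcal{L}_{g\overline{X}}\mu = d({\bf i}_{g\overline{X}}\mu)$. Since $g$ is a scalar function, the interior product satisfies ${\bf i}_{\overline{X}}(g\mu) = g\,{\bf i}_{\overline{X}}\mu = {\bf i}_{g\overline{X}}\mu$, so the two right-hand sides coincide and $\mathcal{L}_{\overline{X}}(g\mu) = \mathcal{L}_{g\overline{X}}\mu = 0$, which is exactly the claim.

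The only point that requires care — and the main conceptual obstacle — is that the Lie derivative is not $C^\infty(P)$-linear in its vector-field argument, so one cannot naively write $\mathcal{L}_{g\overline{X}} = g\,\mathcal{L}_{\overline{X}}$; the extra term $dg\wedge{\bf i}_{\overline{X}}(\cdot)$ must be accounted for. The argument above circumvents this by routing everything through the interior product and using that the exterior derivative annihilates top-degree forms. Equivalently, one may use the product rule $\mathcal{L}_{\overline{X}}(g\mu) = \overline{X}[g]\,\mu + g\,\mathcal{L}_{\overline{X}}\mu$ together with the scaling identity $\mathcal{L}_{g\overline{X}}\mu = g\,\mathcal{L}_{\overline{X}}\mu + dg\wedge{\bf i}_{\overline{X}}\mu$ and the top-degree identity $dg\wedge{\bf i}_{\overline{X}}\mu = \overline{X}[g]\,\mu$ (which follows from $0 = {\bf i}_{\overline{X}}(dg\wedge\mu)$); combining these yields $g\,\mathcal{L}_{\overline{X}}\mu = -\overline{X}[g]\,\mu$ and hence the same conclusion. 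Note that the result is purely symplectic and uses nothing about the specific structure of $T^*S$ or the gyroscopic tensor, which is why it is stated for a general symplectic manifold.
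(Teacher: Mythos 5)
Your proposal is correct. Your primary argument is organised somewhat differently from the paper's: you exploit the fact that $\mu=\overline{\Omega}^r$ and $g\mu$ are top-degree (hence closed) forms, so Cartan's formula reduces both Lie derivatives to exact terms, and then you identify ${\bf i}_{\overline{X}}(g\mu)={\bf i}_{g\overline{X}}\mu$ using the $C^\infty(P)$-linearity of the interior product in both slots; this gives $\mathcal{L}_{\overline{X}}(g\mu)=\mathcal{L}_{g\overline{X}}\mu=0$ in one stroke and sidesteps any explicit bookkeeping of the $dg$-correction terms. The paper instead works with the Leibniz identities directly: it expands $0=\mathcal{L}_{g\overline{X}}\overline{\Omega}^r=g\,\mathcal{L}_{\overline{X}}\overline{\Omega}^r+dg\wedge {\bf i}_{\overline{X}}\overline{\Omega}^r$, uses the vanishing of the $(2r+1)$-form $dg\wedge\overline{\Omega}^r$ to get $dg\wedge {\bf i}_{\overline{X}}\overline{\Omega}^r=\overline{X}[g]\,\overline{\Omega}^r$, and combines this with the product rule for $\mathcal{L}_{\overline{X}}(g\overline{\Omega}^r)$ -- which is precisely the "equivalent" route you sketch in your final paragraph. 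Both arguments rest on the same two facts (invariance of the Liouville volume under the Hamiltonian field $g\overline{X}$, and a top-degree cancellation); yours is marginally cleaner, the paper's makes the cancellation of $\overline{X}[g]\,\overline{\Omega}^r$ explicit. Either is a complete proof.
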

\begin{proof}  
Since $g\overline{X}$ is a Hamiltonian vector field with respect to the symplectic structure $\overline{\Omega}$, it follows that $\overline{\Omega}^r$ is an invariant measure for $g \overline{X}$. Thus, using the properties of the Lie derivative operator ${\mathcal L}$, we have that
\begin{equation*}
0 = {\mathcal L}_{g\overline{X}}\overline{\Omega}^r = g {\mathcal L}_{\overline{X}}\overline{\Omega}^r + dg \wedge i_{\overline{X}}\overline{\Omega}^r.
\end{equation*}
On the other hand, considering that $dg \wedge \overline{\Omega}^r$ is a $(2r+1)$-form on $P$, it identically vanishes, and hence
\[
0 = i_{\overline{X}}(dg \wedge \overline{\Omega}^r) = \overline{X}[g]\overline{\Omega}^r - dg \wedge i_{\overline{X}}\overline{\Omega}^r.
\]
Therefore, using the properties of the Lie derivative and the above equalities, we have
\[
{\mathcal L}_{\overline{X}}(g \overline{\Omega}^r) = g \,{\mathcal L}_{\overline{X}}\overline{\Omega}^r + \overline{X}[g] \overline{\Omega}^r=0.
\]
\end{proof}

\subsection{ $\phi$-simple Chaplygin systems and Hamiltonisation}
\label{Chap-Hamiltonisation}

We now introduce the notion of a  \emph{$\phi$-simple} Chaplygin system that is central to the  results of our paper.

\begin{definition}
\label{D:sigma-gyro}
 A non-holonomic Chaplygin system is said to be \defn{$\phi$-simple},  if  there exists a function $\phi \in C^\infty(S)$ such that the gyroscopic 
 tensor ${\mathcal T}$ satisfies
\begin{equation}\label{Ham-condition}
{\mathcal T}(Y, Z) =Z[\phi]Y - Y[\phi]Z, 
\end{equation}
for all $ Y, Z \in {\frak X}(S)$.
\end{definition}

\begin{remark}
\label{rmk:phi-simple-Noether}
Considering that the definition of the gyroscopic tensor $\mathcal{T}$ is independent of the potential energy, we conclude that
the notion of $\phi$-simplicity is   weakly Noetherian. Namely,
if a   $G$-Chaplygin nonholonomic system is $\phi$-simple, then it  continues to 
be $\phi$-simple under the addition of a
$G$-invariant potential.
\end{remark}

The following is the main result of the paper:

\begin{theorem}
\label{T:measure-gyro}
\begin{enumerate}
\item[(i)] A nonholonomic Chaplygin system is $\phi$-simple if and only if $\Omega_{nh}$ is conformally symplectic with 
conformal factor  $\exp(\phi \circ \tau_S)$ (i.e. $d(\exp(\phi \circ \tau_S)\Omega_{nh}=0$). 
\item[(ii)] The reduced equations of  motion of a $\phi$-simple non-holonomic Chaplygin system    possess
 the basic invariant measure $\mu = \exp (\sigma \circ \tau_S) \, \nu$, where $\nu$ is the Liouville measure and  $\sigma=(r-1)\phi$.
 \item[(iii)] If $r=2$ then statement (ii) may be inverted: if the reduced equations of motion possess
 the basic invariant measure $\mu = \exp (\phi \circ \tau_S) \, \nu$, then the system is $\phi$-simple.
\end{enumerate}
\end{theorem}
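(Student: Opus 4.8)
The plan is to reduce all three statements to the single observation that $\phi$-simplicity is equivalent to the pointwise identity $\Omega_{\mathcal T}=-\tau_S^{*}(d\phi)\wedge\lambda_S$, and then to exploit that $\Omega_{can}=-d\lambda_S$ is exact. To establish this equivalence I would evaluate the defining formula \eqref{eq:defgyrotwoform} on $U,V\in T_\alpha(T^*S)$ and use the tautological identity $\lambda_S(\alpha)(U)=\alpha((T_\alpha\tau_S)(U))$: condition \eqref{Ham-condition} then translates verbatim into $\Omega_{\mathcal T}=-\tau_S^{*}(d\phi)\wedge\lambda_S$, and conversely \eqref{Ham-condition} is recovered by letting $\alpha$ and the projected vectors $(T_\alpha\tau_S)(U),(T_\alpha\tau_S)(V)$ range freely. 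Writing $\Phi:=\phi\circ\tau_S$, the forward implication of (i) is then immediate, since
\[
\exp(\Phi)\,\Omega_{nh}=\exp(\Phi)\bigl(-d\lambda_S-\tau_S^{*}(d\phi)\wedge\lambda_S\bigr)=-d\bigl(\exp(\Phi)\,\lambda_S\bigr)
\]
is exact and hence closed.

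The converse of (i) is the step I expect to be the main obstacle, and I would handle it with a homogeneity argument. Set $\beta:=\exp(\Phi)\,\Omega_{nh}+d\bigl(\exp(\Phi)\,\lambda_S\bigr)$. Expanding with $\Omega_{nh}=\Omega_{can}+\Omega_{\mathcal T}$ and $\Omega_{can}=-d\lambda_S$, the $d\lambda_S$ contributions cancel and one is left with $\beta=\exp(\Phi)\bigl(\Omega_{\mathcal T}+\tau_S^{*}(d\phi)\wedge\lambda_S\bigr)$; in particular $\beta$ is semi-basic and, by \eqref{Omega-T-local}, its coefficients are homogeneous of degree one in the fibre coordinates. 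The hypothesis $d(\exp(\Phi)\Omega_{nh})=0$ forces $\beta$ to be closed. Let $E$ denote the Liouville (Euler) vector field generating the fibre dilations (locally $E=\sum_i p_i\,\partial/\partial p_i$). Semi-basicness gives ${\bf i}_E\beta=0$, degree-one homogeneity gives $\mathcal{L}_E\beta=\beta$, and Cartan's formula together with $d\beta=0$ yields $\beta=\mathcal{L}_E\beta={\bf i}_E\,d\beta+d\,{\bf i}_E\beta=0$. Thus $\Omega_{\mathcal T}=-\tau_S^{*}(d\phi)\wedge\lambda_S$, and the system is $\phi$-simple.

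Part (ii) I would obtain from (i) and Proposition~\ref{conformal-invariant-measure}. By (i) the $2$-form $\overline\Omega:=\exp(\Phi)\,\Omega_{nh}$ is symplectic, and from ${\bf i}_{\overline{X}_{nh}}\Omega_{nh}=dH$ the rescaled field $g\,\overline{X}_{nh}$ with $g:=\exp(-\Phi)$ is Hamiltonian for $\overline\Omega$; Proposition~\ref{conformal-invariant-measure} then makes $g\,\overline\Omega^{\,r}$ invariant. Since $\Omega_{\mathcal T}$ is semi-basic, each term of $(\Omega_{can}+\Omega_{\mathcal T})^r$ carrying at least one factor $\Omega_{\mathcal T}$ contains fewer than $r$ of the $dp_i$ and therefore vanishes, so $\Omega_{nh}^{\,r}=\nu$ and
\[
g\,\overline\Omega^{\,r}=\exp(-\Phi)\exp(r\Phi)\,\nu=\exp\bigl((r-1)\Phi\bigr)\,\nu=\exp(\sigma\circ\tau_S)\,\nu,\qquad\sigma=(r-1)\phi.
\]
Equivalently, one may compute the contraction \eqref{eq:1-form-theta-def} for a $\phi$-simple system to obtain $\Theta=(r-1)\,d\phi$ and invoke Theorem~\ref{T:measures-1-form}(i).

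For (iii), the key point is that at $r=2$ the vector bundle of skew $(1,2)$ tensors on $S$ has the same rank, namely $2$, as $T^*S$. The linear map sending a $1$-form $\omega$ to the tensor $(Y,Z)\mapsto\langle\omega,Y\rangle Z-\langle\omega,Z\rangle Y$ is injective, hence an isomorphism, so $\mathcal{T}$ is necessarily of this form for a unique $1$-form $\omega$. Contracting this expression as in \eqref{eq:1-form-theta-def} and using $r=2$ identifies $\omega=-\Theta$, giving the intrinsic normal form $\mathcal{T}(Y,Z)=\Theta(Z)\,Y-\Theta(Y)\,Z$. By Theorem~\ref{T:measures-1-form}(i) the invariance of $\mu=\exp(\phi\circ\tau_S)\,\nu$ is equivalent to $\Theta=d\phi$, and substituting this into the normal form gives $\mathcal{T}(Y,Z)=Z[\phi]\,Y-Y[\phi]\,Z$, i.e. $\phi$-simplicity. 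I would note that this rank coincidence is special to $r=2$: for $r\geq 3$ the single $1$-form $\Theta$ no longer determines $\mathcal{T}$, which is precisely why (ii) cannot be reversed in general.
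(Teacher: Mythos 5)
Your proposal is correct, and for the two nontrivial steps it takes a genuinely different route from the paper. The paper channels everything through Lemma~\ref{previous-main-theorem}: the hard direction (closedness of $\exp(\phi\circ\tau_S)\Omega_{nh}$ implies $\Omega_{\mathcal T}=\lambda_S\wedge d(\phi\circ\tau_S)$, which is your identity $\Omega_{\mathcal T}=-\tau_S^{*}(d\phi)\wedge\lambda_S$) is proved there by contracting $d\Omega_{\mathcal T}$ with vertical lifts $\gamma^{\bf v}$ of arbitrary $1$-forms $\gamma$ on $S$ and restricting to the image of the section $\gamma$, with a local-coordinate verification of one identity. Your argument instead observes that $\phi$-simplicity makes $\exp(\Phi)\Omega_{nh}=-d(\exp(\Phi)\lambda_S)$ exact, and for the converse shows the defect $\beta=\exp(\Phi)\bigl(\Omega_{\mathcal T}+\tau_S^{*}(d\phi)\wedge\lambda_S\bigr)$ is a closed, semi-basic $2$-form with fibrewise degree-one coefficients, hence killed by $\beta={\mathcal L}_E\beta={\bf i}_E\,d\beta+d\,{\bf i}_E\beta$; this Euler-vector-field argument is shorter and more conceptual than the paper's, at the modest cost of invoking the local expression \eqref{Omega-T-local} for the homogeneity (which one could also read off intrinsically from the linearity of \eqref{eq:defgyrotwoform} in $\alpha$). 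For (ii) you follow the route the paper mentions but does not spell out (item (i) plus Proposition~\ref{conformal-invariant-measure} and $\Omega_{nh}^{\,r}=\nu$), whereas the paper's displayed proof goes through $\Theta=(r-1)\,d\phi$ and Theorem~\ref{T:measures-1-form}; you note both. For (iii) the paper computes in coordinates with the gyroscopic coefficients $C_{ij}^k$, while your rank-count argument — for $r=2$ the bundle $\Lambda^2T^*S\otimes TS$ has rank $2$ and the injective map $\omega\mapsto\bigl((Y,Z)\mapsto\omega(Y)Z-\omega(Z)Y\bigr)$ is therefore an isomorphism, after which the trace \eqref{eq:1-form-theta-def} identifies $\omega=-\Theta$ — is coordinate-free and makes transparent exactly why the implication fails for $r\geq 3$. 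All steps check out.
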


In particular, item $(i)$ implies:
\begin{corollary}\label{l:Hamiltonisation-geometric}
A $\phi$-simple Chaplygin system is Hamiltonisable after the time reparametrisation  $dt =\exp(-\phi (s))\, d\tau$.
\end{corollary}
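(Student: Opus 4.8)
The plan is to deduce the Corollary directly from item $(i)$ of Theorem~\ref{T:measure-gyro}, reading off the data required by the Definition of Hamiltonisability. First I would invoke item $(i)$: since the system is $\phi$-simple, the $2$-form $\overline{\Omega} := \exp(\phi \circ \tau_S)\,\Omega_{nh}$ is closed. Because $\Omega_{nh}$ is non-degenerate by Theorem~\ref{T:almostsymplectic} and $\exp(\phi \circ \tau_S)$ is a strictly positive function on $T^*S$, multiplication by it preserves non-degeneracy, so $\overline{\Omega}$ is both closed and non-degenerate, i.e.\ symplectic.

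Next I would set $g := \exp(-\phi \circ \tau_S)$, which is the pull-back to $T^*S$ of the positive function $\exp(-\phi)$ on $S$, so that $g^{-1} = \exp(\phi \circ \tau_S)$ and $\overline{\Omega} = g^{-1}\Omega_{nh}$. Starting from the almost symplectic characterisation $\mathbf{i}_{\overline{X}_{nh}}\Omega_{nh} = dH$ of Theorem~\ref{T:almostsymplectic}, I would compute
\[
\mathbf{i}_{g\overline{X}_{nh}}\overline{\Omega} = g\,\mathbf{i}_{\overline{X}_{nh}}\left(g^{-1}\Omega_{nh}\right) = g\,g^{-1}\,\mathbf{i}_{\overline{X}_{nh}}\Omega_{nh} = \mathbf{i}_{\overline{X}_{nh}}\Omega_{nh} = dH,
\]
using $\R$-linearity of the interior product in its vector argument together with the fact that $g^{-1}$ is a scalar function that may be pulled out of the contraction. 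Hence $g\,\overline{X}_{nh}$ is a Hamiltonian vector field for the symplectic form $\overline{\Omega}$ with Hamiltonian $H$, which is exactly the statement that the system is Hamiltonisable in the sense of the Definition, with rescaling function $g$ and symplectic form $\overline{\Omega} = g^{-1}\Omega_{nh}$.

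Finally, I would match the time reparametrisation: by the Definition, Hamiltonisation with the rescaling $\overline{X}_{nh}\mapsto g\,\overline{X}_{nh}$ corresponds to the change of time $dt = g(s)\,d\tau$, and here $g = \exp(-\phi)$, giving precisely $dt = \exp(-\phi(s))\,d\tau$ as claimed. There is no genuine analytic obstacle in this Corollary: all of the substance resides in item $(i)$ of Theorem~\ref{T:measure-gyro}, and what remains is purely formal. The one point demanding care is the bookkeeping between the \emph{conformal factor} $\exp(\phi\circ\tau_S)$, by which one multiplies $\Omega_{nh}$ to obtain a closed form, and the \emph{rescaling function} $g = \exp(-\phi\circ\tau_S)$, by which one multiplies the vector field; these are reciprocal, and it is this reciprocity that fixes the sign of $\phi$ in the resulting reparametrisation $dt = \exp(-\phi)\,d\tau$.
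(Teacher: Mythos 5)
Your proposal is correct and follows exactly the route the paper intends: the paper presents this corollary as an immediate consequence of item $(i)$ of Theorem~\ref{T:measure-gyro}, with the formal identity $\mathbf{i}_{g\overline{X}_{nh}}(g^{-1}\Omega_{nh})=dH$ and the reciprocity between the conformal factor $\exp(\phi\circ\tau_S)$ and the rescaling function $g=\exp(-\phi\circ\tau_S)$ already set up in the Definition of Hamiltonisability and the surrounding discussion. Nothing is missing.
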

\begin{remark}
Recently Jovanovi\'c~\cite{Jov2019} proved the existence of a Chaplygin system with $r>2$ degrees of freedom
 that possesses an invariant measure
but does not allow a Hamiltonisation, and hence is not $\phi$-simple. 
This shows that the reciprocal of the statement in item $(ii)$ is not true in general if $r>2$. 
Another example of this instance is the Chaplygin sphere treated as a Chaplygin system with $G=\R^2$. The system has an invariant measure but, as shown
in \cite[Section 3]{EhlersKoiller}, the system does not allow a Hamiltonisation at the $T^*\SO(3)$ level.\footnote{We
mention that the Chaplygin sphere does allow a Hamiltonisation when reduced by the larger group 
$\mbox{SE}(2)$. This was first shown by Borisov and Mamaev \cite{BorMamChap}, and the underlying geometry
of this result was first clarified in \cite{LGN10}. }
\end{remark}

For the proof of Theorem~\ref{T:measure-gyro} we will require the following. 
\begin{lemma}\label{previous-main-theorem}
The following conditions are equivalent:
\begin{enumerate}
\item $\Omega_{nh}$ is conformally symplectic with conformal factor $\exp(\phi \circ \tau_S)$,
\item $d\Omega_{nh}=-d(\phi \circ \tau_S) \wedge \Omega_{nh}$,
\item The gyroscopic 2-form $\Omega_{\mathcal T} =\lambda_S\wedge d(\phi \circ \tau_S)$, where $\lambda_S$ is the Liouville $1$-form on $T^*S$ given by
\begin{equation}\label{Liouville-1-form}
\lambda_S(\alpha)(U) = \alpha((T_\alpha\tau_S)(U)), \; \; \mbox{ for } \alpha \in T^*S \mbox{ and } U\in T_\alpha(T^*S).
\end{equation}
\end{enumerate}
\end{lemma}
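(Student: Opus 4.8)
The plan is to establish the cycle (1) $\Leftrightarrow$ (2) $\Leftrightarrow$ (3), writing $\psi := \phi \circ \tau_S$ throughout and reserving the real work for the equivalence (2) $\Leftrightarrow$ (3). The equivalence (1) $\Leftrightarrow$ (2) is immediate from the Leibniz rule: expanding $d(\exp(\psi)\,\Omega_{nh}) = \exp(\psi)\,(d\psi \wedge \Omega_{nh} + d\Omega_{nh})$ and using that $\exp(\psi)$ is nowhere vanishing shows that $\exp(\psi)\,\Omega_{nh}$ is closed precisely when $d\Omega_{nh} = -d\psi \wedge \Omega_{nh}$, which is (2).

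For (3) $\Rightarrow$ (2) I would argue directly. Since $\Omega_{can} = -d\lambda_S$ is closed we have $d\Omega_{nh} = d\Omega_{\mathcal T}$, and if $\Omega_{\mathcal T} = \lambda_S \wedge d\psi$ then, using $d(d\psi) = 0$ and $d\lambda_S = -\Omega_{can}$, one computes $d\Omega_{\mathcal T} = d\lambda_S \wedge d\psi = -\Omega_{can} \wedge d\psi = -d\psi \wedge \Omega_{can}$. On the other hand $d\psi \wedge \Omega_{\mathcal T} = d\psi \wedge \lambda_S \wedge d\psi = 0$ because $d\psi \wedge d\psi = 0$, so that $-d\psi \wedge \Omega_{nh} = -d\psi \wedge \Omega_{can} = d\Omega_{nh}$, which is (2).

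The crux is (2) $\Rightarrow$ (3), and the idea is to contract (2) against the vertical lift $\gamma^{\bf v}$ of an arbitrary $1$-form $\gamma \in \Omega^1(S)$ so as to reassemble the gyroscopic tensor. On the left, because $\Omega_{\mathcal T}$ is semi-basic and $\gamma^{\bf v}$ is vertical, Cartan's formula gives ${\bf i}_{\gamma^{\bf v}}d\Omega_{nh} = {\bf i}_{\gamma^{\bf v}}d\Omega_{\mathcal T} = \mathcal{L}_{\gamma^{\bf v}}\Omega_{\mathcal T} = \tau_S^*(\mathcal{T}_\gamma)$, the final identity being the one already computed in the proof of Theorem~\ref{T:almostsymplectic}, where $\mathcal{T}_\gamma(u,v) = \langle \gamma, \mathcal{T}(u,v) \rangle$. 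On the right I would use $\gamma^{\bf v}[\psi] = 0$ (as $\psi$ is basic and $\gamma^{\bf v}$ is vertical), ${\bf i}_{\gamma^{\bf v}}\Omega_{\mathcal T} = 0$, and the relation ${\bf i}_{\gamma^{\bf v}}\Omega_{can} = -\tau_S^*\gamma$ (a consequence of \eqref{Omega-complete-vertical}) to obtain ${\bf i}_{\gamma^{\bf v}}(-d\psi \wedge \Omega_{nh}) = -d\psi \wedge \tau_S^*\gamma = \tau_S^*(\gamma \wedge d\phi)$, where I also use $d\psi = \tau_S^*(d\phi)$. Equating the two sides and invoking the injectivity of $\tau_S^*$ on forms pulled back from $S$ yields $\mathcal{T}_\gamma = \gamma \wedge d\phi$ for every $\gamma$, which unwinds to the pointwise identity $\mathcal{T}(u,v) = v[\phi]\,u - u[\phi]\,v$.

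Finally I would substitute this tensor identity into the definition \eqref{eq:defgyrotwoform} of $\Omega_{\mathcal T}$: for $\alpha \in T^*S$ and $U,V \in T_\alpha(T^*S)$, writing $u = (T_\alpha \tau_S)(U)$ and $v = (T_\alpha \tau_S)(V)$ gives $\Omega_{\mathcal T}(\alpha)(U,V) = \alpha(v[\phi]\,u - u[\phi]\,v) = v[\phi]\,\alpha(u) - u[\phi]\,\alpha(v)$, which coincides with $(\lambda_S \wedge d\psi)(\alpha)(U,V)$ once one recognizes $\lambda_S(\alpha)(U) = \alpha(u)$ from \eqref{Liouville-1-form} and $d\psi(\alpha)(V) = v[\phi]$. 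This is exactly (3), closing the cycle. I expect the contraction step (2) $\Rightarrow$ (3) to be the main obstacle, since it requires careful bookkeeping of the semi-basic/vertical interplay and, crucially, the reconstruction of the full tensor $\mathcal{T}$ from the family of scalar identities $\mathcal{T}_\gamma = \gamma \wedge d\phi$; the remaining implications are direct computations.
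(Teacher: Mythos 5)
Your proposal is correct, and for the easy implications it coincides with the paper's argument: the equivalence (1)$\Leftrightarrow$(2) is the same Leibniz-rule expansion, and your (3)$\Rightarrow$(2) is the same computation the paper performs as (3)$\Rightarrow$(1) (using $d(\phi\circ\tau_S)\wedge\Omega_{\mathcal T}=0$ and $d\lambda_S=-\Omega_{can}$). For the key implication (2)$\Rightarrow$(3) both you and the paper begin by contracting with the vertical lift $\gamma^{\bf v}$ of an arbitrary $\gamma\in\Omega^1(S)$, but you then diverge: the paper restricts the resulting identity to the image of the section $\gamma:S\to T^*S$ and exploits a coordinate-checked Euler-type homogeneity identity $({\bf i}_{\gamma^{\bf v}}d\Omega_{\mathcal T})\circ\gamma=\Omega_{\mathcal T}\circ\gamma$ together with $\lambda_S\circ\gamma=(\tau_S^*\gamma)\circ\gamma$ to conclude $\Omega_{\mathcal T}=\lambda_S\wedge d(\phi\circ\tau_S)$ directly, without ever naming the tensor $\mathcal T$; you instead invoke the identity $\mathcal{L}_{\gamma^{\bf v}}\Omega_{\mathcal T}=\tau_S^*(\mathcal{T}_\gamma)$ (which appears in the paper inside the proof of Theorem~\ref{T:almostsymplectic}, and which holds unconditionally since the hypothesis $d\Omega_{\mathcal T}=0$ there enters only the leftmost equality of that chain), extract the pointwise tensor identity $\mathcal{T}(u,v)=v[\phi]u-u[\phi]v$, and then reassemble $\Omega_{\mathcal T}$ from the definition \eqref{eq:defgyrotwoform}. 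Your route is slightly more economical globally: it passes through $\phi$-simplicity as an intermediate, so the reverse direction of Theorem~\ref{T:measure-gyro}(i) would come for free, whereas the paper derives that separately from item (3) of the lemma; the price is that you must justify that $\mathcal{L}_{\gamma^{\bf v}}\Omega_{\mathcal T}=\tau_S^*(\mathcal{T}_\gamma)$ independently of any closedness assumption, which you should state explicitly (it follows from the fiberwise linearity and semi-basic character of $\Omega_{\mathcal T}$, since the flow of $\gamma^{\bf v}$ is $\alpha\mapsto\alpha+t\gamma$ and covers the identity on $S$). Both arguments ultimately rest on the injectivity of $\tau_S^*$, which is legitimate because $\tau_S$ is a surjective submersion.
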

\begin{proof}
We have
\begin{equation*}
d(\exp(\phi \circ \tau_S)\Omega_{nh})= \exp(\phi \circ \tau_S)\ \left ( d(\phi \circ \tau_S)\wedge \Omega_{nh} + d\Omega_{nh}  \right ) 
\end{equation*}
which shows the equivalence of $(i)$ and $(ii)$.

$(iii)$ $\implies$ $(i)$ This result was first proved by Stanchenko~\cite[Proposition 2]{Stanchenko}  (see also
Cantrijn et al~\cite[Equation~(18)]{CaCoLeMa}). For the sake of completeness, we present a proof here.
Using that $\Omega_{nh}= \Omega_{can} + \Omega_\mathcal{T}$ and that $d\Omega_{can}=0$, we have
\[
\begin{split}
d\left(\exp(\phi \circ \tau_S)\Omega_{nh}\right) &=  \exp \left (\phi \circ \tau_S \right )
\left (d(\phi \circ \tau_s) \wedge (\Omega_{can} + 
\Omega_{\mathcal T}) + d\Omega_{\mathcal T}  \right ) \\
&=   \exp \left (\phi \circ \tau_S \right )
 \left ( d(\phi \circ \tau_s) \wedge \Omega_{can}  + d\Omega_{\mathcal T}  \right ),
\end{split}
\]
where we have used $d(\phi \circ \tau_s) \wedge  \Omega_{\mathcal T}=0$ in the second equality as is implied by  item $(iii)$. On the other hand, taking the exterior differential 
in the expression  $\Omega_{\mathcal T} =\lambda_S\wedge d(\phi \circ \tau_S)$ and
using the well-known relation $d\lambda_S= -\Omega_{can}$  yields
\[
d\Omega_{\mathcal T} = - d(\phi \circ \tau_S) \wedge \Omega_{can},
\]
which shows that $d(\exp(\phi \circ \tau_S)\Omega_{nh})=0$.

$(ii)$ $\implies$ $(iii)$
Starting from 
$d\Omega_{nh} = -d(\phi \circ \tau_S) \wedge \Omega_{nh}$ and using that 
 $\Omega_{nh} = \Omega_{can} + \Omega_{\mathcal T}$ and  $d\Omega_{can} = 0$, we deduce that
\begin{equation}\label{diff-Omega-T}
d\Omega_{\mathcal T} = -d(\phi \circ \tau_S) \wedge \Omega_{can} - d(\phi \circ \tau_S) \wedge \Omega_{\mathcal T}.
\end{equation}
Now, let $\gamma \in \Omega^1(S)$  and $\gamma^{\bf v} \in \mathfrak{X}(T^*S)$ be its vertical lift 
(see \eqref{vertical-lift-dual}). Then, since $\gamma^{\bf v}$ is a vertical vector field  (see e.g. (\ref{vertical-complete-lift-dual}))  and
the 2-form  $\Omega_{\mathcal T}$ is semi-basic, we obtain the identities
\[
{\bf i}_{\gamma^{\bf v}}(d(\phi \circ \tau_S)) = 0, \qquad i_{\gamma^{\bf v}}\Omega_{\mathcal T} = 0.
\]
Therefore, from (\ref{diff-Omega-T}), we have that
\begin{equation}\label{diff-Omega-T-1}
{\bf i}_{\gamma^{\bf v}}d\Omega_{\mathcal T} = d(\phi \circ \tau_S) \wedge i_{\gamma^{\bf v}}\Omega_{can},
\end{equation}
which in view of (\ref{contraction-vertical-Omega-can}) becomes
\[
{\bf i}_{\gamma^{\bf v}}d\Omega_{\mathcal T} = -d(\phi \circ \tau_S) \wedge \tau_S^*\gamma.
\]
In particular, we obtain that
\begin{equation}\label{diff-Omega-T-2}
({\bf i}_{\gamma^{\bf v}}d\Omega_{\mathcal T}) \circ \gamma = -(d(\phi \circ \tau_S) \wedge \tau_S^*\gamma) \circ \gamma.
\end{equation}
In addition, using the local expressions of the $2$-form $\Omega_{\mathcal T}$, given respectively by (\ref{Omega-T-local}) and (\ref{vertical-complete-lift-dual}),   we can prove that
\begin{equation}\label{diff-Omega-T-3}
({\bf i}_{\gamma^{\bf v}}d\Omega_{\mathcal T}) \circ \gamma = \Omega_{\mathcal T} \circ \gamma.
\end{equation} 
Moreover, the definition (\ref{Liouville-1-form}) of the Liouville $1$-form $\lambda_S$, implies that
\[
\left( \lambda_S(\gamma(s)) \right)(U) = \gamma(s)\left((T_{\gamma(s)}\tau_S)(U)\right) = \left(\tau_S^*\gamma\right)(\gamma(s))(U),
\]
for $s \in S$ and $U\in T_{\gamma(s)}(T^*S)$. In other words, 
\[
\lambda_S \circ \gamma = (\tau_S^*\gamma) \circ \gamma.
\]
Thus, from (\ref{diff-Omega-T-2}) and (\ref{diff-Omega-T-3}), we conclude that
\[
\Omega_{\mathcal T} \circ \gamma = -(d(\phi \circ \tau_S) \wedge \lambda_S) \circ \gamma,
\]
and since the  $1$-form $\gamma$ is arbitrary we obtain 
\[
\Omega_{\mathcal T} = \lambda_S \wedge d(\phi \circ \tau_S),
\]
as required.
\end{proof}

We are now ready to present the proof of Theorem \ref{T:measure-gyro}.

\begin{proof}(of Theorem \ref{T:measure-gyro})
$(i)$ Suppose that the system is $\phi$-simple so~\eqref{Ham-condition} holds. Using this equation in the definition~\eqref{eq:defgyrotwoform} of $\Omega_\mathcal{T}$  we have that, for $\alpha \in T^*S$ and $U, V \in T_\alpha(T^*S)$
 \[
 \begin{split}
\Omega_{\mathcal T}(\alpha)(U, V) &=  d\phi (T_\alpha \tau_S V) \alpha (T_\alpha\tau_S(U))  - d\phi (T_\alpha \tau_S U) \alpha (T_\alpha\tau_S(V)) 
\\
& = d(\phi \circ \tau_S)(\alpha) (V) \lambda_S(\alpha)(U) - d(\phi \circ \tau_S)(\alpha) (U) \lambda_S(\alpha)(V)\\
& = \lambda_S \wedge d(\phi \circ \tau_S)(\alpha)(U,V).
\end{split}
\]
So, it follows that $\Omega_{\mathcal T} = \lambda_S \wedge d(\phi \circ \tau_S)$ and Lemma~\ref{previous-main-theorem} implies that $d(\exp(\phi \circ \tau_S)\Omega_{nh}) = 0$ as required. 

Conversely, assume that $d(\exp(\phi \circ \tau_S)\Omega_{nh}) = 0$. Then, using again Lemma \ref{previous-main-theorem}, we deduce that $\Omega_{\mathcal T} = \lambda_S \wedge d(\phi \circ \tau_S)$. So, from the definition~\eqref{eq:defgyrotwoform} of $\Omega_\mathcal{T}$ we obtain
\[
\begin{split}
&\alpha\left({\mathcal T}(\tau_S(\alpha))((T_\alpha\tau_S)(U), (T_\alpha \tau_S)(V)\right) 
= \\ &\qquad \qquad \qquad \alpha\left(    (T_\alpha \tau_S)(V)[\phi] (T_\alpha \tau_S)(U)- (T_\alpha \tau_S)(U)[\phi] (T_\alpha \tau_S)(V)\right).
\end{split}
\]
Therefore, since $\tau_S$ is a submersion, we conclude that
\[
{\mathcal T}(s)(Y, Z) = Z[\phi]Y - Y[\phi]Z, \; \; \mbox{ for } Y, Z \in T_sS,
\]
that is, the system is $\phi$-simple.

$(ii)$ This is a consequence of item $(i)$ and Proposition~\ref{conformal-invariant-measure} but we present a simple alternative proof.
Suppose that the system is $\phi$-simple so~\eqref{Ham-condition} holds and let $\sigma=(r-1)\phi$. Let $Y\in \frak{X}(S)$, then
 the 1-form $\Theta$ defined by~\eqref{eq:1-form-theta-def}
satisfies
\begin{equation*}
\begin{split}
\Theta(Y)&= \sum_{i=1}^r  \left \langle X^i \, , \, \left (Y[\phi]X_i - X_i[\phi]Y  \right )\right \rangle \\
&=rY[\phi] -\sum_{i=1}^k \langle X_i,Y\rangle X_i[\phi]  \\
&=(r-1)Y[\phi].
\end{split}
\end{equation*}
Therefore $\Theta=d\sigma$ and  Theorem~\ref{T:measures-1-form} implies the preservation of the measure $\mu = \exp (\sigma) \, \nu$. 

$(iii)$ Now suppose that $r=2$ and that the reduced equations of motion preserve the basic measure $\mu = \exp (\phi \circ \tau_S) \, \nu$. Then 
item $(i)$ of Theorem~\ref{T:measures-1-form} implies that $\Theta=d\phi$. In view of the local expression~\eqref{eq:Theta-local}
for $\Theta$ we obtain the following relations between the partial derivatives of $\phi$ and the gyroscopic coefficients:
\begin{equation*}
\frac{\partial \phi}{\partial s^1}=C_{21}^2(s), \qquad \frac{\partial \phi}{\partial s^2}=C_{12}^1(s).
\end{equation*}
Therefore,
\begin{equation*}
\begin{split}
{\mathcal T}\left ( \frac{\partial}{\partial s^1} , \frac{\partial}{\partial s^2} \right ) &= C_{12}^1(s) \frac{\partial}{\partial s^1}+ C_{12}^2(s) \frac{\partial}{\partial s^2}
\\
&= \frac{\partial \phi}{\partial s^2}\frac{\partial}{\partial s^1} - \frac{\partial \phi}{\partial s^1}\frac{\partial}{\partial s^2}.
\end{split}
\end{equation*}
The above expression shows that the $\phi$-simplicity relation~\eqref{Ham-condition} holds  for the basis 
$\{\frac{\partial}{\partial s^1},\frac{\partial}{\partial s^2}\}$, and hence, by linearity, for
 general $Y, Z\in \frak{X}(S)$.
\end{proof}

\begin{corollary}\label{C:ChapThm}
\begin{enumerate}
\item[(i)] A purely kinetic, nonholonomic Chaplygin system with 2 degrees of freedom possesses an invariant measure if and only if 
  it is $\phi$-simple.
  \item[(ii)] [Chaplygin's Reducing Multiplier Theorem \cite{ChapRedMult}]
 If a  Chaplygin system with 2 degrees of freedom preserves 
the basic measure $\mu = \exp (\sigma \circ \tau_S) \, \nu$, then
the system is  Hamiltonisable after the time reparametrisation  $dt =\exp(-\sigma (s))\, d\tau$. 
\item[(iii)] The Hamiltonisation of a $\phi$-simple   Chaplygin system by the  time reparametrisation \\* $dt =\exp(-\sigma (s))\, d\tau$
is weakly Noetherian. Namely, the same   time reparametrisation Hamiltonises the system under the addition  of an arbitrary $G$-invariant potential.
\end{enumerate}
\end{corollary}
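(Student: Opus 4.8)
The plan is to obtain all three items by assembling Theorems~\ref{T:measures-1-form} and~\ref{T:measure-gyro} with Corollary~\ref{l:Hamiltonisation-geometric}, exploiting throughout that the gyroscopic tensor $\mathcal{T}$, and hence $\Omega_{nh}$ and the $1$-form $\Theta$, depend only on the kinetic energy and the constraints and not on any potential.

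For item $(i)$ I would prove the two implications separately. That $\phi$-simplicity implies an invariant measure is immediate from Theorem~\ref{T:measure-gyro}$(ii)$. For the converse I would use the purely kinetic hypothesis: by Theorem~\ref{T:measures-1-form}$(ii)$ the existence of a smooth invariant measure is equivalent to the existence of a basic one, say $\mu = \exp(\sigma \circ \tau_S)\,\nu$, and then item $(i)$ of the same theorem forces $\Theta = d\sigma$. Since $r = 2$, I can now invoke Theorem~\ref{T:measure-gyro}$(iii)$ with $\phi = \sigma$ to conclude that the system is $\sigma$-simple, hence $\phi$-simple. The point to get right is that it is precisely the purely kinetic hypothesis that upgrades ``has an invariant measure'' to ``has a \emph{basic} invariant measure'', which is exactly what licenses the application of Theorem~\ref{T:measure-gyro}$(iii)$.

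For item $(ii)$ I would not assume absence of a potential. Preservation of the basic measure $\exp(\sigma \circ \tau_S)\,\nu$ gives $\Theta = d\sigma$ directly from Theorem~\ref{T:measures-1-form}$(i)$, which is stated for a general Hamiltonian $H = K + U$. As $r = 2$, Theorem~\ref{T:measure-gyro}$(iii)$ (with $\phi = \sigma$) yields $\sigma$-simplicity, and Corollary~\ref{l:Hamiltonisation-geometric} then produces the Hamiltonisation via the reparametrisation $dt = \exp(-\sigma(s))\,d\tau$. Because the proof of Theorem~\ref{T:measure-gyro}$(iii)$ already accommodates an arbitrary $G$-invariant potential, the presence of $U$ causes no difficulty here.

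Item $(iii)$ is the step I expect to need the most careful phrasing, since the weakly Noetherian character must be made precise rather than merely quoted. The key observation is that the symplectic form witnessing the Hamiltonisation, namely $\overline{\Omega} = \exp(\phi \circ \tau_S)\Omega_{nh}$, together with the reparametrising factor $g = \exp(-\phi(s))$, are built entirely from $\mathcal{T}$ and the reduced metric, so by Remark~\ref{rmk:phi-simple-Noether} they are unaffected by switching on a $G$-invariant potential $U$; by Theorem~\ref{T:measure-gyro}$(i)$ the form $\overline{\Omega}$ remains closed regardless of $U$. Adding $U$ changes only the reduced Hamiltonian, replacing $H$ by $H + \overline{U} \circ \tau_S$. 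Since the defining relation ${\bf i}_{g\overline{X}_{nh}}\overline{\Omega} = dH$ of the Hamiltonisation has its symplectic form and its time reparametrisation fixed independently of the potential (as recorded in Corollary~\ref{l:Hamiltonisation-geometric}), the very same reparametrisation Hamiltonises the system after $U$ is added, which is the asserted weak Noetherianity.
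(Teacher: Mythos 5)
Your proof is correct and follows essentially the same route as the paper, whose own proof is a one-line assembly of Theorem~\ref{T:measures-1-form}$(ii)$ with Theorem~\ref{T:measure-gyro}$(iii)$ for item $(i)$, item $(i)$ plus Corollary~\ref{l:Hamiltonisation-geometric} for item $(ii)$, and Remark~\ref{rmk:phi-simple-Noether} for item $(iii)$. The one point where you are more careful than the paper is item $(ii)$: the paper deduces it from item $(i)$, which carries a purely-kinetic hypothesis not present in item $(ii)$, whereas you go directly to Theorem~\ref{T:measure-gyro}$(iii)$ using the assumed \emph{basic} invariant measure, which is the cleaner route and correctly handles the presence of a potential.
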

\begin{proof}
$(i)$ is a direct consequence of item $(ii)$ of Theorem~\ref{T:measures-1-form} and item $(iii)$ of Theorem~\ref{T:measure-gyro}. Item $(ii)$
follows from item $(i)$ and Corollary \ref{l:Hamiltonisation-geometric}. Finally, item $(iii)$ follows from  Remark~\ref{rmk:phi-simple-Noether}.
\end{proof}

We now show that the sufficient
 conditions for Hamiltonisation that were recently obtained in Garc\'ia-Naranjo~\cite{LGN18} are equivalent to  $\phi$-simplicity.
For this matter we recall the so-called hypothesis   \ref{hyp} from this reference:
\begin{customhyp}{(H)}
\label{hyp}
The gyroscopic coefficients $C_{ij}^k$ written in the coordinates $(s^1, \dots, s^r)$ satisfy:
\begin{equation*}
C_{ij}^k= 0, \quad \mbox{for} \quad k\neq i\neq j\neq k, \qquad C_{ij}^j=C_{ik}^k \quad \mbox{for all } \quad j, k\neq i.
\end{equation*}
\end{customhyp}

It is shown in~\cite{LGN18} that~\ref{hyp} is an intrinsic condition (independent of the choice of coordinates). 
Moreover, in this reference it is also shown that if  \ref{hyp} holds, and  the basic measure  $\mu=\exp (\sigma)\nu$ is preserved by the
reduced flow, then the system is Hamiltonisable with the time reparametrisation $dt =\exp(\sigma/(1-r))\, d\tau$. Proposition~\ref{P:H-Hamilt} below shows that these two hypothesis taken together are  equivalent to 
the condition that the system is $\phi$-simple with $\phi =\sigma/(r-1)$, so the Hamiltonisation result of~\cite{LGN18} is 
a particular consequence of  Corollary~\ref{l:Hamiltonisation-geometric}.

Before presenting  Proposition~\ref{P:H-Hamilt}  and its proof, we note
 that   \ref{hyp} is equivalent to the existence of a 1-form $\beta$ on $S$ such that the gyroscopic tensor satisfies
${\mathcal T}(Y, Z) =\beta(Z)Y - \beta(Y) Z$, for vector fields $Y,Z\in \frak{X}(S)$. In this case, using its  definition~\eqref{eq:1-form-theta-def},
it is easy to show that the 1-form
 $\Theta =(r-1) \beta$. Therefore,  the condition   \ref{hyp}  may be reformulated as:
\begin{customhyp}{(H')}
\label{hypp}
The gyroscopic tensor satisfies
\begin{equation*}
{\mathcal T}(Y, Z) = \frac{1}{r-1}\left ( \Theta(Z)Y - \Theta(Y) Z \right ),
\end{equation*}
for vector fields $Y,Z\in \frak{X}(S)$.
\end{customhyp}

\begin{proposition}
\label{P:H-Hamilt}
A Chaplygin system is $\phi$-simple  
if and only if {\em  \ref{hyp}} holds and  the reduced equations of motion preserve 
the invariant measure $\mu=\exp (\sigma)\nu$ with $\sigma =(r-1)\phi$ (where, as usual, $\nu$ is the Liouville measure in $T^*S$).
\end{proposition}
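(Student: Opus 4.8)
The plan is to reduce the equivalence to the elementary identification of two $1$-forms on $S$, relying on the reformulation of hypothesis~\ref{hyp} recalled just above the proposition together with Theorem~\ref{T:measures-1-form}. Recall that \ref{hyp} is equivalent to the existence of a $1$-form $\beta$ on $S$ with $\mathcal{T}(Y,Z)=\beta(Z)Y-\beta(Y)Z$, and that in that case $\Theta=(r-1)\beta$. First I would treat the forward implication. If the system is $\phi$-simple, then by Definition~\ref{D:sigma-gyro} the gyroscopic tensor has the form $\mathcal{T}(Y,Z)=d\phi(Z)Y-d\phi(Y)Z$, which is exactly the shape above with $\beta=d\phi$; hence \ref{hyp} holds. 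Consequently $\Theta=(r-1)\beta=(r-1)\,d\phi=d\sigma$ with $\sigma=(r-1)\phi$, so $\Theta$ is exact, and item $(i)$ of Theorem~\ref{T:measures-1-form} yields that the basic measure $\mu=\exp(\sigma\circ\tau_S)\,\nu$ is preserved by the reduced equations.

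For the converse I would start from \ref{hyp}, which furnishes a $1$-form $\beta$ on $S$ with $\mathcal{T}(Y,Z)=\beta(Z)Y-\beta(Y)Z$ and $\Theta=(r-1)\beta$. The assumption that the reduced dynamics preserve $\mu=\exp(\sigma\circ\tau_S)\,\nu$ with $\sigma=(r-1)\phi$ gives, through item $(i)$ of Theorem~\ref{T:measures-1-form}, that $\Theta$ is exact with $\Theta=d\sigma=(r-1)\,d\phi$. Equating the two expressions for $\Theta$ and dividing by $r-1$ (which is legitimate since a genuine nonholonomic Chaplygin system has $r\geq 2$) produces $\beta=d\phi$. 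Substituting this back into the formula for $\mathcal{T}$ supplied by \ref{hyp} gives $\mathcal{T}(Y,Z)=d\phi(Z)Y-d\phi(Y)Z=Z[\phi]Y-Y[\phi]Z$, which is precisely the $\phi$-simplicity condition \eqref{Ham-condition}.

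There is no substantive obstacle here: the argument is essentially bookkeeping that matches $\beta$ against $d\phi$. The only point requiring a little care is the handling of the factor $r-1$ and the observation that it is nonzero, so that the passage from $(r-1)\beta=(r-1)\,d\phi$ to $\beta=d\phi$ is unambiguous and the $\phi$ produced in the converse is exactly the one prescribed by $\sigma=(r-1)\phi$.
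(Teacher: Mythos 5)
Your proof is correct and follows essentially the same route as the paper: both arguments hinge on the reformulation of \ref{hyp} as the existence of a $1$-form $\beta$ with $\mathcal{T}(Y,Z)=\beta(Z)Y-\beta(Y)Z$ and $\Theta=(r-1)\beta$, combined with item $(i)$ of Theorem~\ref{T:measures-1-form} to translate measure preservation into $\Theta=d\sigma$. The only cosmetic difference is in the forward direction, where the paper first invokes item $(ii)$ of Theorem~\ref{T:measure-gyro} to get the invariant measure and then deduces \ref{hypp}, while you observe directly that $\phi$-simplicity is \ref{hyp} with $\beta=d\phi$ and read off the exactness of $\Theta$; the underlying computation is the same.
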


\begin{proof}
Suppose that the Chaplygin system under consideration is $\phi$-simple. Then, item $(ii)$ of Theorem~\ref{T:measure-gyro}
implies that the measure $\mu=\exp (\sigma)\nu$ in the statement of the proposition is preserved by the flow of the reduced system.
Moreover, because of  item $(i)$ of Theorem~\ref{T:measures-1-form},  the invariance of $\mu$  implies that  $\Theta=d\sigma =(r-1)d\phi$.
Substituting $d\phi= \frac{1}{r-1}\Theta$ in~\eqref{Ham-condition} shows that   \ref{hypp}  and hence also    \ref{hyp} holds.

Conversely, using again  item $(i)$ of Theorem~\ref{T:measures-1-form}, the invariance of $\mu$ implies $\Theta = d\sigma$.
So   \ref{hypp}  implies that the system is  $\phi$-simple with $\phi=\sigma/(r-1)$.
\end{proof}

We finish this section with the following remark concerning the work of Hochgerner \cite{Hoch}.

\begin{remark}
Following \cite{LiMa} (see Proposition 16.5 in Chapter 1), the differential of $\Omega_{nh}$ may be 
written  uniquely  as
\[
d\Omega_{nh} = \psi + \frac{1}{r-1} (\delta \Omega_{nh}) \wedge \Omega_{nh},
\]
where $\psi$ is an ``effective $3$-form" with respect to $\Omega_{nh}$ and $\delta$ is the almost symplectic codifferential associated with $\Omega_{nh}$. This is Lepage's decomposition of the $3$-form $d\Omega_{nh}$ (for more details, see \cite{LiMa}).  

In a setup that is more general than that of Chaplygin systems, Hochgerner \cite[Theorem 2.3]{Hoch} gives sufficient conditions for Hamiltonisation by requiring  that 
 $\psi=0$, that the system is kinetic, and there exists an invariant measure.
Under these assumptions one can prove 
that the codifferential $\delta \Omega_{nh}= -d(\phi \circ \tau_S)$ for a certain $\phi\in C^\infty (S)$ and Lemma~\ref{previous-main-theorem}
 together with Theorem \ref{T:measure-gyro}  imply that the system is $\phi$-simple.  Hence, in the context of 
 Chaplygin systems,  the  Hamiltonisation 
 criteria of \cite{Hoch} are always satisfied by $\phi$-simple systems.
\end{remark}

\subsection{Relation of the gyroscopic tensor with previous constructions in the literature}
\label{SS:history}

In this section we show that  the gyroscopic tensor ${\mathcal T}$ appears in the previous works of Koiller~\cite{Koi} 
and Cantrijn et al~\cite{CaCoLeMa}.  The occurrence of  ${\mathcal T}$ in these works is in the geometric study of Chaplygin systems using
an affine connection approach.

Let ${\frak g}$ be the Lie algebra of $G$ and denote by $\xi_Q \in \frak{X}(Q)$ the infinitesimal generator of the $G$-action associated with $\xi \in {\frak g}$.
As is well known,  the \defn{curvature form} of the principal connection is the map   $\Psi: TQ \times_Q TQ \to {\frak g}$, characterised by the condition
\begin{equation}\label{curvature-connection}
(\Psi (\mbox{ hor}_q  \, Y, \mbox{ hor}_q \, Z))_Q(q) = (\mbox{ hor }[Y, Z] - [\mbox{ hor }Y, \mbox{ hor }Z])(q),
\end{equation}
for $Y, Z$ vector fields on $S$.
The following proposition gives an expression for the  gyroscopic tensor $\mathcal{T}$ in terms of $\Psi$.
\begin{proposition}\label{gyroscopic-curvature}
If $Y, Z$ are vector fields on $S$ and $q \in Q$ then
\[
{\mathcal T}(Y, Z)(\pi(q)) = -(T_q\pi)({\mathcal P}( (\Psi (\mbox{\em hor}_q  \, Y, \mbox{\em hor}_q \, Z))_Q(q))).
\]
\end{proposition}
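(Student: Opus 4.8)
The plan is to derive the identity directly, by rearranging the defining relation \eqref{curvature-connection} of the curvature form $\Psi$ and substituting the result into the definition \eqref{gyroscopic-tensor} of $\mathcal{T}$. The two facts about the horizontal lift that I would invoke are entirely standard: first, $\mbox{hor}[Y,Z]$ is a section of $D$, so it is fixed by the orthogonal projector, $\mathcal{P}(\mbox{hor}[Y,Z](q)) = \mbox{hor}[Y,Z](q)$; second, $\mbox{hor}[Y,Z]$ is $\pi$-related to $[Y,Z]$, so that $(T_q\pi)(\mbox{hor}[Y,Z](q)) = [Y,Z](\pi(q))$.

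First I would rewrite \eqref{curvature-connection} so as to isolate the bracket of the horizontal lifts:
\[
[\mbox{hor}\,Y, \mbox{hor}\,Z](q) = \mbox{hor}[Y,Z](q) - (\Psi(\mbox{hor}_q Y, \mbox{hor}_q Z))_Q(q).
\]
Applying the projector $\mathcal{P}$ to both sides, and using that it fixes the section $\mbox{hor}[Y,Z]$ of $D$, this becomes
\[
\mathcal{P}[\mbox{hor}\,Y, \mbox{hor}\,Z](q) = \mbox{hor}[Y,Z](q) - \mathcal{P}\left((\Psi(\mbox{hor}_q Y, \mbox{hor}_q Z))_Q(q)\right).
\]

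Next I would apply $T_q\pi$ to this equation. Since $\mbox{hor}[Y,Z]$ is $\pi$-related to $[Y,Z]$, the first term on the right maps to $[Y,Z](\pi(q))$, and after transposing it to the left we obtain
\[
(T_q\pi)\left(\mathcal{P}[\mbox{hor}\,Y, \mbox{hor}\,Z](q)\right) - [Y,Z](\pi(q)) = -(T_q\pi)\left(\mathcal{P}\left((\Psi(\mbox{hor}_q Y, \mbox{hor}_q Z))_Q(q)\right)\right).
\]
By Definition~\ref{D:defGyrTensor}, the left-hand side is precisely $\mathcal{T}(Y,Z)(\pi(q))$, which is the asserted formula.

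Since every step is a direct substitution, I do not expect any genuine obstacle; the only point requiring care is to cite the two correct properties of the horizontal lift at the appropriate places. I would also remark that the projector $\mathcal{P}$ cannot be dropped from the final expression: the infinitesimal generator $(\Psi(\mbox{hor}_q Y, \mbox{hor}_q Z))_Q(q)$ is a vertical vector lying in $\g\cdot q$, which need not belong to $D$, and applying $T_q\pi$ alone would annihilate it and produce zero, whereas its projection to $D$ survives.
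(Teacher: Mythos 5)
Your argument is correct and is essentially the paper's own proof: both rest on the two facts that $\mathcal{P}$ fixes $\mbox{hor}\,[Y,Z]$ (a section of $D$) and that $\mbox{hor}\,[Y,Z]$ is $\pi$-related to $[Y,Z]$, combined with the defining relation \eqref{curvature-connection} of $\Psi$; you merely start from \eqref{curvature-connection} and work toward Definition~\ref{D:defGyrTensor} while the paper goes in the opposite direction. Your closing remark on why $\mathcal{P}$ cannot be dropped (the generator is vertical, so $T_q\pi$ alone would annihilate it) is accurate and a worthwhile observation.
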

\begin{proof}
Using (\ref{gyroscopic-tensor}), we have 
\begin{equation*}
\begin{split}
{\mathcal T}(Y, Z)(\pi(q)) = (T_q\pi)({\mathcal P}[\mbox{ hor }Y, \mbox{ hor }Z] - \mbox{ hor }[Y, Z])(q).
\end{split}
\end{equation*}
But  $ \mbox{hor}\, [Y, Z] = {\mathcal P}(\mbox{hor}\, [Y, Z] )$ since $\mbox{ hor }[Y, Z]$ is a section of $D$, so using the characterisation
 \eqref{curvature-connection} of $\Psi$, 
the result follows. 
\end{proof}
Consider now the $(0,3)$ tensor field ${\mathcal K}$ on $S$ defined by
\begin{equation*}
{\mathcal K}(X,Y,Z) = \llangle \mbox{ hor } X , \mbox{ hor }  {\mathcal T}(Y, Z) \rrangle ,
\end{equation*}
for vector fields  $X,Y,Z\in \frak{X}(S)$. Then Proposition~\ref{gyroscopic-curvature} shows that 
\begin{equation}\label{Tensor-K}
\mathcal K (X,Y,Z) (\pi(q)) =-\llangle \mbox{ hor}_q\, X(q), (\Psi(\mbox{ hor}_q Y, \mbox{ hor}_q \,Z))_Q(q) \rrangle_q .
\end{equation}
This implies that $-{\mathcal K}$ coincides with the tensor $\tilde K$ in~\cite[Page 337]{CaCoLeMa}. As explained in this reference, this tensor is 
induced by the so-called {\em metric connection} tensor $K$, of type $(0,3)$ and defined on $Q$, that was first introduced by Koiller~\cite[Equation (3.14a)]{Koi}.

The above observation implies that, up to a sign, the gyroscopic tensor $\mathcal{T}$ coincides with the tensor field denoted by $C$ in 
 \cite[Proposition 8.5]{Koi} and~\cite[Page 337]{CaCoLeMa}.

\begin{remark}Another relation between our constructions and \cite{Koi, CaCoLeMa} involves the 
  tensor field $\mathcal{B}$ of type $(2,1)$ on $S$ defined by \eqref{T-bar}. To see this, note that~\eqref{reduced-metric} and \eqref{T-bar}
  imply 
 \[
 \langle \mathcal{B}(\alpha, \beta), X \rangle (\pi(q))= \llangle \mbox{ hor }\alpha^{\sharp}, \mbox{ hor }{\mathcal T}(\beta^{\sharp}, X) \rrangle (q), \; \mbox{ for } q \in Q, \alpha, \beta \in \Omega^1(S) \mbox{ and } X \in {\frak X}(S).
 \]
So, using~\eqref{Tensor-K}, Proposition~\ref{gyroscopic-curvature}, and the fact that the tensor ${\mathcal K}=-\tilde K$ is skew-symmetric in the last two arguments, we deduce that
\[
 \langle {\mathcal B}(\alpha, \beta), X \rangle = -{\mathcal K}(\alpha^{\sharp}, \beta^{\sharp}, X) =  \tilde{K}(\alpha^{\sharp}, \beta^{\sharp}, X).
 \]
 This relation implies that
 \[
 B(\alpha^{\sharp}, \beta^{\sharp}) = {\mathcal B}(\alpha, \beta)^{\sharp},
 \]
 where $B$ is the tensor of type $(1, 2)$ on $S$ considered in ~\cite[Page 337]{CaCoLeMa} and that was first introduced by Koiller in \cite[Equation (8.9)]{Koi}. So, in the terminology of the Riemannian geometry (see, for instance, \cite{On}), $ {\mathcal B}$ 
 and $B$ are metrically equivalent.
 \end{remark}

\subsection*{$\Omega_ {\mathcal T}$ coincides with minus the ``$\langle J, K\rangle $"  term}

 A number of references in nonholonomic Chaplygin systems (e.g. \cite{HochGN,BalseiroFdz,KoillerRubber,Hoch})
  follow the construction in  \cite{KoRiEh,EhlersKoiller} and write
 the almost symplectic structure $\Omega_{nh}$ on $T^*S$ as\footnote{Actually   \cite{EhlersKoiller} write $\Omega_{nh} = \Omega_{can}+\langle J, K\rangle$. The difference in sign is 
 due to the convention on the canonical 2-form on $T^*S$. In   \cite{KoRiEh,EhlersKoiller} it is taken as $d\lambda_S=dp\wedge ds$ while
 we take it as $-d\lambda_S=ds\wedge dp$.}
 \begin{equation}
 \label{eq:nhJK}
\Omega_{nh} = \Omega_{can}-\langle J, K\rangle,
\end{equation}
where ``$\langle J, K\rangle$" is a semi-basic 2-form on $T^*S$ obtained by pairing the momentum map and the curvature
of the principal connection $D$. Here we recall the construction of ``$\langle J, K\rangle$" and show that  it coincides with  
$-\Omega_\mathcal{T}$ defined by \eqref{eq:defgyrotwoform}. This shows that our definition of  $\Omega_{nh}$ in \eqref{eq:defOmnhIntrinsic} 
is consistent with \eqref{eq:nhJK}. 
Since we have introduced the curvature in \eqref{curvature-connection} with the symbol $\Psi$, we
write $\langle J, \Psi \rangle$ instead of $\langle J, K\rangle$ from now on.

We begin by noticing that the definition (\ref{eq:defgyrotwoform}) of the gyroscopic $2$-form $\Omega_{\mathcal T}$ together with Proposition \ref{gyroscopic-curvature} lead to the expression
 \begin{equation}\label{gyroscopic-2-form-1}
\Omega_{\mathcal T}(\alpha)(U, V) = -\alpha \left ( (T_q\pi \circ {\mathcal P}) (\Psi ( \mbox{ hor}_q  \, (T_\alpha\tau_S)(U), \mbox{ hor}_q \, (T_\alpha\tau_S)(V))_Q(q)) \right ). 
\end{equation}

On the other hand, we recall that the  momentum map $J: T^*Q \to {\frak g}^*$ is  given by
\begin{equation}\label{momentum-map}
\langle J(\tilde{\alpha}), \xi \rangle = \langle \tilde{\alpha}, \xi_Q(q) \rangle, \qquad \mbox{ for } \tilde{\alpha} \in T_q^*Q \mbox{ and } \xi \in \frak{g}.
\end{equation} 
References  \cite{KoRiEh,EhlersKoiller} define the action of the 
2-form $\langle J, \Psi \rangle$ on the vectors $U,V\in T_\alpha (T^*S)$, by
\begin{equation}\label{J-K}
\langle J, \Psi \rangle (\alpha)(U, V) = \langle J(\tilde{\alpha}), \Psi(\mbox{ hor}_q \, (T_\alpha\tau_S)(U), \mbox{ hor}_q \, (T_\alpha\tau_S)(V)) \rangle,
\end{equation}
where $q\in Q$ is any\footnote{The construction is independent of the choice of $q$ since the $\mbox{Ad}^*$-equivariance of 
$J$ is cancelled with the  $\mbox{Ad}$-equivariance of $\Psi$, see \cite{KoRiEh,EhlersKoiller}} point satisfying $\pi(\tau_S(\alpha))=q$, and $\tilde{\alpha} \in T^*_qQ$ is defined through the ``clockwise diagram''  \cite[Diagram (3.11)]{EhlersKoiller}. In our notation this is
\begin{equation}\label{alpha-tilde}
\mbox{$\tilde{\alpha} = (\flat_q \circ \, \mbox{hor}_q \circ \sharp_{\pi(q)})(\alpha)$},
\end{equation}
where $\flat_q: T_qQ \to T^*_qQ$ and $\sharp_{\pi(q)}: T^*_{\pi(q)}S \to T_{\pi(q)}S$ are the linear isomorphisms induced by the Riemannian metrics $\llangle \cdot, \cdot \rrangle $ and $\llangle \cdot, \cdot \rrangle^{-} $ on $Q$ and $S$, respectively (see \eqref{eq:metric-dual} and \eqref{eq:musical-isomorphisms}).
In view of (\ref{momentum-map}) and (\ref{alpha-tilde}), we rewrite  (\ref{J-K}) as 
\begin{equation}\label{J-K-1}
\mbox{$\langle J, \Psi \rangle (\alpha)(U, V) =  \langle \alpha, (\flat_q \circ \, \mbox{hor}_q \circ \sharp_{\pi(q)})^*(\Psi(\mbox{ hor}_q \, (T_\alpha\tau_S)(U), \mbox{ hor}_q \, (T_\alpha\tau_S)(V)))_Q(q)\rangle$},
\end{equation}
where $(\flat_q \circ \mbox{ hor}_q \circ \sharp_{\pi(q)})^*: T_qQ \to T_{\pi(q)}S$ is the dual morphism of the linear map $\flat_q \circ \mbox{ hor}_q \circ \sharp_{\pi(q)}: T_{\pi(q)}^*S \to T_q^*Q$.
Below we will prove that
\begin{equation}
\label{eq:auxJK}
\mbox{$(\flat_q \circ \mbox{ hor}_q \circ \sharp_{\pi(q)})^* =  \left . T_q\pi \circ {\mathcal P} \right |_{T_qQ}.$}
\end{equation}
Therefore, up to a sign, the expression
  (\ref{J-K-1}) equals the right hand side of (\ref{gyroscopic-2-form-1}) proving that $\langle J, \Psi \rangle=
-\Omega_\mathcal{T}$ as claimed.

In order to prove that~\eqref{eq:auxJK} indeed holds,
let  $v \in T_qQ$ and $\alpha \in T_{\pi(q)}^*S$. Considering that  
\[
\mbox{$(\mbox{ hor}_q \circ \sharp_{\pi (q)})(\alpha) \in D_q$}\] we have
\[
\mbox{$\langle \alpha, (\flat_q \circ \mbox{ hor}_q \circ \sharp_{\pi(q)})^*(v)\rangle = \llangle v, (\mbox{ hor}_q \circ \sharp_{\pi(q)})(\alpha) \rrangle = \llangle {\mathcal P}(v), (\mbox{ hor}_q \circ \sharp_{\pi(q)})(\alpha) \rrangle.$}
\]
Finally, using that $\mbox{ hor}_q((T_q\pi)({\mathcal P}v)) = {\mathcal P}(v)$ and the definition (\ref{reduced-metric}) of the Riemannian metric $\llangle \cdot, \cdot \rrangle^{-}$ on $S$, we obtain 
\begin{equation*}
\begin{split}
\mbox{$\langle \alpha, (\flat_q \circ \mbox{ hor}_q \circ \sharp_{\pi(q)})^*(v)\rangle$} & = \llangle \mbox{hor}_q((T_q\pi) ({\mathcal P}v)), \, \mbox{hor}_q(\mbox{$\sharp_{\pi(q)}$}\alpha) \rrangle  \\  &= \mbox{ $\llangle (T_q\pi)({\mathcal P}v), \sharp_{\pi(q)}(\alpha) \rrangle^{-}$} \\
&=  \langle \alpha, T_q\pi({\mathcal P}(v))\rangle,
\end{split}
\end{equation*}
which shows that $(\flat_q \circ \mbox{ hor}_q \circ \sharp_{\pi(q)})^*(v) = T_q\pi({\mathcal P}(v))$, and therefore \eqref{eq:auxJK} indeed holds.

\section{Examples} 
\label{S:Example}

We present two examples. The first one is  the nonholonomic particle considered by Bates and \'Sniatycki~\cite{BS93} with a slightly  more general kinetic energy,
that we include  to illustrate the
geometric  constructions of Section~\ref{S:Chaplygin}   in a toy system and
to prove that, in the presence of a potential, there may exist Chaplygin systems possessing only an invariant
measure  that
is not basic.
The second example is more involved and treats the multi-dimensional version of the Veselova problem considered by 
Fedorov and Jovanovi\'c~\cite{FedJov, FedJov2}. We show that the system is $\phi$-simple and hence, the 
remarkable Hamiltonisation of the problem obtained in~\cite{FedJov, FedJov2} may be understood as a consequence of 
Corollary~\ref{l:Hamiltonisation-geometric}. 

\subsection{The nonholonomic particle}
\label{SS:NonhoParticle}

The configuration space of the system is $Q=\R^3$ with coordinates $(x,y,z)$. The Lagrangian $L$ and the nonholonomic
constraint are given by
\begin{equation*}
L=\frac{1}{2}(\dot x^2+\dot y^2 +\dot z^2)+a\dot y \dot z -U(x,y), \qquad \dot z=y\dot x,
\end{equation*}
for a potential $U\in C^\infty(\R^2)$ and a constant real number $|a|<1$. The nonholonomic constraint defines the
  constraint distribution $D=\mbox{span}\{\partial_x+y\partial_z, \partial_y\}$.

The symmetry group is $G=\R$ acting by translations of $z$. It is easy to check that both $L$ and $D$ are invariant under
the lifted action to $TQ$. Notice that $\mbox{span} \{ \partial_z\} \oplus D$ is the total   tangent space to $Q=\R^3$ at every point,
so the condition~\eqref{eq:chap-condition} is satisfied and the system is $G$-Chaplygin (with $r=2$ degrees of freedom).

The shape
space is $S=\R^3/\R=\R^2$ with coordinates $(x,y)$ and the orbit projection $\pi:Q\to S$ is $\pi(x,y,z)=(x,y)$. 
We now proceed to compute the gyroscopic tensor.
The horizontal lift of the coordinate vector fields on $S$ is 
\begin{equation*}
\mbox{hor} \,  \partial_x =\partial_x+y\partial_z, \qquad \mbox{hor} \,  \partial_y =\partial_y.
\end{equation*}
And so, $[\mbox{hor} \,  \partial_x , \mbox{hor} \,  \partial_y ]= -\partial_z$. 
On the other hand,  the orthogonal complement
of $D$ with respect to the  kinetic energy metric is $$D^\perp = \mbox{span} \{(1-a^2)y \partial_x+a\partial_y- \partial_z \},$$ so we have the orthogonal decomposition
\begin{equation*}
[\mbox{hor} \,  \partial_x , \mbox{hor} \,  \partial_y ] =\frac{-(1-a^2)y}{1+(1-a^2)y^2}\left ( \partial_x+y\partial_z \right )  
+ \frac{-a}{1+(1-a^2)y^2}  \partial_y +
 \frac{1}{1+(1-a^2)y^2}\left ((1-a^2)y \partial_x+a\partial_y- \partial_z \right ),
\end{equation*}
 which implies $$\mathcal{P} [\mbox{hor} \,  \partial_x , \mbox{hor} \,  \partial_y ] = \frac{-(1-a^2)y}{1+(1-a^2)y^2}\left ( \partial_x+y\partial_z \right )  
+ \frac{-a}{1+(1-a^2)y^2}  \partial_y.$$
Hence, the gyroscopic tensor defined by~\eqref{gyroscopic-tensor} is determined by its action on the basis vectors by:
\begin{equation}
\label{eq:gyr-tensor-nhparticle}
\mathcal{T} (  \partial_x,  \partial_y) =   \frac{-(1-a^2)y}{1+(1-a^2)y^2}\partial_x + \frac{-a}{1+(1-a^2)y^2}  \partial_y .
\end{equation}

The 1-form $\Theta$ defined by \eqref{eq:1-form-theta-def} is given by
\begin{equation*}
\Theta = \frac{a\, dx - (1-a^2)y \, dy}{1+(1-a^2)y^2},
\end{equation*}
and, for $|a|<1$, it is exact if and only if $a=0$. Therefore, in accordance with item (i) of Theorem~\ref{T:measures-1-form},
 the system possesses a basic invariant measure if
and only if $a=0$.

For $a=0$ the  expression \eqref{eq:gyr-tensor-nhparticle} may be rewritten as 
$\mathcal{T} (  \partial_x,  \partial_y) =   \frac{\partial \phi}{\partial y}\partial_x - \frac{\partial \phi}{\partial x}\partial_y$,
where $\phi(y)=-\frac{1}{2}\ln (1+y^2)$,  so the system is $\phi$-simple (Definition~\ref{D:sigma-gyro}). We conclude from 
Theorem~\ref{T:measure-gyro} and Corollary~\ref{l:Hamiltonisation-geometric} that the reduced equations of motion on $T^*\R^2$ preserve the measure
$\mu= (1+y^2)^{-1/2}dx\wedge dy\wedge dp_x\wedge dp_y$, and become Hamiltonian after the time reparametrisation $dt =(1+y^2)^{1/2}\, d\tau$.

The above conclusions were obtained without writing the reduced equations of motion. For completeness, we now derive
them in their almost Hamiltonian form for general $a$. Using~\eqref{eq:HamChap-local} the reduced Hamiltonian $H\in C^\infty(T^*\R^2)$ is computed to be
\begin{equation*}
H(x,y,p_x,p_y)=\frac{p_x^2+ (1+y^2)p_y^2 -2ay p_xp_y}{2(1+(1-a^2)y^2)}  + U(x,y).
\end{equation*}
On the other hand, using the definition~\eqref{eq:defgyrotwoform} we compute the value of the semi-basic gyrosocopic 2-form 
$\Omega_\mathcal{T}$  on the vectors $ \partial_x, \partial_y$ to be given  by
\begin{equation*}
\begin{split}
\Omega_\mathcal{T}(x,y,p_x,p_y) \left ( \partial_x, \partial_y \right )  = (p_x \, dx + p_y \, dy) \left (\mathcal{T} (  \partial_x,  \partial_y) \right ) 
 =- \frac{(1-a^2)yp_x+ap_y}{1+(1-a^2)y^2}.
\end{split}
\end{equation*}
It follows that $\Omega_\mathcal{T} = - \left ( \frac{(1-a^2)yp_x+ap_y}{1+(1-a^2)y^2}\right ) \, dx\wedge dy$ and hence
 $$\Omega_{nh} = dx\wedge dp_x+ dy\wedge dp_y   - \left ( \frac{(1-a^2)yp_x+ap_y}{1+(1-a^2)y^2}\right ) \, dx\wedge dy.$$
Using the above expressions, one may verify that the vector field $\overline{X}_{nh}$, determined by 
${\bf i}_{\overline{X}_{nh}}\Omega_{nh}=dH$, defines the following reduced equations on $T^*\R^2$: 
\begin{equation}
\label{eq:motion-nhparticle}
\begin{split}
&\dot x = \frac{p_x-ayp_y}{1+(1-a^2)y^2}, \qquad    \dot y =  \frac{-ay p_x+(1+y^2)p_y}{1+(1-a^2)y^2}  \\
& \dot p_x =  -\frac{\partial U}{\partial x }+ \frac{((1-a^2)y p_x+ ap_y)(-ayp_x+(1+y^2)p_y)}{(1+(1-a^2)y^2)^2}, \qquad \dot p_y = -\frac{\partial U}{\partial y }.
\end{split}
\end{equation}

For $a=0$ one can check directly that the above equations preserve the measure $\mu= (1+y^2)^{-1/2}dx\wedge dy\wedge dp_x\wedge dp_y$ given before and that the 2-form $\overline \Omega =(1+y^2)^{-1/2}\Omega_{nh}$ is closed. Moreover, the coordinates $(x,y,\tilde p_x, \tilde p_y)$ with
\begin{equation*}
\tilde p_x = (1+y^2)^{-1/2}p_x, \qquad \tilde p_y=(1+y^2)^{-1/2}p_y, 
\end{equation*}
satisfy $\overline \Omega=dx\wedge d\tilde p_x+ dy\wedge d\tilde p_y$, i.e. they are Darboux coordinates for $\overline \Omega$.
This rescaling of the momenta is natural since these variables are proportional to the velocities and we 
have introduced the  time reparametrisation $dt =(1+y^2)^{1/2}\, d\tau$. This rescaling
 is often encountered in the literature as an ingredient of Chaplygin's Hamiltonisation method (see Remark~\ref{Rmk:MomRescaling}).

Finally, we notice that for more general $|a|<1$ and the special potential
\begin{equation*}
U_a(x,y)=\frac{1}{4}\ln (1+ (1-a^2)y^2),
\end{equation*}
the equations \eqref{eq:motion-nhparticle} possess the non-basic invariant measure
\begin{equation*}
\eta_a= \exp \left (ax+p_y^2 \right ) \, dx \wedge dy \wedge dp_x \wedge dp_y.
\end{equation*}
This example shows that in the presence of a potential the exactness of  the 1-form $\Theta$ is not a necessary condition for the existence of a general 
smooth  invariant
volume form. In this case, it is only  a
necessary condition for the existence of a {\em basic}  invariant measure  (see Theorem~\ref{T:measures-1-form}).
\begin{remark}
For $a=0$ the system possesses the invariant measures $\mu$ and $\eta_0$. 
There is no contradiction since  we have $\eta_0 = F(y,p_y) \mu$ with $F(y, p_y)= \exp \left (p_y^2 \right ) \left (1+ y^2 \right )^{1/2} $ which is a first integral of  \eqref{eq:motion-nhparticle} when $a=0$. \end{remark}

\subsection{The multi-dimensional Veselova system}
\label{SS:Veselova}

The Veselova system, introduced by Veselova~\cite{Veselova},  concerns the motion of  a rigid body that 
rotates under its own inertia and  is subject
to a  nonholonomic constraint that enforces the projection of the angular velocity to  an axis that is fixed in {\em space} to vanish  (see also \cite{Veselov-Veselova-1988}). A multi-dimensional version of the system was  considered by Fedorov and Kozlov~\cite{FedKoz}, and later
by  Fedorov and  Jovanovi\'c~\cite{FedJov, FedJov2}. In these two papers the authors show that, for a {\em special family of inertia tensors},
the system allows a Hamiltonisation via Chaplygin's multiplier method. Apparently, this was the first time that 
 Chaplygin's method was successfully applied to obtain the Hamiltonisation of a nonholonomic Chaplygin system with an arbitrary number  $r$ of degrees of freedom.
Other examples having this property have since been reported in the literature   \cite{ JovaRubber, Jova18, LGN18, LGNTAM-2019}.
  
In this section we show that the gyroscopic tensor of the multi-dimensional Veselova problem  considered 
by Fedorov and Jovanovi\'c~\cite{FedJov, FedJov2} is $\phi$-simple.   In this manner, we show that the remarkable result of~\cite{FedJov, FedJov2} falls under
 the umbrella of Theorem~\ref{T:measure-gyro}  and
Corollary~\ref{l:Hamiltonisation-geometric}. Our description of the problem is kept brief. Readers
who are not familiar with the system may wish to consult~\cite{FedKoz, FedJov,FedJov2, FGNM18} for more details.

The configuration space of the system is $Q=\SO(n)$. An element $g\in \SO(n)$ specifies the attitude of the multi-dimensional rigid body by relating
a frame that is fixed in the body  with an inertial  frame that is fixed in space. The angular velocity in the 
{\em body} frame is the skew-symmetric matrix
\begin{equation}
\Omega =g^{-1}\dot g\in \so(n).
\end{equation}
The Lagrangian  $L:T\SO(n)\to \R$ is the kinetic minus the potential energy. In the left trivialisation of $T\SO(n)$ it is given by
\begin{equation}
\label{eq:Lagrangian-Veselova}
L(g,\Omega) =\frac{1}{2}( \I \Omega, \Omega  )_\kappa-U(g).
\end{equation}
In the above expression $\I:\so(n)\to \so(n)$ is the inertia tensor and $( \cdot, \cdot)_\kappa$ is the Killing metric in $\so(n)$:
\begin{equation*}
(\xi, \eta)_\kappa =-\frac{1}{2}\tr(\xi \eta).
\end{equation*}
We will assume that the potential energy $U:\SO(n)\to \R$ is invariant under the $\SO(n-1)$ action indicated below. 

Following~\cite{FedJov, FedJov2}, we assume that there exists
 a diagonal matrix $A=\mbox{diag}(a_1, \dots , a_n)$, with positive entries, such that the inertia tensor satisfies:
\begin{equation}
\label{eq:inertia-tensor}
\I(u\wedge v)= (Au)\wedge (A v), 
\end{equation}
for $u,v\in \R^n$, where $u\wedge v =uv^T-vu^T$.

\begin{remark}
The  Hamiltonisation of several other multi-dimensional nonholonomic systems relies on the assumption that the
inertia tensor satisfies~\eqref{eq:inertia-tensor}~\cite{JovaRubber,Jovan,Jova18,Gajic}. Interestingly, this condition always holds if 
$n=3$, but, for $n\geq 4$, it is 
 generally inconsistent
with the standard `physical' considerations of multi-dimensional rigid body dynamics (see the discussion in~\cite{FGNM18}). 
\end{remark}

The nonholonomic constraints are simpler to write  in terms of the angular velocity as seen in  the {\em space} frame:
\begin{equation}
\omega =\Ad_g \Omega =\dot g g^{-1}\in \so(n).
\end{equation}
The constraints require that the  following entries of $\omega$  vanish during the motion: 
\begin{equation}
\label{eq:Ves-constraints}
\omega_{ij}=0, \qquad 1\leq i,j\leq n-1.
\end{equation}

Denote by  $e_n:=(0,\dots, 0,1)\in \R^n$. As was first explained  in~\cite{FedJov},  the problem is an $\SO(n-1)$-Chaplygin system with $r=n-1$ degrees of freedom, where 
\begin{equation*}
\SO(n-1)=\{ h\in \SO(n) \, : \, h^{-1}e_n =e_n \},
\end{equation*}
 acts on $Q=\SO(n)$ by left multiplication\footnote{we need to assume that the potential $U$ is invariant under this action.}. 
The shape space $S=\SO(n)/\SO(n-1)=\Ss^{n-1}$ and the corresponding bundle map is
\begin{equation}
\pi: \SO(n) \to \Ss^{n-1}, \qquad \pi(g)=\gamma(g):=g^{-1}e_n,
\end{equation}
where $\Ss^{n-1}$ is realised as 
\begin{equation*}
\Ss^{n-1}=\{ \gamma =(\gamma_1, \dots, \gamma_n) \in \R^n \, : \,\gamma_1^2+\dots +\gamma_n^2=1 \}.
\end{equation*}

The rest of this section is dedicated to the proof of the following theorem. In its statement, and throughout, $(\cdot , \cdot)$ denotes the Euclidean scalar product in $\R^n$.
\begin{theorem}
\label{T:Veselova-phi-simple}
The multidimensional Veselova system with special inertia tensor~\eqref{eq:inertia-tensor} is $\phi$-simple with 
$\phi: \Ss^{n-1}\to \R$ given by $\phi(\gamma)=-\frac{1}{2} \ln (A\gamma,\gamma)$. 
\end{theorem}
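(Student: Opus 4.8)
The plan is to work in the left trivialisation of $T\SO(n)$, where the shape variable is $\gamma=g^{-1}e_n$ and the body angular velocity is $\Omega=g^{-1}\dot g$. First I would rewrite the constraint in the body frame: since $\omega=a\wedge e_n$, conjugation gives $\Omega=\Ad_{g^{-1}}\omega=(g^{-1}a)\wedge\gamma$, so $D$ consists precisely of those $\Omega$ of the form $b\wedge\gamma$ with $b\in\R^n$. Using $T_g\pi(\Omega)=\dot\gamma=-\Omega\gamma$ together with $(\gamma\wedge v)\gamma=-v$ for $v\perp\gamma$, one checks that the horizontal lift of a tangent vector $v\in T_\gamma\Ss^{n-1}$ is simply $\mbox{hor}(v)=\gamma\wedge v$ in the body frame; this is the unique element of $D$ projecting to $v$, and its $G$-equivariance is immediate since $\gamma$ is $\SO(n-1)$-invariant.

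Next I would compute the bracket $[\mbox{hor}\,Y,\mbox{hor}\,Z]$ that appears in \eqref{gyroscopic-tensor}. Writing the two horizontal lifts in the body frame as the $\so(n)$-valued functions $\xi_Y=\gamma\wedge Y(\gamma)$ and $\xi_Z=\gamma\wedge Z(\gamma)$, and combining the standard left-trivialised bracket formula with the commutator identity $[\gamma\wedge Y,\gamma\wedge Z]=-Y\wedge Z$ (which follows from the general expression for commutators of wedges in $\so(n)$ and the fact that $(Y,\gamma)=(Z,\gamma)=0$), a direct calculation yields $[\mbox{hor}\,Y,\mbox{hor}\,Z]=Y\wedge Z+\gamma\wedge[Y,Z]$ in the body frame. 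The term $\gamma\wedge[Y,Z]$ is exactly $\mbox{hor}[Y,Z]$, hence lies in $D$ and projects under $T_q\pi$ to $[Y,Z]$, cancelling the second term of \eqref{gyroscopic-tensor}. Therefore the whole computation collapses to $\mathcal{T}(Y,Z)=T_q\pi(\mathcal{P}(Y\wedge Z))$, which is consistent with Proposition~\ref{gyroscopic-curvature} since the curvature equals $-Y\wedge Z$; note that $Y\wedge Z$ annihilates $\gamma$ and is therefore vertical.

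The heart of the proof, and the step I expect to be the main obstacle, is the orthogonal projection $\mathcal{P}(Y\wedge Z)$ onto $D$ with respect to the inertia metric, where the special form \eqref{eq:inertia-tensor} is crucial. Using $\I(a\wedge b)=Aa\wedge Ab$ and the induced Killing metric on wedges, one obtains the clean formula $\llangle a\wedge b,c\wedge d\rrangle=(Aa,c)(Ab,d)-(Aa,d)(Ab,c)$. Writing $\mathcal{P}(Y\wedge Z)=\gamma\wedge w$ with $w\perp\gamma$, the projection is characterised by $\llangle\gamma\wedge w,\gamma\wedge u\rrangle=\llangle Y\wedge Z,\gamma\wedge u\rrangle$ for all $u\perp\gamma$, which I would rewrite as the requirement that $\mu\,Aw-(Aw,\gamma)A\gamma$ and $R:=(AY,\gamma)AZ-(AZ,\gamma)AY$ agree modulo $\gamma$, where $\mu=(A\gamma,\gamma)$. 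I would then verify the solution $w=\mu^{-1}\big((A\gamma,Y)Z-(A\gamma,Z)Y\big)$: it satisfies $w\perp\gamma$, a one-line check gives $(Aw,\gamma)=0$, and since $A$ is symmetric one finds $\mu\,Aw=R$ exactly, so the characterising equation holds with no leftover $\gamma$-term. By positive-definiteness of the metric on $D$ this $w$ is the unique solution, whence $\mathcal{T}(Y,Z)=T_q\pi(\gamma\wedge w)=w$.

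Finally I would match this with Definition~\ref{D:sigma-gyro}. For $\phi(\gamma)=-\tfrac12\ln(A\gamma,\gamma)$ one computes $Z[\phi]=-(A\gamma,Z)/\mu$ and $Y[\phi]=-(A\gamma,Y)/\mu$ (using $D(A\gamma,\gamma)(v)=2(A\gamma,v)$ for tangent $v$), so that $Z[\phi]Y-Y[\phi]Z=\mu^{-1}\big((A\gamma,Y)Z-(A\gamma,Z)Y\big)=w=\mathcal{T}(Y,Z)$. This is exactly \eqref{Ham-condition}, hence the system is $\phi$-simple with the stated $\phi$. Throughout, the left-trivialised bracket formula and the wedge-commutator identity are the natural candidates for the technical lemmas to be relegated to the appendix.
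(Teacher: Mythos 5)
Your proposal is correct, and every step checks out against the paper's formulas: your closed-form projection $w=\mu^{-1}\bigl((A\gamma,Y)Z-(A\gamma,Z)Y\bigr)$ with $\mu=(A\gamma,\gamma)$ reproduces exactly the coefficients $C_{ij}^k$ of the paper's Lemma~\ref{L:Gyro-coeff-Veselova} once one substitutes $(A\gamma,\partial/\partial s_i)=\gamma_i(a_i-a_n)$, and your bracket identity $[\mbox{hor}\,Y,\mbox{hor}\,Z]=Y\wedge Z+\gamma\wedge[Y,Z]$ specialises to the paper's Lemma~\ref{l:commXi} for coordinate fields. The route, however, is genuinely different. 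The paper works throughout in the hemisphere chart $\Ss^{n-1}_+$: it computes the Lie brackets of the lifted coordinate fields through a chain of lemmas on derivatives of matrix entries (Appendix~\ref{app:lemmaA1}), evaluates the Gram matrix $\llangle X_k,X_l\rrangle$ and the pairings $\llangle [X_i,X_j],X_l\rrangle$ (Lemma~\ref{l:MetricXi}), and then \emph{verifies} that a guessed formula for $C_{ij}^k$ solves the resulting linear system \eqref{eq:proofeqn}, finishing with a patching argument over the other hemispheres. You instead keep everything intrinsic: you observe that the $\gamma\wedge[Y,Z]$ part of the bracket cancels the $[Y,Z]$ term in \eqref{gyroscopic-tensor}, reducing $\mathcal T(Y,Z)$ to $T\pi(\mathcal P(Y\wedge Z))$, and then you \emph{solve} the projection problem in closed form using the identity $\llangle a\wedge b,c\wedge d\rrangle=(Aa,c)(Ab,d)-(Aa,d)(Ab,c)$, the key points being $(Aw,\gamma)=0$ and $\mu\,Aw=(A\gamma,Y)AZ-(A\gamma,Z)AY$ exactly. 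What your approach buys is a derivation rather than a verification (no ansatz for $C_{ij}^k$ is needed), validity for arbitrary vector fields $Y,Z$ at once (so no chart-by-chart argument at the end), and a transparent view of where the special inertia hypothesis \eqref{eq:inertia-tensor} enters, namely in making $Aw$ proportional to the right-hand side with no $\gamma$-correction. What the paper's coordinate computation buys is that it produces along the way the explicit reduced metric coefficients $K_{ij}$ of \eqref{eq:Kij}, which are needed anyway to write the reduced Hamiltonian. The only places where you should be explicit when writing this up are the left-trivialised bracket formula (including the sign convention $[X_\xi,X_\eta]=X_{[\xi,\eta]}$ for left-invariant fields and the fact that $\dot\gamma=Y$ along the flow of $\mbox{hor}\,Y$) and the identification of the vertical space with $\{\eta\in\so(n):\eta\gamma=0\}$; both are routine and you have flagged them correctly as the technical lemmas for an appendix.
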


 It follows from 
Theorem~\ref{T:measure-gyro} and Corollary~\ref{l:Hamiltonisation-geometric} that the reduced equations of motion on $T^*\Ss^{n-1}$ preserve the measure
$\mu=  (A\gamma,\gamma)^{(2-n)/2} \, \nu$, and become Hamiltonian after the time reparametrisation $dt =(A\gamma,\gamma)^{1/2}\, d\tau$. 
This recovers  the results of Fedorov and Jovanovi\'c~\cite{FedJov, FedJov2} on the Hamiltonisation of the problem.

We will prove Theorem~\ref{T:Veselova-phi-simple} by computing the gyroscopic tensor in local coordinates in $\Ss^{n-1}$. Namely, let  $(s_1, \dots, s_{n-1})$ 
be the coordinates on the northern hemisphere $\Ss_+^{n-1}:=\{(\gamma_1, \dots, \gamma_n)\in \Ss^n \, : \, \gamma_n>0\}$ given by:
\begin{equation}
\gamma_1=s_1,\quad \dots \quad , \gamma_{n-1}=s_{n-1}, \quad  \gamma_n=\sqrt{1-s_1^2-\dots -s_{n-1}^2}. 
\end{equation}
In terms of the canonical vectors $e_1, \dots, e_n$ in the ambient space $\R^n$, we have
\begin{equation}
\label{eq:ds_i-example}
\frac{\partial}{\partial s_i}=e_i -\frac{\gamma_i}{\gamma_n} e_n, \qquad i=1, \dots, n-1.
\end{equation}
 For the rest of the section, we identify $T_g\SO(n)=\so(n)$ via the left trivialisation. The following proposition gives the form of the horizontal lift of the 
coordinate vector fields. To simplify notation, for the rest of the section we denote $X_i(g):=\mbox{ hor}_g\left ( \frac{\partial}{\partial s_i} \right )\in \so(n)$.
\begin{proposition}
Let $\gamma\in \Ss_+^{n-1}$ and $g\in \pi^{-1}(\gamma)$, (i.e. $\gamma=g^{-1}e_n$). The horizontal lift
\begin{equation}
\label{eq:defXi}
\mbox{\em hor}_g\left ( \frac{\partial}{\partial s_i} \right ) =:X_i(g) = \gamma\wedge \left ( e_i  -\frac{\gamma_i}{\gamma_n}e_n \right ), \qquad i=1,\dots, n-1.
\end{equation}
\end{proposition}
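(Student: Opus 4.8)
The plan is to work entirely in the left trivialization $T_g\SO(n)\cong\so(n)$, $\dot g\mapsto\Omega=g^{-1}\dot g$, and to characterize the horizontal lift as the unique element of the constraint fibre $D_g$ that is $\pi$-related to $\frac{\partial}{\partial s_i}$. Since for a Chaplygin system the constraint distribution $D$ is the horizontal distribution of the connection, the first task is to describe $D_g\subset\so(n)$ explicitly. I would rewrite the space-frame constraints \eqref{eq:Ves-constraints} intrinsically: a skew-symmetric $\omega$ has $\omega_{ij}=0$ for all $1\le i,j\le n-1$ precisely when all entries outside the last row and column vanish, i.e. $\omega=e_n\wedge w$ for some $w\in\R^n$ (determined up to the irrelevant multiple of $e_n$, matching $\dim D=n-1$). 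Passing to the body frame via $\Omega=\Ad_{g^{-1}}\omega=g^{-1}\omega g$ and using the elementary identity $g^{-1}(a\wedge b)g=(g^{-1}a)\wedge(g^{-1}b)$ for $g\in\SO(n)$, this becomes $\Omega=(g^{-1}e_n)\wedge(g^{-1}w)=\gamma\wedge u$ with $u=g^{-1}w$. Hence in the left trivialization
\[
D_g=\{\gamma\wedge u \;:\; u\in\R^n\},\qquad \gamma=g^{-1}e_n.
\]

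Next I would compute the projection $T_g\pi$ in the left trivialization. Differentiating $\pi(g)=g^{-1}e_n$ along a curve with left-logarithmic derivative $\Omega$ gives $T_g\pi(\Omega)=-\Omega\gamma$, so for $\Omega=\gamma\wedge u$ one gets $-(\gamma\wedge u)\gamma=u-(u,\gamma)\gamma$, the Euclidean orthogonal projection of $u$ onto $T_\gamma\Ss^{n-1}=\gamma^{\perp}$ (using $(\gamma,\gamma)=1$). The same computation shows that $T_g\pi$ restricted to $D_g$ has trivial kernel: if $u-(u,\gamma)\gamma=0$ then $u\parallel\gamma$ and $\gamma\wedge u=0$. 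Therefore $T_g\pi|_{D_g}$ is a linear isomorphism onto $T_\gamma\Ss^{n-1}$, which is exactly what guarantees that the horizontal lift of a given vector is uniquely determined by the two conditions of lying in $D_g$ and projecting to that vector.

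It then remains to verify that the proposed $X_i(g)=\gamma\wedge\bigl(e_i-\frac{\gamma_i}{\gamma_n}e_n\bigr)$ does the job. It manifestly lies in $D_g$, being of the form $\gamma\wedge u$. The single computation that makes everything collapse is that $u_i:=e_i-\frac{\gamma_i}{\gamma_n}e_n$ is already tangent to the sphere, since $(u_i,\gamma)=\gamma_i-\frac{\gamma_i}{\gamma_n}\gamma_n=0$. Consequently $T_g\pi(X_i(g))=u_i-(u_i,\gamma)\gamma=u_i=\frac{\partial}{\partial s_i}$, the last equality being precisely \eqref{eq:ds_i-example}. By the uniqueness established above, $X_i(g)$ is the horizontal lift of $\frac{\partial}{\partial s_i}$, which proves \eqref{eq:defXi}.

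I do not expect a serious obstacle here; the argument is a direct verification once the trivialization is set up correctly. The only points that genuinely require care are getting the correct sign in $T_g\pi(\Omega)=-\Omega\gamma$ and translating the space-frame constraints \eqref{eq:Ves-constraints} into the body-frame description $D_g=\{\gamma\wedge u\}$ via the conjugation identity for $\wedge$. After these are in place, the orthogonality $(u_i,\gamma)=0$ closes the argument immediately.
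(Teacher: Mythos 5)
Your proof is correct and follows essentially the same route as the paper: translate the constraints \eqref{eq:Ves-constraints} into $\omega=e_n\wedge w$, conjugate to get $D_g=\{\gamma\wedge u\}$ in the left trivialisation, compute $T_g\pi(\Omega)=-\Omega\gamma$, and check that $\gamma\wedge u_i$ projects to $u_i=\partial/\partial s_i$ because $(u_i,\gamma)=0$. The only (harmless) difference is that you spell out the injectivity of $T_g\pi|_{D_g}$ for uniqueness, whereas the paper normalises $u$ to be perpendicular to $\gamma$ from the outset and reads off $\Omega=\gamma\wedge\dot\gamma$ directly.
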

\begin{proof}
The nonholonomic  constraints~\eqref{eq:Ves-constraints} imply $\omega=e_n\wedge a$ for a vector $a\in \R^n$ that may be assumed to be
perpendicular to $e_n$. 
Hence, $$\Omega=\mbox{Ad}_{g^{-1}}(e_n\wedge a)=(g^{-1}e_n)\wedge ( g^{-1}a)=\gamma\wedge v,$$
where $v=g^{-1}a$ is perpendicular to $\gamma$. On the other hand, differentiating $\gamma=g^{-1}e_n$ gives $\dot \gamma =-\Omega \gamma$.
Whence,  $\dot \gamma = -(\gamma \wedge v)\gamma=v$ and we conclude that $\Omega=\gamma \wedge \dot \gamma$.
This implies that the horizontal lift of the tangent vector $\dot \gamma \in T_\gamma \Ss^{n-1}$ to $T_g\SO(n)$, with $\gamma=g^{-1}e_n$, is the vector 
$\Omega = \gamma\wedge \dot \gamma$ (in the left trivialisation). The result then follows from~\eqref{eq:ds_i-example}.
\end{proof}

The Lie brackets of the vector fields $X_i$ defined by~\eqref{eq:defXi} may be computed directly. To simplify the reading of this section, 
the details of the calculation are  outlined in the proof of the following lemma given in  
 Appendix~\ref{app:lemmaA1}.
\begin{lemma} 
\label{l:commXi}
We have
\begin{equation*}
[X_i,X_j]= \left ( e_i  -\frac{\gamma_i}{\gamma_n}e_n \right ) \wedge  \left ( e_j  -\frac{\gamma_j}{\gamma_n}e_n \right ), \qquad i,j=1,\dots, n-1.
\end{equation*}
\end{lemma}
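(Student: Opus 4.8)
The plan is to carry out the whole computation in the left trivialisation $T\SO(n)\cong \SO(n)\times\so(n)$ that is already in force, where the horizontal lifts are the $\so(n)$-valued functions $X_i(g)=\gamma\wedge w_i$, with $\gamma=g^{-1}e_n$ and $w_i:=e_i-\frac{\gamma_i}{\gamma_n}e_n$ (so that $w_i$ is exactly $\frac{\partial}{\partial s_i}$ viewed as a vector of $\R^n$ via \eqref{eq:ds_i-example}). The basic tool is the formula for the Jacobi--Lie bracket of two left-trivialised vector fields $X,Y$ whose $\so(n)$-valued representatives are $X(g),Y(g)$:
\begin{equation*}
[X,Y](g)=[X(g),Y(g)]+(D_{X}Y)(g)-(D_{Y}X)(g),
\end{equation*}
where $[\cdot,\cdot]$ is the matrix commutator on $\so(n)$ and $D_X$ denotes the directional derivative of an $\so(n)$-valued function along the vector field $X$. (This is standard and is verified directly by testing on $f\in C^\infty(\SO(n))$ and using equality of mixed partials; for constant representatives it reduces to the matrix commutator, which fixes the sign convention.) Writing $D_i:=D_{X_i}$, the bracket thus splits into a pointwise commutator term and an antisymmetrised derivative term, which I would compute separately.

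For the derivative term the key preliminary step is the variation of $\gamma$ along $X_i$. Since moving $g$ along $g\exp(tX_i(g))$ sends $\gamma=g^{-1}e_n$ to $\exp(-tX_i(g))\gamma$, differentiation gives $D_i\gamma=-X_i(g)\gamma=-(\gamma\wedge w_i)\gamma=w_i$, where I use $(\gamma,\gamma)=1$ and $(w_i,\gamma)=0$. Reading off components yields $D_i\gamma_j=\delta_{ij}$ and $D_i\gamma_n=-\gamma_i/\gamma_n$ for $i,j\le n-1$, whence $D_iw_j=-\kappa_{ij}\,e_n$ with the \emph{symmetric} coefficient $\kappa_{ij}=\frac{\delta_{ij}}{\gamma_n}+\frac{\gamma_i\gamma_j}{\gamma_n^3}$. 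Applying the product rule to $X_j=\gamma\wedge w_j$ then gives $D_iX_j=(D_i\gamma)\wedge w_j+\gamma\wedge(D_iw_j)=w_i\wedge w_j-\kappa_{ij}\,\gamma\wedge e_n$, and on antisymmetrising the symmetric $\kappa_{ij}$ cancels, leaving
\begin{equation*}
D_iX_j-D_jX_i=2\,w_i\wedge w_j.
\end{equation*}

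For the commutator term I would invoke the standard identity for brackets of decomposable elements of $\so(n)$,
\begin{equation*}
[a\wedge b,c\wedge d]=(b,c)\,a\wedge d-(b,d)\,a\wedge c-(a,c)\,b\wedge d+(a,d)\,b\wedge c,
\end{equation*}
with $a=c=\gamma$, $b=w_i$, $d=w_j$. The relations $(w_i,\gamma)=(w_j,\gamma)=0$, $\gamma\wedge\gamma=0$ and $(\gamma,\gamma)=1$ collapse the right-hand side to $[X_i(g),X_j(g)]=-\,w_i\wedge w_j$. Adding the two contributions gives $[X_i,X_j]=-w_i\wedge w_j+2\,w_i\wedge w_j=w_i\wedge w_j$, which is precisely the asserted expression. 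I expect no genuine obstacle here: the only delicate points are pinning down the sign in the left-trivialised bracket formula (left-invariant fields must reproduce the matrix commutator) and verifying that the $\gamma\wedge e_n$ contributions are symmetric in $i,j$ and hence drop out — both are bookkeeping rather than conceptual difficulties.
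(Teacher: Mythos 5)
Your proof is correct, but it is organised differently from the paper's. The paper expands $X_i=\gamma\wedge e_i-\tfrac{\gamma_i}{\gamma_n}\,\gamma\wedge e_n$ and computes the Jacobi--Lie bracket by the Leibniz rule over this decomposition, feeding in four auxiliary lemmas (Lemmas~\ref{l:left-eij}--\ref{l:comm-aux}) that compute the action of left-invariant fields on matrix entries, the derivatives of $\gamma_k$ and $\gamma_k/\gamma_n$ along $\gamma\wedge e_j$, and the bracket $[\gamma\wedge e_i,\gamma\wedge e_j]=e_i\wedge e_j$. You instead apply the master formula $[X,Y]\leftrightarrow[\xi,\eta]+D_X\eta-D_Y\xi$ for left-trivialised fields directly to $X_i=\gamma\wedge w_i$, splitting the bracket into a pointwise $\so(n)$-commutator (evaluated via the identity for decomposable elements, collapsing to $-w_i\wedge w_j$ because $(w_i,\gamma)=0$ and $(\gamma,\gamma)=1$) and an antisymmetrised derivative term (equal to $2\,w_i\wedge w_j$ because $D_i\gamma=w_i$ and the $\gamma\wedge e_n$ coefficient $\kappa_{ij}$ is symmetric). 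I checked the signs: your convention reproduces the matrix commutator on left-invariant fields, consistent with the paper, and your identity $D_i\gamma=w_i$ agrees with what Lemmas~\ref{l:left-q-eij} and~\ref{l:diff-quotient} yield after assembling $X_i$; your route also reproduces $[\gamma\wedge e_i,\gamma\wedge e_j]=e_i\wedge e_j$ as a consistency check. What your approach buys is brevity and a clean conceptual split between the algebraic and differential contributions; what the paper's buys is that it never invokes the general left-trivialisation bracket formula, relying only on the Leibniz rule and explicit derivatives of coordinate functions, at the cost of a substantially longer computation.
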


According to Definition~\ref{gyroscopic-tensor} of the gyroscopic tensor, in order to compute $\mathcal{T}$ on the coordinate vector fields,
we should compute the orthogonal projection $\mathcal{P}[X_i,X_j]$  onto the constraint distribution with respect to the kinetic energy metric $\llangle \cdot , \cdot \rrangle$.
Note that, in view of the kinetic energy of the Lagrangian~\eqref{eq:Lagrangian-Veselova}, for $Y_1(g), Y_2(g)\in \so(n)$,  vector fields on $\SO(n)$, we have
\begin{equation*}
\llangle Y_1, Y_2 \rrangle_g=(\I (Y_1(g)),Y_2(g))_\kappa .
\end{equation*}

The proof of the following lemma is also a calculation that is postponed to Appendix~~\ref{app:lemmaA2}. Note that the formulae given below involve the 
entries of the matrix $A$, so we are relying on the crucial assumption~\eqref{eq:inertia-tensor} on the inertia tensor.
\begin{lemma}
\label{l:MetricXi}
For $i,j,k,l\in \{1, \dots, n-1\}$ we have
\begin{equation}
\label{eq:Kij}
\llangle X_k, X_l\rrangle =(A\gamma,\gamma) \left (a_l\delta_{kl} + \frac{a_n \gamma_k\gamma_l}{\gamma_n^2} \right ) -\gamma_k\gamma_l(a_n-a_k)(a_n-a_l),
\end{equation}
and
\begin{equation}
\label{eq:XiXjXl}
\llangle [X_i, X_j ] , X_l \rrangle  = \frac{a_n\gamma_i\gamma_j\gamma_l}{\gamma_n^2}(a_i-a_j)+a_j\gamma_i(a_i-a_n)\delta_{jl}-a_i\gamma_j(a_j-a_n)\delta_{il},
\end{equation}
where $\delta_{ij}$ is the Kronecker delta.
\end{lemma}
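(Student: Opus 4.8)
The plan is to reduce both identities to elementary Euclidean inner products, exploiting the bilinearity of the metric together with the wedge structure of the vector fields in~\eqref{eq:defXi} and Lemma~\ref{l:commXi}. The two ingredients that drive the computation are the behaviour of the Killing metric on decomposable bivectors and the hypothesis~\eqref{eq:inertia-tensor} on the inertia tensor. First I would record that, for $u,v,w,z\in\R^n$,
\begin{equation*}
(u\wedge v, w\wedge z)_\kappa = (u,w)(v,z) - (u,z)(v,w),
\end{equation*}
which follows from expanding $-\tfrac{1}{2}\tr\big((uv^T-vu^T)(wz^T-zw^T)\big)$ and using $\tr(ab^T)=(a,b)$. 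Combining this with the defining relation $\I(u\wedge v)=(Au)\wedge(Av)$ and the definition $\llangle \cdot , \cdot \rrangle=(\I(\cdot),\cdot)_\kappa$ of the kinetic energy metric associated to~\eqref{eq:Lagrangian-Veselova} yields the single master formula
\begin{equation*}
\llangle u\wedge v, w\wedge z\rrangle = (Au,w)(Av,z) - (Au,z)(Av,w),
\end{equation*}
valid for all $u,v,w,z\in\R^n$. This identity is the engine of the whole calculation, and it is precisely here that the special form~\eqref{eq:inertia-tensor} of $\I$ enters.

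Next I would set $f_i := e_i - \frac{\gamma_i}{\gamma_n}e_n$, so that $X_i = \gamma\wedge f_i$ and $[X_i,X_j]=f_i\wedge f_j$. Since $A$ is diagonal and the indices range over $\{1,\dots,n-1\}$, the $e_n$-components decouple under the Euclidean product, and the only scalar products that occur reduce to
\begin{equation*}
(Af_k,f_l) = a_k\delta_{kl} + \frac{a_n\gamma_k\gamma_l}{\gamma_n^2}, \qquad (A\gamma,f_l) = (Af_l,\gamma)=\gamma_l(a_l-a_n), \qquad (A\gamma,\gamma)=\sum_{i=1}^n a_i\gamma_i^2,
\end{equation*}
each a one-line expansion using $(e_k,e_l)=\delta_{kl}$, $(e_k,e_n)=0$ for $k\le n-1$, $(e_n,e_n)=1$, and the symmetry of $A$. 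Applying the master formula to $X_k=\gamma\wedge f_k$ and $X_l=\gamma\wedge f_l$ gives
\begin{equation*}
\llangle X_k,X_l\rrangle = (A\gamma,\gamma)(Af_k,f_l) - (A\gamma,f_l)(Af_k,\gamma),
\end{equation*}
and substituting the three products above produces~\eqref{eq:Kij}, after noting $a_k\delta_{kl}=a_l\delta_{kl}$ and $(a_k-a_n)(a_l-a_n)=(a_n-a_k)(a_n-a_l)$.

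For~\eqref{eq:XiXjXl} I would apply the master formula to $[X_i,X_j]=f_i\wedge f_j$ and $X_l=\gamma\wedge f_l$, obtaining
\begin{equation*}
\llangle [X_i,X_j],X_l\rrangle = (Af_i,\gamma)(Af_j,f_l) - (Af_i,f_l)(Af_j,\gamma).
\end{equation*}
Inserting the same elementary products and expanding, the two contributions carrying the factor $\tfrac{a_n\gamma_i\gamma_j\gamma_l}{\gamma_n^2}$ combine with coefficient $(a_i-a_n)-(a_j-a_n)=a_i-a_j$, while the two diagonal terms give the Kronecker-delta contributions $a_j\gamma_i(a_i-a_n)\delta_{jl}$ and $-a_i\gamma_j(a_j-a_n)\delta_{il}$, reproducing~\eqref{eq:XiXjXl} exactly. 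The computation is entirely mechanical once the master formula is in hand; the only point requiring care is the bookkeeping — tracking which index pairs collapse under the Kronecker deltas and checking that the off-diagonal $e_n$-terms assemble with the correct signs — so I anticipate no genuine obstacle, the essential content being the reduction to the wedge-metric identity above.
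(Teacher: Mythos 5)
Your proposal is correct and follows essentially the same route as the paper: the paper's Appendix B.2 likewise rests on the identity $(u_1\wedge v_1,u_2\wedge v_2)_\kappa=(u_1,u_2)(v_1,v_2)-(u_1,v_2)(u_2,v_1)$ (its Lemma~\ref{l:ComputeKilling}), applies $\I(u\wedge v)=(Au)\wedge(Av)$ to one argument, and expands the resulting Euclidean products exactly as you do. Your packaging of these two steps into a single ``master formula'' before substituting is only a cosmetic difference.
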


The following lemma gives an explicit expression for the gyroscopic coefficients $C_{ij}^k$ written in our coordinates. Its proof relies on the previous lemma.
\begin{lemma}
\label{L:Gyro-coeff-Veselova}
For $i,j,k\in \{1, \dots, n-1\}$ we have
\begin{equation}
\label{eq:Cijk}
C_{ij}^k  = \frac{-\gamma_j(a_j-a_n)\delta_{ik}+\gamma_i(a_i-a_n) \delta_{jk}}{(A\gamma,\gamma)}.
\end{equation}
\end{lemma}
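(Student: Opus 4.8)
The plan is to pin down the gyroscopic coefficients as the unique solution of a linear system determined by Lemma~\ref{l:MetricXi}, and then simply to verify that the claimed formula~\eqref{eq:Cijk} solves it. First I would recall that, by the defining relation~\eqref{eq:gyrosccoeff}, the coefficients $C_{ij}^k$ are characterised by $\mathcal{P}[X_i,X_j]=\sum_{k=1}^{n-1}C_{ij}^k X_k$. Pairing both sides with $X_l$ using the kinetic energy metric $\llangle\cdot,\cdot\rrangle$, and using that $\mathcal{P}$ is the orthogonal projection onto $D$ while each $X_l$ is a section of $D$ (so that $\llangle\mathcal{P}[X_i,X_j],X_l\rrangle=\llangle[X_i,X_j],X_l\rrangle$), I obtain the linear system
\begin{equation*}
\sum_{k=1}^{n-1}C_{ij}^k K_{kl}=\llangle[X_i,X_j],X_l\rrangle,\qquad K_{kl}:=\llangle X_k,X_l\rrangle,
\end{equation*}
in which both $K_{kl}$ and the right-hand side are given explicitly by~\eqref{eq:Kij} and~\eqref{eq:XiXjXl}. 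Since $(K_{kl})$ is the matrix of the reduced Riemannian metric $\llangle\cdot,\cdot\rrangle^{-}$ and hence positive definite, the system has a unique solution, so it suffices to check that~\eqref{eq:Cijk} satisfies it.

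Next I would substitute~\eqref{eq:Cijk} into the left-hand side. The two Kronecker deltas collapse the sum over $k$ to the indices $k=i$ and $k=j$, giving
\begin{equation*}
\sum_{k=1}^{n-1}C_{ij}^k K_{kl}=\frac{1}{(A\gamma,\gamma)}\left(-\gamma_j(a_j-a_n)K_{il}+\gamma_i(a_i-a_n)K_{jl}\right).
\end{equation*}
Inserting~\eqref{eq:Kij} for $K_{il}$ and $K_{jl}$ and grouping terms, the contributions proportional to $a_n\gamma_i\gamma_j\gamma_l/\gamma_n^2$ combine into $\frac{a_n\gamma_i\gamma_j\gamma_l}{\gamma_n^2}(a_i-a_j)$, reproducing the first term of~\eqref{eq:XiXjXl}. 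The terms carrying $\delta_{il}$ and $\delta_{jl}$ reproduce the last two terms of~\eqref{eq:XiXjXl}; here I would use that $\delta_{il}$ forces $a_l=a_i$ and $\delta_{jl}$ forces $a_l=a_j$, which is exactly what turns $-\gamma_j(a_j-a_n)a_l\delta_{il}$ into $-a_i\gamma_j(a_j-a_n)\delta_{il}$ and $\gamma_i(a_i-a_n)a_l\delta_{jl}$ into $a_j\gamma_i(a_i-a_n)\delta_{jl}$.

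The one point that requires care will be the cancellation of the contributions coming from the last summand $-\gamma_k\gamma_l(a_n-a_k)(a_n-a_l)$ of~\eqref{eq:Kij}. These produce the factor
\begin{equation*}
\gamma_i\gamma_j\gamma_l(a_n-a_l)\left[(a_j-a_n)(a_n-a_i)-(a_i-a_n)(a_n-a_j)\right],
\end{equation*}
and I expect the bracket to vanish identically, since both products equal $-(a_i-a_n)(a_j-a_n)$, so their difference is zero. Once this cancellation is recognised, matching the surviving terms against~\eqref{eq:XiXjXl} completes the verification, and the uniqueness of the solution of the positive-definite linear system yields~\eqref{eq:Cijk}. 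Apart from spotting this cancellation, the whole argument is a routine algebraic substitution of Lemma~\ref{l:MetricXi}.
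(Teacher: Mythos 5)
Your proposal is correct and follows essentially the same route as the paper: both characterise the $C_{ij}^k$ by the linear system $\sum_k C_{ij}^k\llangle X_k,X_l\rrangle=\llangle[X_i,X_j],X_l\rrangle$ (using that each $X_l$ is a section of $D$ and that the Gram matrix is invertible) and then verify the claimed formula by substitution, with the same key cancellation of the $-\gamma_k\gamma_l(a_n-a_k)(a_n-a_l)$ contributions. The only difference is organisational — you collapse the $k$-sum via the Kronecker deltas first, whereas the paper sums over $k$ separately for each of the three summands of $K_{kl}$ — and the algebra checks out in both cases.
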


\begin{proof} Using $\llangle \mathcal{P} [X_i, X_j ] , X_l \rrangle = \llangle [X_i, X_j ] , X_l \rrangle$, Equation~\eqref{eq:gyrosccoeff}, and the invertibility
of the matrix with coefficients $K_{kl}:=\llangle X_k, X_l\rrangle$, it follows that the gyroscopic coefficients $C_{ij}^k$ are characterised by the relations
\begin{equation}
\label{eq:proofeqn}
\sum_{k=1}^{n-1}C_{ij}^k\llangle X_k, X_l\rrangle=\llangle [X_i, X_j ] , X_l \rrangle, \qquad i,j,k,l\in \{1, \dots, n-1\}.
\end{equation}
We shall prove that the coefficients given by~\eqref{eq:Cijk} satisfy these equations. Starting with~\eqref{eq:Cijk} we compute:
\begin{equation}
\label{eq:auxlemmaCijk1}
\begin{split}
&\sum_{k=1}^{n-1}C_{ij}^k \gamma_l\gamma_k(a_n-a_l)(a_n-a_k) \\ & \qquad \qquad  =
\frac{\gamma_l(a_n-a_l)}{(A\gamma,\gamma)} \sum_{k=1}^{n-1} \left (-\gamma_j(a_j-a_n)\delta_{ik}+\gamma_i(a_i-a_n) \delta_{jk} \right )\gamma_k(a_n-a_k) \\
& \qquad \qquad  =\frac{\gamma_l(a_n-a_l)}{(A\gamma,\gamma)}\left (-\gamma_i\gamma_j(a_j-a_n)(a_n-a_i)+\gamma_i\gamma_j(a_i-a_n)(a_n-a_j) \right )=0.
\end{split}
\end{equation}
Similarly,
\begin{equation}
\label{eq:auxlemmaCijk2}
\begin{split}
\sum_{k=1}^{n-1}C_{ij}^k (A\gamma,\gamma)a_l\delta_{kl} &   = \sum_{k=1}^{n-1} \left (-\gamma_j(a_j-a_n)\delta_{ik}+\gamma_i(a_i-a_n) \delta_{jk} \right )a_l\delta_{kl} \\
& =-a_l\gamma_j(a_j-a_n)\delta_{il}+a_l\gamma_i(a_i-a_n)\delta_{jl}.
\end{split}
\end{equation}
And also,
\begin{equation}
\label{eq:auxlemmaCijk3}
\begin{split}
\sum_{k=1}^{n-1}C_{ij}^k (A\gamma,\gamma) \left ( \frac{a_n\gamma_l\gamma_k}{\gamma_n^2} \right ) &   =  \frac{a_n\gamma_l}{\gamma_n^2} \sum_{k=1}^{n-1} \left (-\gamma_j(a_j-a_n)\delta_{ik}+\gamma_i(a_i-a_n) \delta_{jk} \right )\gamma_k \\
& = \frac{a_n\gamma_i\gamma_j\gamma_l}{\gamma_n^2}(a_i-a_j).
\end{split}
\end{equation}
Combining  the expressions \eqref{eq:Kij} for $\llangle X_k, X_l \rrangle$  and \eqref{eq:XiXjXl} for $\llangle [X_i, X_j ] , X_l \rrangle$,
with the Equations \eqref{eq:auxlemmaCijk1}, \eqref{eq:auxlemmaCijk2} and \eqref{eq:auxlemmaCijk3} obtained above, shows that~\eqref{eq:proofeqn} holds.
\end{proof}

We are now ready to present:
\begin{proof}[Proof of Theorem~\ref{T:Veselova-phi-simple}]
Lemma~\ref{L:Gyro-coeff-Veselova} implies
\begin{equation}
\label{eq:auxProofVesGyroT}
\mathcal{T} \left (\frac{\partial}{\partial s^i} ,\frac{\partial}{\partial s^j} \right ) = \frac{-\gamma_j(a_j-a_n)}{(A\gamma,\gamma)}\frac{\partial}{\partial s^i} +
\frac{\gamma_i(a_i-a_n)}{(A\gamma,\gamma)}\frac{\partial}{\partial s^j}, \qquad 1\leq i,j\leq n-1.
\end{equation}
Considering that
\begin{equation*}
\begin{split}
(A\gamma,\gamma)&=a_1s_1^2 +\dots + a_{n-1}s_{n-1}^2+a_n(1-s_1^2-\dots - s_{n-1}^2) \\ &=a_n +(a_1-a_n)s_1^2 + \dots + (a_{n-1}-a_n)s_{n-1}^2,
\end{split}
\end{equation*}
we have
\begin{equation*}
\frac{\partial}{\partial s^k}   \left ( -\frac{1}{2} \ln (A\gamma,\gamma)\right )= \frac{-s_k(a_k-a_n)}{(A\gamma,\gamma)} = \frac{-\gamma_k(a_k- a_n)}{(A\gamma,\gamma)},
\qquad \mbox{for  all $1\leq k \leq n-1$.}
\end{equation*}
 Therefore, Equation~\eqref{eq:auxProofVesGyroT} may be rewritten as
 \begin{equation*}
\label{eq:gyro-T-Veselova}
\mathcal{T} \left (\frac{\partial}{\partial s^i} ,\frac{\partial}{\partial s^j} \right ) = \frac{\partial \phi}{\partial s^j}\frac{\partial}{\partial s^i} -
\frac{\partial \phi}{\partial s^i}\frac{\partial}{\partial s^j}, \qquad 1\leq i,j\leq n-1,
\end{equation*}
with $\phi(\gamma)=-\frac{1}{2} \ln (A\gamma,\gamma)$. The proof of Theorem~\ref{T:Veselova-phi-simple} follows easily from the 
above expression and  the tensorial properties of $\mathcal{T}$, and the fact that $\Ss^{n-1}$ may be covered with coordinate charts on its 
different hemispheres,
similar to the one that we have considered for $\Ss^{n-1}_+$, and formulae analogous to~\eqref{eq:auxProofVesGyroT} hold on
 each of them. 
\end{proof}

\section{Future work}
\label{S:FutureWork}

This paper shows that every  $\phi$-simple Chaplygin systems admits a Hamiltonisation, and
that a key example like the multidimensional  Veselova problem is $\phi$-simple. 
However,  in virtue of Remark \ref{rmk:Ham-with-Omega0}, being $\phi$-simple is not 
a necessary condition for Hamiltonisation. Therefore the question
 remains open to find examples of Hamiltonisable Chaplygin systems which are not $\phi$-simple.
Also, it would be desirable  to give a geometric  characterisation, in terms of the kinetic energy metric and the constraint distribution, of the necessary conditions for a Chaplygin system to admit a Hamiltonisation.

\appendix

\section{Geometric preliminaries}
\label{App:Intrinsic}

\subsection*{Linear functions induced by sections of a vector bundle}
Let $M$ be a manifold and $\tau: E\to M$  a vector bundle. For a section $Y\in \Gamma(E)$ we denote by $Y^\ell\in C^\infty(E^*)$ the 
function defined by
\begin{equation}
\label{eq:linear-function}
Y^\ell (\alpha)= \alpha (Y(\tau (\alpha))).
\end{equation}

\subsection*{Metric musical isomorphisms}

Let $(M, \llangle \cdot \, ,  \, \cdot   \rrangle)$ be a Riemannian manifold. For $\beta \in \Omega^1(M)$ and   $Y\in \mathfrak{X}(M)$  we define the  \defn{metric duals}   $\beta^{\sharp}  \in {\frak X}(M)$
and $Y^\flat \in \Omega^1(M)$ by the conditions
\begin{equation}\label{eq:metric-dual}
\beta(X) = \llangle \beta^{\sharp}, X \rrangle, \quad Y^\flat (X) = \llangle Y, X \rrangle, \quad \mbox{ for } X\in {\frak X}(M).
\end{equation}
The   isomorphisms  $\beta \mapsto \beta^\sharp$ and $Y\mapsto Y^\flat$  at the point $m\in M$ will be correspondingly denoted  by 
\begin{equation}
\label{eq:musical-isomorphisms}
{\sharp}_m:T_m^*M\to T_mM, \qquad {\flat}_m:T_mM\to T_m^*M.
\end{equation}

\subsection*{ Vertical and complete lifts of  $1$-forms and vector fields on a  cotangent bundle}

Let  $\tau_S:T^*S \to S$ denote the canonical projection.  
If $\gamma\in \Omega^1(S)$ then the \defn{vertical lift} $\gamma^{\bf v}$ is the vertical vector field on $T^*S$ defined by
\begin{equation}\label{vertical-lift-dual}
\gamma^{\bf v}(\alpha) =\left . \frac{d}{dt} \right |_{t=0}(\alpha + t \gamma(\tau_S(\alpha))), \; \; \mbox{ for } \alpha \in T^*S.
\end{equation}

On the other hand,  if $Y\in \mathfrak{X}(S)$, then it induces the linear function $Y^\ell$ on $T^*S$ (defined by \eqref{eq:linear-function}) whose Hamiltonian vector field
with respect to $\Omega_{can}$ is the  \defn{complete lift} of $Y$ denoted by $Y^{*c}\in \frak{X}(T^*S)$. Namely,   $Y^{*c}$
is characterised by 
\begin{equation}\label{complete-lift-dual}
{\bf i}_{Y^{*c}}\Omega_{can} = dY^\ell.
\end{equation}

Let  $(s^1, \dots, s^r, p_1, \dots p_r)$ denote fibered coordinates of  $T^*S$
such that the canonical 2-form  has local expression $\Omega_{can}= \sum_{i=1}^rds^i \wedge dp_i$. If  $\gamma$  and  $Y$ are locally given by
\[
 \gamma =\sum_{i=1}^r \gamma_i \, ds^{i}, \qquad  Y = \sum_{i=1}^rY^i\frac{\partial}{\partial s^{i}},  
\]
then, the  definitions given above  imply
\begin{equation}\label{vertical-complete-lift-dual}
\gamma^{\bf v} =\sum_{i=1}^j \gamma_i\frac{\partial}{\partial p_{i}}, \qquad Y^{*c} =\sum_{i=1}^r\left ( 
 Y^i\frac{\partial}{\partial s^{i}} -\sum_{j=1}^r \frac{\partial Y^{i}}{\partial s^{j}}p_j\frac{\partial}{\partial p_i} \right ).
\end{equation}
 In particular we have  
\begin{equation}\label{contraction-vertical-Omega-can}
{\bf i}_{\gamma^{\bf v}}\Omega_{can} = -\tau_S^*(\gamma).
\end{equation}

For reference, we also 
note that the Lie brackets of the previous vector fields are 
\begin{equation}\label{Lie-bracket-complete-vertical}
[Y^{*c}, Z^{*c}] = [Y, Z]^{*c}, \qquad  [Y^{*c}, \gamma^{\bf v}] = ({\mathcal L}_Y\gamma)^{\bf v}, \qquad [\gamma^{\bf v}, \beta^{\bf v}] = 0,
\end{equation}
where $Y, Z\in \frak{X}(S)$ and $\gamma, \beta \in \Omega^1(S)$.

\subsection*{Quadratic function associated to a $(2, 0)$ tensor field}

Let $\Xi$ be a tensor field on $S$ of type $(2, 0)$. We denote by $\Xi^{\mathfrak{q}}$ the quadratic function  on $T^*S$ defined by
\begin{equation}
\label{eq:quad-function}
\Xi^{\mathfrak{q}}(\alpha) = \Xi(\alpha, \alpha).
\end{equation}
Locally, if 
\[
\Xi =\sum_{i,j=1}^r \Xi^{ij}(s) \frac{\partial}{\partial s^{i}}\otimes\frac{\partial}{\partial s^{j}} \qquad \mbox{ then } \qquad \Xi^{\mathfrak{q}} =\sum_{j,k=1}^r \Xi^{jk}(s)p_j p_k.
\]
%%%%%%%%%%%%%%%%%%%%%%%%%%%%%%%
\section{Proofs of Lemmas~\ref{l:commXi} and~\ref{l:MetricXi}.}
\label{app:proofs-lemmas}

\subsection{Proof of Lemma~\ref{l:commXi}.}
\label{app:lemmaA1}

The proof is a long calculation. We present some preliminary lemmas that contain intermediate steps of it. Recall that we work with the
left trivialisation of $T\SO(n)$ so vector fields $Y\in \frak{X}(\SO(n))$ are interpreted as functions $Y:\SO(n)\to \so(n)$.

\begin{lemma}
\label{l:left-eij}
Consider the left invariant vector field $e_i\wedge e_j$ on $\SO(n)$ and denote by $g_{jk}:\SO(n)\to \R$ the function
that returns the $j$-$k$ entry of $g\in \SO(n)$. We have
\begin{equation*}
e_i\wedge e_j\, [g_{kl}] =g_{ki}\delta_{jl} - g_{kj}\delta_{il}.
\end{equation*}
\end{lemma}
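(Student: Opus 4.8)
The plan is to unwind the left-trivialization convention and thereby reduce the statement to a one-line matrix computation. First I would recall that, under the left trivialization of $T\SO(n)$ used throughout this section (where a vector field $Y$ is encoded by its $\so(n)$-valued representative $Y(g)$, so that the genuine tangent vector at $g$ is $gY(g)$), the left-invariant vector field determined by a constant $\xi\in\so(n)$ is realised at $g$ as the matrix $g\xi$. This is consistent with the convention fixed earlier in the section, where the body angular velocity is $\Omega=g^{-1}\dot g$, so that a velocity $\dot g$ corresponds to $\xi=g^{-1}\dot g$ and hence $\dot g=g\xi$. In particular the left-invariant field $e_i\wedge e_j$ acts at $g$ as $g(e_ie_j^T-e_je_i^T)$, using $e_i\wedge e_j=e_ie_j^T-e_je_i^T$.

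Next I would write the coordinate function as $g_{kl}=e_k^Tg\,e_l$, which is linear in the matrix entries of $g$, so that its differential applied to a tangent vector $v$ (viewed as a matrix) is simply $e_k^Tv\,e_l=v_{kl}$. Applying this with $v=g(e_ie_j^T-e_je_i^T)$ gives
\[
(e_i\wedge e_j)[g_{kl}]=e_k^T\,g\,(e_ie_j^T-e_je_i^T)\,e_l.
\]

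Finally I would expand the right-hand side using the elementary identities $e_a^Te_b=\delta_{ab}$ and $e_k^Tg\,e_a=g_{ka}$: the first term becomes $(e_k^Tg\,e_i)(e_j^Te_l)=g_{ki}\delta_{jl}$ and the second becomes $(e_k^Tg\,e_j)(e_i^Te_l)=g_{kj}\delta_{il}$, yielding exactly $g_{ki}\delta_{jl}-g_{kj}\delta_{il}$, as claimed. The only point that genuinely requires care — and the place where a sign or transpose error could slip in — is pinning down the correct form of the left-invariant vector field (that it acts by $g\mapsto g\xi$ rather than $\xi g$) compatibly with the body-frame convention $\Omega=g^{-1}\dot g$; once that convention is fixed the remainder of the argument is purely mechanical.
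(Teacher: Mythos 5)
Your proposal is correct and follows essentially the same route as the paper: both identify the left-invariant field at $g$ with the matrix $g(e_ie_j^T-e_je_i^T)$ (the paper via the curve $t\mapsto g\exp(t(e_i\wedge e_j))$, you via the convention $\Omega=g^{-1}\dot g$) and then read off the $(k,l)$ entry. The final contraction $e_k^T g(e_ie_j^T-e_je_i^T)e_l=g_{ki}\delta_{jl}-g_{kj}\delta_{il}$ is the same computation the paper performs by writing out the matrix explicitly.
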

\begin{proof}
\begin{equation*}
\begin{split}
e_i\wedge e_j\, [g_{kl}] & = \left . \frac{d}{dt} \right |_{t=0}\left ( g \exp \left  ( t(e_i\wedge e_j) \right )  \right )_{kl} = 
\left (g (e_i\wedge e_j)  \right )_{kl}  \\
&= \left ( (g_{1i}, \dots , g_{ni} )^Te_j^T - (g_{1j}, \dots , g_{nj} )^Te_i^T \right )_{kl} 
=  \begin{array}{c}  \begin{array}{ccccc} &  i  \; \; \; \; \; \;\;  & & j&  \end{array}  \\ \left ( \begin{array}{ccccc} & -g_{1j} & & g_{1i} & \\ 0 & \vdots & 0 & \vdots & 0 \\ & -g_{nj} & & g_{ni} &    \end{array} \right )_{kl} \end{array}.
\end{split}
\end{equation*}
\end{proof}

Now recall that  $\gamma=g^{-1}e_n$ so that $\gamma_i=g_{ni}$ for $i=1,\dots, n$.

\begin{lemma}\label{l:left-q-eij}
Consider the $\SO(n-1)$-equivariant vector field $\gamma \wedge e_j$ on $\SO(n)$. We have
\begin{equation*}
\gamma \wedge e_j\, [\gamma_k] =  \delta_{jk}-\gamma_j\gamma_k.
\end{equation*}
\end{lemma}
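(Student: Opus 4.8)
The plan is to reduce everything to the left-invariant computation already carried out in Lemma~\ref{l:left-eij}. The essential observation is that, working in the left trivialisation, the value $Y[f](g)$ of a derivation $Y \in \mathfrak{X}(\SO(n))$ at a point $g$ depends only on the single element $Y(g) \in \so(n)$; concretely, $Y[f](g) = \left.\frac{d}{dt}\right|_{t=0} f(g\exp(tY(g)))$. Thus, even though the vector field $\gamma \wedge e_j$ is \emph{not} left-invariant (since $\gamma = g^{-1}e_n$ varies with $g$), in order to evaluate $\gamma \wedge e_j[\gamma_k]$ at a fixed $g$ I may freeze the components $\gamma_m = g_{nm}$ at their values at that particular $g$ and expand
\[
\gamma \wedge e_j = \sum_{m=1}^n \gamma_m\,(e_m \wedge e_j)
\]
as a fixed linear combination of the left-invariant fields $e_m \wedge e_j$.

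Next I would apply Lemma~\ref{l:left-eij} termwise. Recalling that $\gamma_k = g_{nk}$, the lemma (with wedge indices $(m,j)$ and entry indices $(n,k)$) gives $(e_m \wedge e_j)[g_{nk}] = g_{nm}\delta_{jk} - g_{nj}\delta_{mk} = \gamma_m\delta_{jk} - \gamma_j\delta_{mk}$. Contracting against $\gamma_m$ and summing over $m = 1, \dots, n$ then yields
\[
\gamma \wedge e_j[\gamma_k] = \sum_{m=1}^n \gamma_m\bigl(\gamma_m\delta_{jk} - \gamma_j\delta_{mk}\bigr) = \Bigl(\sum_{m=1}^n \gamma_m^2\Bigr)\delta_{jk} - \gamma_j\gamma_k.
\]

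Finally I would invoke the fact that $\gamma = g^{-1}e_n$ is a unit vector, because $g \in \SO(n)$ is orthogonal, so that $\sum_{m=1}^n \gamma_m^2 = |\gamma|^2 = 1$. This collapses the expression to $\delta_{jk} - \gamma_j\gamma_k$, as claimed. The one step demanding genuine care — and the place where a naive computation would go wrong — is the freezing argument at the outset: one must resist differentiating the $g$-dependence of $\gamma$, which would introduce spurious terms. It is precisely the left-trivialisation pointwise formula that guarantees only the value $\gamma(g)$ enters, making the reduction to the left-invariant Lemma~\ref{l:left-eij} exact.
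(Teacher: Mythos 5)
Your proof is correct and follows essentially the same route as the paper: both decompose $\gamma\wedge e_j=\sum_m\gamma_m\,(e_m\wedge e_j)$ pointwise, apply Lemma~\ref{l:left-eij} termwise, and use $\sum_m\gamma_m^2=1$. The only cosmetic difference is that the paper treats the cases $j\neq k$ and $j=k$ separately, whereas you handle both at once via the Kronecker deltas; your remark on the pointwise linearity of the derivation in the vector-field argument (the ``freezing'') is exactly the justification the paper uses implicitly.
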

\begin{proof}
If $j\neq k$ then, writing $\gamma=\sum_{i=1}^n \gamma_i \, e_i$, we have
\begin{equation*}
\begin{split}
\gamma \wedge e_j \, [\gamma_k]& = \sum_{i=1}^n \gamma_i \, e_i \wedge e_j \, [g_{nk}]=\sum_{i=1}^n  \gamma_i (-\delta_{ik} g_{nj}) =-\gamma_k\gamma_j,
 \end{split}
\end{equation*}
where we have used Lemma~\ref{l:left-eij}.
Using again  Lemma~\ref{l:left-eij},
\begin{equation*}
\begin{split}
\gamma \wedge e_k \, [\gamma_k]& = \sum_{i=1}^n \gamma_i \, e_i \wedge e_k\, [g_{nk}]=\sum_{i=1}^n  \gamma_i(g_{ni} - \delta_{ik} g_{nk})=
\left ( \sum_{i=1}^n 
 \gamma_{i}^2 \right )-\gamma_k^2=
1-\gamma_k^2.
 \end{split}
\end{equation*}
\end{proof}

\begin{lemma}
\label{l:diff-quotient}
Consider the $\SO(n-1)$-equivariant vector field $\gamma \wedge e_j$ on $\SO(n)$.  Along the open subset of $\SO(n)$ where $\gamma_n\neq 0$, we have
\begin{equation*}
\gamma \wedge e_j\, \left [\frac{\gamma_k}{\gamma_n} \right ] =\begin{cases}  0 \quad \mbox{if} \quad j\neq k \neq n \neq j, \\ \frac{1}{\gamma_n} \quad  \mbox{if} \quad j=k \neq n, \\
-\frac{\gamma_k}{\gamma_n^2}  \quad \mbox{if} \quad j = n \neq k.
\end{cases}
\end{equation*}
\end{lemma}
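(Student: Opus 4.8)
The plan is to treat the whole computation as a single application of the quotient rule, feeding in the two bracket values already recorded in Lemma~\ref{l:left-q-eij}. Since $\gamma\wedge e_j$ is a vector field, hence a derivation, and since $\gamma_n$ is nowhere vanishing on the open subset under consideration, the function $\gamma_k/\gamma_n$ is smooth there and one has
\begin{equation*}
\gamma\wedge e_j\left[\frac{\gamma_k}{\gamma_n}\right] = \frac{\gamma_n\,(\gamma\wedge e_j[\gamma_k]) - \gamma_k\,(\gamma\wedge e_j[\gamma_n])}{\gamma_n^2}.
\end{equation*}

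The second step is to substitute the two brackets. By Lemma~\ref{l:left-q-eij} we have $\gamma\wedge e_j[\gamma_k]=\delta_{jk}-\gamma_j\gamma_k$ and, applying the same lemma with $k$ replaced by $n$, $\gamma\wedge e_j[\gamma_n]=\delta_{jn}-\gamma_j\gamma_n$. Inserting these into the numerator, the two cubic terms $\gamma_j\gamma_k\gamma_n$ cancel against one another, leaving the clean expression
\begin{equation*}
\gamma\wedge e_j\left[\frac{\gamma_k}{\gamma_n}\right] = \frac{\delta_{jk}\gamma_n-\delta_{jn}\gamma_k}{\gamma_n^2} = \frac{\delta_{jk}}{\gamma_n}-\frac{\delta_{jn}\gamma_k}{\gamma_n^2}.
\end{equation*}

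The final step is simply to read off the three cases from the Kronecker deltas: if $j$, $k$ and $n$ are pairwise distinct then both deltas vanish and the result is $0$; if $j=k\neq n$ then $\delta_{jk}=1$ and $\delta_{jn}=0$, so only the first term survives and gives $1/\gamma_n$; and if $j=n\neq k$ then $\delta_{jk}=0$ and $\delta_{jn}=1$, so only the second term survives and gives $-\gamma_k/\gamma_n^2$. I do not anticipate any genuine obstacle: the one point to watch is the cancellation of the cubic terms, which is exactly what makes the answer depend only on the deltas and not on the individual components of $\gamma$.
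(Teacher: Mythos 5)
Your proof is correct and follows essentially the same route as the paper: apply the derivation (quotient) rule to $\gamma_k/\gamma_n$ and substitute the bracket values from Lemma~\ref{l:left-q-eij}. The only difference is cosmetic — you carry the Kronecker deltas through a single computation and read off the three cases at the end, whereas the paper splits into the three cases first; your unified version is slightly cleaner and the cancellation of the cubic terms you highlight is exactly what happens case by case in the paper's argument.
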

\begin{proof}
If $ j\neq k \neq n \neq j$ then 
\begin{equation*}
\begin{split}
\gamma \wedge e_j \, \left [\frac{\gamma_k}{\gamma_n} \right ]& = \frac{1}{\gamma_n} \gamma \wedge e_j\, \left [ \gamma_k \right ] -  \frac{\gamma_k}{\gamma_n^2}
 \gamma \wedge e_j\, \left [ \gamma_n \right ] = -\frac{\gamma_k\gamma_j}{\gamma_n}-  \frac{\gamma_k}{\gamma_n^2}(-\gamma_j\gamma_n)=0,
 \end{split}
\end{equation*}
where we have used Lemma~\ref{l:left-q-eij}. Similarly, using again  Lemma~\ref{l:left-q-eij} and assuming $k \neq n$,
\begin{equation*}
\begin{split}
\gamma \wedge e_k \, \left [\frac{\gamma_k}{\gamma_n} \right ]& = \frac{1}{\gamma_n} \gamma \wedge e_k \, \left [ \gamma_k \right ] -  \frac{\gamma_k}{\gamma_n^2}
 \gamma \wedge e_k\, \left [ \gamma_n \right ] = \frac{1}{\gamma_n}(1-\gamma_k^2)-  \frac{\gamma_k}{\gamma_n^2}(-\gamma_k\gamma_n) =  \frac{1}{\gamma_n}.
  \end{split}
\end{equation*}
Finally, assuming again $k \neq n$, and using once more Lemma~\ref{l:left-q-eij},
\begin{equation*}
\begin{split}
\gamma \wedge e_n \, \left [\frac{\gamma_k}{\gamma_n} \right ]& = \frac{1}{\gamma_n} \gamma \wedge e_n \, \left [ \gamma_k \right ] -  \frac{\gamma_k}{\gamma_n^2}
 \gamma \wedge e_n \, \left [ \gamma_n \right ] = \frac{1}{\gamma_n}(-\gamma_k\gamma_n)-  \frac{\gamma_k}{\gamma_n^2}(1-\gamma_n^2) =  -  \frac{\gamma_k}{\gamma_n^2}.
  \end{split}
\end{equation*}

\end{proof}

\begin{lemma} 
\label{l:comm-aux}
Consider the $\SO(n-1)$-equivariant vector fields $\gamma \wedge e_i$ and $\gamma \wedge e_j$ on $\SO(n)$. We have
\begin{equation*}
\left [ \, \gamma \wedge e_i \, , \, \gamma \wedge e_j   \, \right ] = e_i \wedge e_j.
\end{equation*}
\end{lemma}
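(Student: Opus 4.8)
The plan is to compute the bracket using the standard expression for the Jacobi--Lie bracket of two vector fields in the left trivialisation of $T\SO(n)$. Writing a vector field as an $\so(n)$-valued function $Y:\SO(n)\to\so(n)$ (so that its value at $g$ is the tangent vector $gY(g)$), the bracket of $X$ and $Y$ is
\[
[X,Y](g)=X[Y](g)-Y[X](g)+[X(g),Y(g)],
\]
where $X[Y]$ denotes the directional derivative of the matrix-valued function $Y$ along $X$, computed exactly as in Lemma~\ref{l:left-q-eij}, and $[\cdot,\cdot]$ on the right is the matrix commutator in $\so(n)$. The sign of the commutator term is $+$ precisely because we use the \emph{left} trivialisation, as one checks immediately on left-invariant vector fields (for which the two derivative terms vanish and the bracket must reduce to the matrix commutator). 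I would first record this formula and then evaluate its two pieces separately for $X=\gamma\wedge e_i$ and $Y=\gamma\wedge e_j$.

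For the derivative part, since $e_j$ is constant, differentiating $\gamma\wedge e_j=\gamma e_j^T-e_j\gamma^T$ along $X$ acts only on the factor $\gamma$. Lemma~\ref{l:left-q-eij} gives $(\gamma\wedge e_i)[\gamma_k]=\delta_{ik}-\gamma_i\gamma_k$, that is $X[\gamma]=e_i-\gamma_i\gamma$, so that
\[
X[\gamma\wedge e_j]=(e_i-\gamma_i\gamma)\wedge e_j=e_i\wedge e_j-\gamma_i(\gamma\wedge e_j),
\]
and symmetrically $Y[\gamma\wedge e_i]=-\,e_i\wedge e_j-\gamma_j(\gamma\wedge e_i)$. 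Subtracting yields
\[
X[Y]-Y[X]=2\,e_i\wedge e_j-\gamma_i(\gamma\wedge e_j)+\gamma_j(\gamma\wedge e_i).
\]

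For the commutator part I would use the elementary identity
\[
[u\wedge v,\,w\wedge z]=(v,w)\,u\wedge z-(v,z)\,u\wedge w-(u,w)\,v\wedge z+(u,z)\,v\wedge w,
\]
valid for all $u,v,w,z\in\R^n$, which follows at once from $u\wedge v=uv^T-vu^T$ by multiplying the two matrices out. Applying it with $u=w=\gamma$, $v=e_i$, $z=e_j$ and using $(\gamma,\gamma)=1$, $\gamma\wedge\gamma=0$, $(e_i,\gamma)=\gamma_i$, $(e_j,\gamma)=\gamma_j$ and $e_i\wedge\gamma=-\gamma\wedge e_i$ gives
\[
[\gamma\wedge e_i,\gamma\wedge e_j]=\gamma_i(\gamma\wedge e_j)-e_i\wedge e_j-\gamma_j(\gamma\wedge e_i).
\]
Adding the derivative and commutator parts, the terms proportional to $\gamma\wedge e_i$ and to $\gamma\wedge e_j$ cancel in pairs, while the constant parts combine as $2\,e_i\wedge e_j-e_i\wedge e_j$, leaving exactly $e_i\wedge e_j$, as claimed.

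The only genuinely delicate points are bookkeeping ones: fixing the correct ($+$) sign of the commutator contribution dictated by the left trivialisation, and keeping track of all four terms together with the normalisation $(\gamma,\gamma)=1$ in the wedge commutator identity. Once these are pinned down the cancellation is automatic and the identity drops out.
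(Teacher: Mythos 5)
Your proof is correct and follows essentially the same strategy as the paper's: both split the bracket into derivative terms, evaluated via Lemma~\ref{l:left-q-eij}, plus matrix-commutator terms. The only difference is organisational — you invoke the global bracket formula in the left trivialisation together with a closed-form identity for $[u\wedge v, w\wedge z]$, whereas the paper expands $\gamma\wedge e_i=\sum_k\gamma_k\,e_k\wedge e_i$ and applies the Leibniz rule to the resulting double sum; your packaging is slightly cleaner and avoids the case distinction $i\neq j$, but the computation is the same.
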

\begin{proof} This is obvious if $i=j$, so assume $i\neq j$. We have
\begin{equation}
\begin{split}
\label{eq:AuxLemmaComm1}
 \left [ \, \gamma \wedge e_i \, , \, \gamma \wedge e_j   \, \right ] &=  \left [ \, \sum_{k=1}^n \gamma_k \, e_k \wedge e_i  \, , \, \sum_{l=1}^n
  \gamma_l  \, e_l \wedge e_j    \, \right ] =  \sum_{k,l=1}^n \left (\gamma_k\gamma_l \left [ \, e_k \wedge e_i  \, , \,  e_l \wedge e_j  \, \right ] \right ) 
    \\
  & \qquad \qquad\qquad +  \sum_{l=1}^n \left ( 
 \gamma\wedge e_i  \left [ \gamma_l  \right ]  \right ) \,  e_l \wedge e_j  -\sum_{k=1}^n (  \gamma \wedge e_j 
\left [ \gamma_k  \right ] )\, e_k \wedge e_i .
   \end{split}
\end{equation}
Using Lemma~\ref{l:left-q-eij}  we have 
\begin{equation}
\label{eq:AuxLemmaComm2}
\begin{split}
\sum_{l=1}^n \left ( 
\gamma\wedge e_i  \left [ \gamma_l  \right ]  \right )  \, e_l \wedge e_j   & =
  \sum_{l=1}^n \left ( \delta_{il}
-\gamma_i\gamma_l \right )  \, e_l \wedge e_j  =  e_i \wedge e_j -\gamma_i \, \gamma \wedge e_j ,
  \end{split}
\end{equation}
and by the same reasoning
\begin{equation}
\label{eq:AuxLemmaComm3}
\begin{split}
\sum_{k=1}^n (  \gamma \wedge e_j 
\left [ \gamma_k  \right ] ) \, e_k \wedge e_i    = e_j \wedge e_i-\gamma_j  \, \gamma \wedge e_i.
  \end{split}
\end{equation}

On the other hand, using that the Lie bracket of left invariant vector fields is determined by the Lie bracket of their generators in the Lie algebra, and
 computing the matrix commutator gives:
\begin{equation*}
\left [ \, e_k \wedge e_i  \, , \,  e_l \wedge e_j    \, \right ] =\delta_{kl}  e_j \wedge e_i +\delta_{kj}  e_i \wedge e_l
+\delta_{li} e_k \wedge e_j,
\end{equation*}
where we have used our assumption that $i\neq j$. So we can simplify
\begin{equation}
\label{eq:AuxLemmaComm4}
\begin{split}
  \sum_{k,l=1}^n \left (\gamma_k\gamma_l \left [ \, e_k \wedge e_i  \, , \,  e_l \wedge e_j  \, \right ] \right )
 & = e_j \wedge e_i \sum_{k=1}^n\gamma_k^2 +\gamma_j \sum_{l=1}^n\gamma_l \, e_i \wedge e_l   +\gamma_i  \sum_{k=1}^n\gamma_k \, e_k \wedge e_j  \\
 &=  e_j \wedge e_i + \gamma_j \, e_i \wedge \gamma +\gamma_i \, \gamma \wedge e_j. 
   \end{split}
\end{equation}
Substituting \eqref{eq:AuxLemmaComm2}, \eqref{eq:AuxLemmaComm3} and \eqref{eq:AuxLemmaComm4} into \eqref{eq:AuxLemmaComm1} proves the 
result.

\end{proof}

We are finally ready to present a proof of Lemma~\ref{l:commXi}.
\begin{proof}[Proof of Lemma~\ref{l:commXi}]
 The result is obvious if $i=j$ so we assume $i\neq j$. Using the standard properties of Lie brackets of vector fields we have:
\begin{equation*}
\begin{split}
[X_i,X_j]
 & =[ \gamma\wedge e_i ,\gamma\wedge e_j] -\frac{\gamma_j}{\gamma_n} [ \gamma\wedge e_i ,\gamma\wedge e_n] -\frac{\gamma_i}{\gamma_n} [ \gamma\wedge e_n ,\gamma\wedge e_j] \\
& \qquad + \left (  -\gamma\wedge e_i \left [  \frac{\gamma_j}{\gamma_n} \right ] +\gamma\wedge e_j \left [  \frac{\gamma_i}{\gamma_n} \right ] + \frac{\gamma_i}{\gamma_n} \gamma\wedge e_n \left [  \frac{\gamma_j}{\gamma_n} \right ] 
- \frac{\gamma_j}{\gamma_n} \gamma\wedge e_n \left [  \frac{\gamma_i}{\gamma_n} \right ]\right )\, \gamma\wedge e_n.
    \end{split}
\end{equation*}
In view of Lemmas~\ref{l:diff-quotient} and~\ref{l:comm-aux}, this simplifies to 
\begin{equation*}
\begin{split}
[X_i,X_j]
 & = e_i \wedge e_j -\frac{\gamma_j}{\gamma_n}  e_i \wedge e_n -\frac{\gamma_i}{\gamma_n}  e_n \wedge e_j 
 = \left ( e_i  -\frac{\gamma_i}{\gamma_n}e_n \right ) \wedge  \left ( e_j  -\frac{\gamma_j}{\gamma_n}e_n \right ).    \end{split}
\end{equation*}
\end{proof}

\subsection{Proof of Lemma~\ref{l:MetricXi}.}
\label{app:lemmaA2}
For the  calculations in this section, we use the following:
\begin{lemma}
\label{l:ComputeKilling}
If $u_1,v_1,u_2,v_2\in \R^n$ then
\begin{equation*}
(u_1\wedge v_1,u_2\wedge v_2)_\kappa= (u_1,u_2)(v_1,v_2)-(u_1,v_2)(u_2,v_1).
\end{equation*}
\end{lemma}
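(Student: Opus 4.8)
The plan is to prove the identity by a direct computation from the definitions of the Killing metric and of the wedge product, with no appeal to any structure theory of $\so(n)$. Writing $u_i\wedge v_i = u_iv_i^T - v_iu_i^T$ and recalling $(\xi,\eta)_\kappa = -\frac{1}{2}\tr(\xi\eta)$, I would first record
\[
(u_1\wedge v_1,u_2\wedge v_2)_\kappa = -\frac{1}{2}\tr\!\left((u_1v_1^T-v_1u_1^T)(u_2v_2^T-v_2u_2^T)\right),
\]
and then expand the product of these two matrices as
\[
u_1v_1^Tu_2v_2^T - u_1v_1^Tv_2u_2^T - v_1u_1^Tu_2v_2^T + v_1u_1^Tv_2u_2^T,
\]
so that the task reduces to taking the trace of four terms, each of the shape $ab^Tcd^T$ for column vectors $a,b,c,d\in\R^n$.

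The single computational tool needed is the trace identity $\tr(ab^Tcd^T)=(b,c)(a,d)$, valid for all $a,b,c,d\in\R^n$. This follows at once from $ab^Tcd^T=(b,c)\,ad^T$, since $b^Tc=(b,c)$ is a scalar, together with $\tr(ad^T)=(a,d)$. Applying it termwise gives
\[
\tr(u_1v_1^Tu_2v_2^T)=(v_1,u_2)(u_1,v_2), \qquad \tr(u_1v_1^Tv_2u_2^T)=(v_1,v_2)(u_1,u_2),
\]
\[
\tr(v_1u_1^Tu_2v_2^T)=(u_1,u_2)(v_1,v_2), \qquad \tr(v_1u_1^Tv_2u_2^T)=(u_1,v_2)(v_1,u_2).
\]

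Combining these with the sign pattern $+,-,-,+$ from the expansion, the trace of the product collapses to $2(u_1,v_2)(u_2,v_1)-2(u_1,u_2)(v_1,v_2)$, and multiplying by $-\tfrac{1}{2}$ yields exactly $(u_1,u_2)(v_1,v_2)-(u_1,v_2)(u_2,v_1)$, as claimed. There is no genuine obstacle here; the only thing to watch is the bookkeeping of the four terms and their signs. If one wished to bypass even that, I would note that both sides are multilinear in $u_1,v_1,u_2,v_2$, so it would suffice to check the identity on standard basis vectors, reducing it to a finite verification — but the direct trace computation above is the cleanest and shortest route.
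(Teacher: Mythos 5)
Your computation is correct and is exactly the "elementary calculation that uses the properties of the trace" which the paper invokes without writing out; expanding $u_i\wedge v_i=u_iv_i^T-v_iu_i^T$ and applying $\tr(ab^Tcd^T)=(b,c)(a,d)$ termwise is the intended argument, and your signs and final simplification check out. No differences to report.
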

\begin{proof}
It is an elementary  calculation that uses the properties of the trace.
\end{proof}

We begin by proving~\eqref{eq:Kij}. 
Using the crucial hypothesis \eqref{eq:inertia-tensor} on the inertia tensor we have:
\begin{equation*}
\begin{split}
\llangle X_k , X_l\rrangle &= \left ( \I \left ( \gamma\wedge \left ( e_k -\frac{\gamma_k}{\gamma_n} e_n \right ) \right ) \, , \, \gamma\wedge \left ( e_l -\frac{\gamma_l}{\gamma_n} e_n \right ) \right )_\kappa  \\
&=  \left ( \left ( A\gamma\wedge \left ( a_ke_k -\frac{a_n\gamma_k}{\gamma_n} e_n \right ) \right ) \, , \, \gamma\wedge \left ( e_l -\frac{\gamma_l}{\gamma_n} e_n \right ) \right )_\kappa.
\end{split}
\end{equation*}
Using Lemma~\ref{l:ComputeKilling} this simplifies to 
\begin{equation*}
\llangle X_k , X_l\rrangle=(A\gamma,\gamma)\left ( a_k \delta_{kl}+\frac{a_n\gamma_k\gamma_l}{\gamma_n^2} \right ) -(a_l\gamma_l  - a_n\gamma_l) (a_k\gamma_k - a_n\gamma_k),
\end{equation*}
which is equivalent to \eqref{eq:Kij}.

Next, using Lemma~\ref{l:commXi} and the hypothesis \eqref{eq:inertia-tensor} on the inertia tensor, we have:
\begin{equation*}
\begin{split}
\llangle [X_i, X_j ] , X_l \rrangle &= \left ( \I \left (  \left ( e_i  -\frac{\gamma_i}{\gamma_n}e_n \right ) \wedge  \left ( e_j  -\frac{\gamma_j}{\gamma_n}e_n \right ) \right )
 \, , \, \gamma\wedge \left ( e_l -\frac{\gamma_l}{\gamma_n} e_n \right ) \right )_\kappa  \\
&=  \left (  \left ( a_ie_i  -\frac{a_n\gamma_i}{\gamma_n}e_n \right ) \wedge  \left ( a_je_j  -\frac{a_n\gamma_j}{\gamma_n}e_n \right ) 
 \, , \, \gamma\wedge \left ( e_l -\frac{\gamma_l}{\gamma_n} e_n \right ) \right )_\kappa.
\end{split}
\end{equation*}
Using Lemma~\ref{l:ComputeKilling} this simplifies to 
\begin{equation*}
\llangle [X_i, X_j ] , X_l \rrangle =(a_i\gamma_i -a_n\gamma_i)\left ( a_j \delta_{jl} + \frac{a_n\gamma_j\gamma_l}{\gamma_n^2} \right ) - (a_j\gamma_j -a_n\gamma_j)\left ( a_i \delta_{il} + \frac{a_n\gamma_i\gamma_l}{\gamma_n^2} \right ) ,
\end{equation*}
which upon rearrangement is equivalent to~\eqref{eq:XiXjXl}.

\vspace{1cm}

\noindent \textbf{Acknowledgements:} We are thankful to the anonymous referees for their
suggestions that helped us to improve this paper. LGN acknowledges the Alexander von Humboldt Foundation  for a  Georg Forster Experienced Researcher Fellowship   
  that funded a research visit to TU Berlin where part of this work was done. JCM acknowledges the partial support by European Union (Feder) grant MTM 2015-64166-C2-2P and PGC2018-098265-B-C32. The authors are thankful to R. Ch\'avez-Tovar for his help to produce Figure~\ref{F:Disk}.

\vskip 1cm

\noindent LGN: Departamento de Matem\'aticas y Mec\'anica, 
IIMAS-UNAM.
Apdo. Postal 20-126, Col. San \'Angel,
Mexico City, 01000,  Mexico. luis@mym.iimas.unam.mx. \\
 JCM: ULL-CSIC, Geometr{\'\i}a Diferencial y Mec\'anica Geom\'etrica, Departamento de Matem\'aticas, Estad\'istica e Investigaci\'on Operativa, 
 Facultad de Ciencias, Universidad de La Laguna, Tenerife, Spain. jcmarrer@ull.edu.es.
 

\begin{thebibliography}{99}
\let\\, \newcommand{\by}[1]{\textsc{\ignorespaces #1}\\}
  \newcommand{\title}[1]{\textsl{\ignorespaces #1}\\}
  \newcommand{\vol}[1]{{\bf{\ignorespaces #1}}}
  \newcommand{\info}[1]{\textrm{\ignorespaces #1}.}

\small  \setlength{\parskip}{0pt}
  

\bibitem{Baksa} \by{Bak{\v s}a, A.} On geometrization of motion of some nonholonomic systems, \emph{Mat. Vesnik} \vol{12} (1975),
233--244 (in Serbo-Croatian). English translation in \emph{Theor. Appl. Mech.} \vol{44} (2017), 133--139.


\bibitem{BalseiroGN} \by{Balseiro, P.\  and L.C.~Garc\'ia-Naranjo} 
Gauge transformations, twisted Poisson brackets and Hamiltonization of nonholonomic systems. 
    {\em  Arch. Rat. Mech. Anal.} {\bf 205} (2012), no. 1, 267--310.

\bibitem{BalseiroFdz} \by{Balseiro, P.\ and O.E.~Fernandez}  Reduction of nonholonomic systems in two stages and Hamiltonization.
{\em  Nonlinearity} {\bf 28} (2015) 2873. 

\bibitem{BS93}
	\by{Bates L., and J. Sniatycki}
	Nonholonomic reduction, \emph{Rep. Math. Phys.} \vol{32} (1993), 99--115.

\bibitem{Blackall}  \by{Blackall, C. J.} On volume integral invariants of non-holonomic dynamical systems. 
\emph{Amer. J. Math.} \vol{63} (1941), 155--168.

\bibitem{BKMM} \by{Bloch, A.M.,  P.S.~Krishnaprasad, J.E.~Marsden and R.M.~Murray} 
Nonholonomic mechanical systems with symmetry. \emph{Arch. Ration. Mech. Anal.} \vol{136} (1996), 21--99.

\bibitem{BlochBook}  \by{Bloch, A.M.}
\emph{Nonholonomic mechanics and control.}
$2^{nd}$ edition. Interdisciplinary Applied Mathematics, 24. Springer, New York, 2015. 

 \bibitem{BorMamChap} \by{Borisov A.~V. and   I.~S. Mamaev} Chaplygin's Ball Rolling Problem Is Hamiltonian. {\em Math. Notes}, (2001),
{\bf 70}, 793--795. 
 
  \bibitem{BolsBorMam2015}  \by{ Bolsinov, A. V.,  A. V. Borisov and  I. S. Mamaev} Geometrisation of Chaplygin's
 Reducing Multiplier theorem \emph{Nonlinearity}, \vol{28}  (2015),  2307--2318.
 

 \bibitem{BorMamHam}  \by{Borisov   A.V. and   I.S.~Mamaev}  Isomorphism and Hamilton Representation of Some Non-holonomic Systems, {\em Siberian Math. J.}, {\bf 48} (2007),  33--45 See also: arXiv: nlin.-SI/0509036 v. 1 (Sept. 21, 2005).
 
 


 \bibitem{CaCoLeMa} \by{Cantrijn F, Cort\'es J., de Le\'on M. and D. Mart\'in de Diego} On the geometry of generalized Chaplygin systems. \emph{Math. Proc.
 Cambridge Philos. Soc.} \vol{132} (2002),  323--351.



\bibitem{ChapRedMult} \by{Chaplygin, S.A.} On the theory of the motion of nonholonomic systems. The Reducing-Multiplier Theorem. \emph{Regul. Chaotic Dyn.} \vol{13}, 369--376 (2008) [Translated from \textit{Matematicheski\v{i} Sbornik} (Russian) \vol{28} (1911), by A. V. Getling]



\bibitem{EhlersKoiller} \by{Ehlers, K., J. Koiller, R. Montgomery  and P.M. Rios}
 Nonholonomic  Systems via Moving Frames: Cartan Equivalence and Chaplygin
Hamiltonization.  in \emph{The breath of Symplectic and Poisson Geometry},
Progress in Mathematics Vol.\ \vol{232} (2004), 75--120.


\bibitem{Fasso2} \by{Fass\`o, F., A.~Giacobbe and N.~Sansonetto}  
Gauge conservation laws and the momentum equation in nonholonomic mechanics. \emph{Rep.\ Math.\ Phys.} \vol{62} (2008), 345--367.


\bibitem{FassoJGM} \by{Fass\`o, F.  and N.~ Sansonetto}  
An elemental overview of the nonholonomic Noether theorem, \emph{ Int.\ J.\ Geom.\ Methods Mod.\ Phys.} \vol{6} (2009), 1343--1355.



\bibitem{Fasso3} \by{Fass\`o, F., Giacobbe~A. and N.~Sansonetto}  
Linear weakly Noetherian constants of motion are horizontal gauge momenta, 
\emph{J.\ Geom.\ Mech.} \vol{4} (2012), 129--136.





\bibitem{FGNM18} \by{Fass\`o, F.,  Garc{\'ia}-Naranjo L. C., and  Montaldi J.} Integrability and dynamics of the $n$-dimensional symmetric Veselova top. 
{\em J. Nonlinear Sci.}  \vol{29}, (2019)  1205--1246.


\bibitem{FedKoz} \by{Fedorov, Y. N., and V.V.~Kozlov} Various aspects of $n$-dimensional rigid body dynamics.
\emph{Amer.\ Math.\ Soc.\ Transl.\ (2)} \vol{168} (1995), 141--171.

\bibitem{FedJov} \by{Fedorov, Y. N. and B. Jovanovi\'c} Nonholonomic LR systems as generalized Chaplygin systems with an invariant measure and flows on homogeneous spaces. 
\emph{J.\ Nonlinear Sci.} \vol{14} (2004), 341--381. 

\bibitem{FedJov2} \by{Fedorov, Y. N. and B. Jovanovi\'c} Hamiltonization of the generalized Veselova LR system. 
\emph{Regul.\ Chaot.\ Dyn.} \vol{14} (2009), 495--505.

\bibitem{FedGNMa2015} \by{Fedorov Y.~N.,   Garc\'ia-Naranjo L.~C. and J.~C.~Marrero} Unimodularity and preservation of volumes in nonholonomic mechanics.
 {\em J. Nonlinear Sci.} \vol{25} (2015), 203--246.


\bibitem{Gajic} \by{Gaji\'c~B. and B.~Jovanovi\'c.} Nonholonomic connections, time reparametrizations,
and integrability of the rolling ball over a sphere.  {\em Nonlinearity} {\bf 32} (2019), 1675--1694.  

\bibitem{LGN10}  \by{Garc\'ia-Naranjo, L.C.} Reduction of almost Poisson brackets and Hamiltonization of the Chaplygin sphere. {\em Discrete Contin. Dyn. Syst. Ser. S}  {\bf 3} (2010), 37--60. 


\bibitem{LGN18}  \by{Garc\'ia-Naranjo, L.C.} 
Generalisation of Chaplygin's Reducing Multiplier  Theorem with an application to multi-dimensional nonholonomic
dynamics. {\em J. Phys. A: Math. Theor.}  {\bf 52} (2019) 205203 (16pp). 

\bibitem{LGNTAM-2019} \by{Garc\'ia-Naranjo L.~C.} {Hamiltonisation, measure preservation and first integrals of the 
multi-dimensional rubber Routh sphere.   {\em Theoretical and
Applied Mechanics}, {\bf 46} (2019) 65--88.}

\bibitem{Grab} \by{Grabowski, J., de Le\'on, M., Marrero, J. C. and D. Mart\'in de Diego} Nonholonomic constraints: a new viewpoint. {\em J. Math. Phys.} {\bf 50} (2009),  013520, 17 pp.





\bibitem{HochGN} \by{Hochgerner~S. and L.~C.Garc\'ia-Naranjo}  $G$-Chaplygin systems with internal symmetries, truncation, and an (almost) symplectic view of Chaplygin's ball. {\em J. Geom. Mech.} {\bf 1} (2009), 35--53.

\bibitem{Hoch}\by{Hochgerner~S.} Chaplygin systems associated to Cartan decompositions of semi-simple Lie groups. {\em Differential Geometry and its Applications} {\bf 28} (2010) 436--453



\bibitem{Ibort} \by{Ibort, A.,  de Le\'on M.,  Marrero J. C.   and D.  Mart\'in de Diego} Dirac Brackets in Constrained Dynamics, \emph{Fortschr. Phys.}  \vol{47}   (1999), 459--492.





\bibitem{Iliev1985} \by{Iliev, I.} 1985. On the conditions for the existence of the reducing Chaplygin factor. \emph{J. Appl.
Math. Mech.} \vol{49} (1985), 295--301.


\bibitem{JovaRubber} \by{Jovanovi\'c, B.}  LR and L+R systems. \emph{J. Phys. A}  {\bf 42} (2009),  18 pp.

\bibitem{Jovan} \by{Jovanovi\'c, B.} Hamiltonization and integrability of the Chaplygin sphere in $\R^n$. \emph{J. Nonlinear Sci.} \vol{20} (2010), 569--593.

\bibitem{Jova18} \by{Jovanovi\'c, B.} Rolling balls over spheres in $\R^n$. \emph{Nonlinearity}, \vol{31}  (2018),  4006--4031.

\bibitem{Jov2019}  \by{Jovanovi\'c, B.}  Note on a ball rolling over a sphere: integrable Chaplygin system with an invariant measure 
without Chaplygin Hamiltonization. 
\emph{Theoretical and Applied Mechanics}, {\bf 46} (2019) 65--88.


\bibitem{KeSh} \by{Kentaro, Y. and I.~Shigeru} \emph{Tangent and cotangent bundles: differential geometry.} Pure and Applied Mathematics, No. 16. Marcel Dekker, Inc., New York, 1973.

\bibitem{Koi} \by{Koiller, J.} 
 Reduction of some classical nonholonomic systems with symmetry.
\emph{Arch. Ration. Mech. Anal.} \vol{118} (1992), 113--148.

\bibitem{KoRiEh} \by{Koiller, J., Rios P.P.M. and  K.M.~Ehlers} Moving frames for cotangent bundles. \emph{Rep. Math. Phys.} \vol{49} (2002), 225--238.

\bibitem{KoillerRubber} \by{Koiller J. and K. Ehlers} Rubber rolling over a sphere. \emph{Regul.\ Chaot.\ Dyn.} \vol{12} (2006), 127--152.



\bibitem{LeRo} \by{de Le\'on, M. and P.~R.~Rodrigues} \emph{Methods of differential geometry in analytical mechanics.} North-Holland Mathematics Studies, 158. North-Holland Publishing Co., Amsterdam, 1989.

\bibitem{LeMaMa} \by{de Le\'on, M., Marrero, J. C. and D. Mart{\'\i}n de Diego} Linear almost Poisson structures and Hamilton-Jacobi equation. Applications to nonholonomic mechanics. \emph{J. Geom. Mech.} \vol{2} (2010), 159--198.

\bibitem{LeMa} \by{de Le\'on, M. and D. Mart{\'\i}n de Diego} On the geometry of nonholonomic Lagrangian systems. \emph{J. Math. Phys.} \vol{
37} (1996), 3389--3414

\bibitem{LiMa} \by{Libermann, P. and  C.~M.~Marle, } \emph{Symplectic geometry and analytical mechanics.} Translated from the French by Bertram Eugene Schwarzbach. Mathematics and its Applications, 35. D. Reidel Publishing Co., Dordrecht, 1987.

\bibitem{Marle} \by{Marle, Ch.M.} Various approaches
to conservative and nonconservative nonholonomic systems. Proc. Workshop on 
Non-Holonomic Constraints in Dynamics (Calgary, August 26-29, 1997) \emph{Rep. Math.
Phys.} \vol{42} (1998) 211--229.


\bibitem{MaRa} \by{Marsden J.E. and T.S.~Ratiu} \title{Introduction to Mechanics with Symmetry} 
\info{Texts in Applied Mathematics  \vol{17} Springer-Verlag 1994}

\bibitem{NeiFu} \by{Neimark Y. and N.~A.~Fufaev} {\em Dynamics of Nonholonomic Systems.} Amer. Math. Soc. Translations 33, 1972

\bibitem{On} \by{O'Neill, Barrett} \title{Semi-Riemannian geometry. With applications to relativity} \info{Pure and Applied Mathematics, \vol{103}. Academic Press, Inc. [Harcourt Brace Jovanovich, Publishers], New York, 1983}


\bibitem{Stanchenko} \by{Stanchenko, S.} Nonholonomic Chaplygin systems. \emph{Prikl. Mat. Mekh.} \vol{53},16--23;
English trans.: \emph{J. Appl. Math. Mech.} \vol{53} (1989),  11--17.



\bibitem{vdSM} \by{van der Schaft, A.J., and  B.M.~Maschke} On the Hamiltonian formulation of nonholonomic mechanical systems.
\emph{Reports on Math.\ Phys.} \vol{34} (1994),  225--233.

\bibitem{VF} \by{Vershik,~A.~M. and L.~D.~Fadeev } Lagrangian mechanics in invariant
form, {\em Selecta Math. Sov.} {\bf 1}, 339--350, 1981.


\bibitem{Veselov-Veselova-1988} 
\by{Veselov, A.P.\ and L.E. Veselova} 
Integrable Nonholonomic Systems on Lie Groups. {\em Mat. Notes} \vol{44} (5-6)
(1988), 810-819. [Russian original in \textit{Mat. Zametki} \vol{44} (1988), no. 5, 604--619.]

\bibitem{Veselova} \by{Veselova, L.}
\textit{New cases of integrability of the equations of motion of a rigid body in the presence of a nonholonomic constraint.} 
Geometry, Differential Equations, and
Mechanics (in Russian), Moscow State Univ. (1986), pp. 64-68.



\bibitem{ZeBo} \by{Zenkov~D.V.\ and  A.~M.~Bloch} Invariant measures of nonholonomic
flows with internal degrees of freedom {\em Nonlinearity} \vol{16} (2003), 1793--1807.




\end{thebibliography}
\end{document}